\documentclass[11pt]{article}

\usepackage{amsmath,amssymb,amsthm}
\usepackage{graphicx}
\usepackage{booktabs}
\usepackage{natbib}
\usepackage{hyperref}
\usepackage{geometry}
\usepackage{enumerate}

\newtheorem{theorem}{Theorem}[section]

\newtheorem{proposition}[theorem]{Proposition}
\newtheorem{lemma}[theorem]{Lemma}
\newtheorem{corollary}[theorem]{Corollary}

\newcommand{\blind}{0}

\newcommand{\tr}{\mathrm{tr}}

\newcommand{\Var}{\mathbb{Var}}

\hypersetup{
  colorlinks=true,
  linkcolor=blue,
  citecolor=blue,
  urlcolor=blue,
}

\begin{document}

\if0\blind
  \title{Density-valued VAR Models with Latent Factors}
  \author{
    Yasumasa Matsuda\thanks{\texttt{yasumasa.matsuda.a4@tohoku.ac.jp}}\\
    Graduate School of Economics and Management\\ 
    Tohoku University, Japan
    \and
    Michel F. C. Haddad\thanks{\texttt{m.haddad@qmul.ac.uk}}\\
    Department of Business Analytics and Applied Economics\\ 
    Queen Mary University of London, UK
  }
\else
  \title{Density-valued VAR Models with Latent Factors}
  \author{}
\fi

\date{}
\maketitle

\begin{abstract}
We propose a density-valued vector autoregressive model with latent factors for multivariate time series of density functions. Motivated by weekly regional distributions of SARS-CoV-2 cycle threshold (Ct) values in Brazil, we study their distributional dynamics across regions. The Ct value is the number of amplification cycles required for the viral signal to cross a detection threshold (lower Ct values correspond to higher viral load). We estimate each regional density by a B-spline mixture, mapping the mixture weights to a Euclidean space by a generalized logit transform equipped with an isometric inner product, and model the transformed series by a cross-regional VAR with latent factors. This decomposition allows for the separation between strong common movements and directed idiosyncratic dynamics. Directed edges are identified from the idiosyncratic VAR component using one-sided tests with Benjamini--Yekutieli false discovery rate control. Simulations show that increasing the number of estimated factors does not mechanically eliminate genuine idiosyncratic dependence; rather, it mainly removes spuriously detected edges driven by common factor movements. In the real-world data application, the full sample yields only a weak directed network, whereas a substantial network emerges once the first six months are excluded and the density prior is kept weak. The estimated links suggest directed predictive relations from the northern region toward southeastern metropolitan areas.
\end{abstract}

\noindent\textit{Keywords:} B-spline mixtures; Density-valued time series; Generalized logit transform; Granger causality; Viral load surveillance.

\section{Introduction}
\label{sec:intro}
The cycle threshold (Ct) value from a polymerase chain reaction (PCR) test is the number of amplification cycles required for the viral signal to cross a detection threshold. It is widely used as a proxy for viral load, with lower Ct values corresponding to higher viral load \citep{jones2021estimating,lin2022incorporating}.
In epidemic surveillance, however, the object of interest is often not a single summary such as the weekly mean Ct value, but its entire distribution within a region over time. Changes in this distribution may reflect shifts in the composition of tested individuals, the timing of testing, and the intensity of local transmission. When such distributional information is observed for many regions over time, a crucial empirical question is whether one can detect directional lead--lag dependence in the evolution of regional distributions.

Ct values have increasingly been used in epidemic surveillance as population-level indicators that may contain information beyond the diagnosis of individual cases. In particular, aggregate Ct measures have been studied as early signals of variant emergence and changing transmission intensity. For example, \citet{harrison2023ctvariants} reported that shifts in population-level Ct values preceded subsequent increases in case counts during the emergence of major SARS-CoV-2 variants, suggesting that Ct-based summaries may complement conventional surveillance indicators. The epidemiological relevance of Ct values is also supported by their association with transmission dynamics and disease severity. \citet{baggio2020sars} documented comparable SARS-CoV-2 viral loads in children and adults during early infection, while \citet{liu2020viral} found that severe COVID-19 cases tended to exhibit lower Ct values and longer viral shedding; see also \citet{rao2020ctreview} for a broader review of the relation between Ct values and clinical outcomes. At the same time, most existing work relies on scalar summaries of Ct values, such as means or other aggregate indicators. When the aim is to study how viral load evolves and propagates across regions, such descriptive statistics may be too restrictive. This is because they discard potentially important information about dispersion, skewness, and tail behavior in the regional Ct-value distributions.

A further relevant challenge arises in the statistical analysis of multivariate time series of density functions. A density function is not an ordinary vector-valued observation. Instead, it must satisfy nonnegativity and unit-integral constraints, and therefore does not belong to a linear space under the usual operations. In regional epidemic surveillance one often expects at least two distinct sources of dependence. One is pervasive nationwide comovement driven by common shocks, such as aggregate epidemic waves or nationwide policy changes. The other is the directed lagged dependence across regions, reflecting heterogeneous local propagation. A useful statistical framework must therefore handle both the nonlinear structure of density-valued observations and the separation of common and idiosyncratic dynamic dependence.

Methods for density-valued data \citep{PetersenMuller2016}, factor models \citep{Bai2003,Bai2009}, and Granger-causality networks \citep{Granger1969} have largely developed in parallel. Consequently, there is still no unified methodology for multivariate time series of density functions that separates strong common movements from directed idiosyncratic dependence across regions. This separation is essential in practice, because without controlling for pervasive common factors, apparent cross-regional predictability may simply reflect nationwide shocks rather than genuine directional dependence. Related work is reviewed in Section~\ref{sec:lit}.

In this paper, we propose a density-valued vector autoregressive (VAR) model with latent factors for multivariate time series of density functions observed across regions, motivated by regional Ct-value distributions. For each week and region, we first estimate the Ct-value density by a B-spline mixture and represent it by its mixture-weight vector. We then map these weights from the simplex to a Euclidean coordinate space using a generalized logit transform, equipping that space with an inner product that is isometric to the corresponding spline-mixture function space under the \(L^2\) geometry.
This representation enables us to model the transformed series by a cross-regional VAR with latent factors.

In this specification, the factor part captures common nationwide movements, while the VAR part describes directed lagged dependence in the idiosyncratic distributional dynamics. Based on the estimated VAR coefficients, we construct a directed network using one-sided \(t\)-tests with Benjamini--Yekutieli false discovery rate control \citep{BenjaminiYekutieli2001}, so that selected edges represent Granger-predictive relations after partialling out common factors. An additional advantage of the proposed decomposition is that stability is required only for the idiosyncratic VAR component, not for the observed series as a whole. 
The common factor component may exhibit trend-like nonstationary movements. Thus, the framework can still be applied when the observed multivariate density-valued series exhibits pervasive nonstationary comovement.

The present paper also provides a formal basis for testing directed lagged dependence after removing latent common factors. We develop the asymptotic theory for the factor-adjusted VAR estimator in the transformed Euclidean representation. Under a joint asymptotic regime in which both the basis dimension and the time dimension diverge while the number of regions is fixed, we establish its rate of convergence and, under stronger conditions, derive the asymptotic normality for coefficient-wise inference.

Our contribution is threefold. First, we develop a tractable statistical framework for multivariate time series of density functions that combines B-spline density estimation, a generalized logit transformation of simplex-valued weights, an isometric inner-product structure, latent common factors capturing nationwide movements, and cross-regional VAR dependence. Second, we provide an inferential procedure for identifying directed edges under multiplicity control, thereby separating common-factor comovement from idiosyncratic predictive links. 
Third, our simulations and real-world data analysis clarify how detected edges should be interpreted when the factor dimension, the sample period, and the density prior are varied. In particular, the simulation study shows that genuine idiosyncratic dependence remains detectable after factor adjustment, whereas the empirical application shows that a substantial directed network emerges only after the early pandemic period is excluded and the nationwide prior is retained weak.

\section{Related Work}
\label{sec:lit}

\subsection{Density-valued time series}
\label{sec:lit_density}

A basic difficulty in modeling density-valued data is that the space of density functions is not a linear space in the usual sense. Although densities can be equipped with various metrics, ordinary addition and scalar multiplication do not generally preserve nonnegativity and unit-integral constraints. For this reason, statistical methodology for density-valued data has developed along two main lines: object-oriented approaches and transformation-based approaches.

Object-oriented approaches treat densities directly as structured random objects endowed with an intrinsic geometry or metric, rather than embedding them first into a finite-dimensional Euclidean space. Typical examples are frameworks based on the Wasserstein metric or the Fisher--Rao metric, which view the space of densities as a nonlinear geometric object and formulate statistical analysis through the corresponding tangent spaces or transport maps \citep{PanaretosZemel2020,SrivastavaKlassenJoshiJermyn2011,PetersenZhangKokoszka2022}. In the time-series setting, \citet{ZhangKokoszkaPetersen2022} proposed Wasserstein autoregressive models, providing a rigorous way to describe temporal dependence of density-valued observations directly in Wasserstein space. These approaches are attractive when the density itself, together with its intrinsic geometry, is the primary object of inference.

Transformation-based approaches instead map densities into a linear space, and then apply conventional statistical tools in that representation space. Representative examples include the transformation-based functional data analysis of densities in \citet{PetersenMuller2016}, Bayes-space and simplicial representations for density functions \citep{vanDenBoogaartEgozcuePawlowskyGlahn2010,HronMenafoglioTemplHruzovaFilzmoser2016}, the panel-of-densities analysis in \citet{kutta2025detection}, and the generalized-logit spline-mixture approach in \citet{MatsudaIwafuchi2025}. A common advantage of these methods is that they provide explicit coordinates in which linear operations, regression structures, and standard inferential tools can be developed while retaining essential distributional information.

The present paper belongs to this transformation-based line of research. This is not solely driven by  computational convenience. Our main objective is to estimate cross-regional VAR dependence, remove pervasive common factor movements, and conduct coefficient-wise inference for directed edges in multivariate time series of density functions. While one might in principle formulate dynamic dependence directly in an object-oriented framework as well, it is considerably harder to derive the regression-type asymptotic theory needed for VAR coefficient estimation and edge selection. Conversely, an explicit Euclidean coordinate representation makes the model amenable to both factor adjustment and asymptotic analysis.

At the same time, existing work in either stream of research has paid limited attention to settings where density-valued observations are observed across multiple units and where pervasive common movements must be separated from directed lagged dependence in the idiosyncratic component. This is the setting considered in the present paper.

\subsection{Latent factor components}
\label{sec:lit_factor}

Factor models provide a standard framework for separating common movements from unit-specific variation in high-dimensional data. In macroeconomics and finance, dynamic factor models have been widely used to summarize comovements across many time series by a small number of latent factors \citep{StockWatson2002,ForniHallinLippiReichlin2000,ForniHallinLippiReichlin2005}. The statistical foundations of factor estimation and factor-number selection are developed in, among others, \citet{BaiNg2002} and \citet{Bai2003}. In panel settings, \citet{Bai2009} extended this logic through interactive fixed-effects models with heterogeneous unit-level loadings.

Several types of factor models could in principle be considered for the present problem, including spectral factor formulations as in \citet{ForniHallinLippiReichlin2000,ForniHallinLippiReichlin2005} and dynamic factor formulations as in \citet{StockWatson2002}. In the present paper, however, our main inferential target is not the common component itself, but the coefficient matrix of the factor-adjusted idiosyncratic VAR component. We adopt a Bai-type formulation for two reasons. First, our empirical setting may exhibit pervasive common movements that need not be stationary in the usual time-series sense, and the Bai-type framework allows such common variation to be treated under relatively weak second-moment conditions. Second, our main inferential target is the coefficient matrix of the factor-adjusted idiosyncratic VAR component, and the Bai-type framework leads naturally to regression-style asymptotic arguments for that object.

At the same time, our setting differs fundamentally from the standard interactive fixed-effects regression framework. In \citet{Bai2009}, the primary object is a regression model with regressors that are treated as exogenous or conditionally mean independent, and the latent factor component enters as an unobserved common effect. In our framework, by contrast, the post-adjustment component of interest is itself dynamic and is modeled through a cross-regional VAR. Thus, the factor structure is not introduced merely to absorb nuisance heterogeneity. Rather, it is introduced in order to separate pervasive common movements from the idiosyncratic dynamic component whose lagged dependence is the main object of inference.

The present paper builds on this literature by applying a Bai-type factor framework to transformed density-valued time series. Once density-valued observations are mapped into an isometric Euclidean coordinate system, the resulting model can be written in a factor-augmented representation in which the common component captures pervasive nationwide movements, while the remaining component is modeled by a cross-regional VAR. This makes the Bai-type algebraic framework useful, but the interpretation is importantly different from that in conventional scalar panel applications.

The difference is not only in the type of observed data but also in the role played by dynamics. First, the observed objects here are not scalar outcomes but transformed representations of densities. Second, the regressors in the idiosyncratic equation are generated by lagged dependent variables through a VAR structure. This means that, unlike in standard interactive fixed-effects regressions, the regressor structure is intrinsically dynamic and does not satisfy the same exogeneity interpretation. For this reason, the present model is not a direct application of \citet{Bai2009}. Rather, it is a factor-augmented dynamic regression setting in which the Bai-type decomposition remains useful, but the subsequent asymptotic arguments must be adapted to account for the VAR dependence.

This distinction is important for our substantive objective. In much of the factor-model literature, the common component is itself the main object of interpretation, and the idiosyncratic term is treated largely as residual variation. In our framework, the logic is reversed. The common component is removed because it reflects pervasive nationwide movements that may obscure regional transmission patterns, while the idiosyncratic component is retained and modeled explicitly. The main object of interest is therefore the directed lagged dependence embedded in the idiosyncratic VAR coefficient matrix.

A further advantage of this formulation is that it aligns with our asymptotic regime. In our problem, the number of regions is fixed, while the spline basis dimension and the time length diverge. After the isometric transformation, this produces a high-dimensional representation that differs from the conventional large-panel setting of the factor literature. The resulting asymptotic theory is therefore tailored to transformed density-valued time series with fixed regional dimension and growing basis dimension, rather than to standard scalar panels with many cross-sectional units. In this sense, our approach is Bai-type in algebraic structure, but it is designed for a substantively and theoretically different dynamic environment.

\subsection{Network dependence and Granger-causality analysis}
\label{sec:lit_network}

A large literature studies directed dependence in multivariate time series through vector autoregressive models and Granger-causality analysis. In this framework, directed edges are typically defined by whether past values of one unit may predict current values of another unit, conditional on the remaining variables in the system \citep{Granger1969,Lutkepohl2005}. Related developments include graphical approaches to multivariate time-series dependence \citep{Eichler2007} and network measures of connectedness and spillovers in large systems \citep{BillioGetmanskyLoPelizzon2012,DieboldYilmaz2014}.

Our paper is related to this literature, but differs from standard applications in two essential respects. First, the objects entering the VAR are not scalar observations such as levels, growth rates, or means, but transformed representations of density functions. The resulting directed edges therefore describe predictive relations in the evolution of regional density functions, rather than in a conventional scalar panel. Second, the network is constructed only after removing latent common factors. This distinction is crucial in our setting, because strong nationwide epidemic waves may induce pervasive comovement across regions, and such comovement should not be confused with genuine directed lagged dependence.
These two features jointly determine the interpretation of the network in our setting.

Accordingly, our procedure measures Granger-causality through the coefficient matrix of the idiosyncratic VAR component in a factor-adjusted density-valued VAR model. Existing Granger-causality analyses based on ordinary VAR models are not designed to assess lead--lag relations between density functions themselves, nor do they directly separate such relations from common factor-driven distributional comovement. In this sense, the proposed network is not simply a distributional analogue of a standard scalar VAR network, but a device for identifying predictive dependence in density dynamics after common movements have been partialled out.

From an inferential perspective, our procedure is based on coefficient-wise \(t\)-tests for VAR edges under multiple testing. This requires asymptotic justification for the VAR coefficient estimators after factor adjustment. Building on the Bai-type framework discussed above, we develop such inference in the transformed density-valued setting, where the generalized logit transformation, together with the induced inner-product structure, provides an isometric Euclidean representation of density-valued observations. To control multiplicity in edge selection, we use the false discovery rate procedure of \citet{BenjaminiYekutieli2001}.

Taken together, these three strands of literature motivate the present framework. Transformation-based methods provide a workable Euclidean representation for density-valued observations, factor models provide a way to remove pervasive common movements, and Granger-causality analysis provides a language for directed lagged dependence. The contribution of this paper is to combine them into a single framework suited to multivariate time series of density functions.
\section{Method}
\label{sec:met}

The methodology introduced in this section applies more generally to multivariate time series of density functions observed across units, despite the fact that the empirical application in this paper focuses on regional Ct-value distributions. Let \(C\) denote the number of units, and suppose that at each time \(t\) and unit \(c=1,\ldots,C\), we observe a density supported on a compact interval \([a,b]\). This section first explains how such density-valued time series are constructed in practice, and then proposes a new density-valued VAR model with latent factors by extending the interactive fixed-effects framework of \citet{Bai2009} to density-valued time series. Moreover, we develop estimation and asymptotic theory for the resulting estimators.

\subsection{Density Estimation by B-spline Mixtures}
\label{sec:bspline}

In practice, density-valued time series are observed through samples drawn from an underlying density. We begin by describing a general procedure for estimating a density at each time point and for each unit. Suppose that we observe independent samples \(x_1,\dots,x_n\) from an unknown density \(f\) supported on a compact interval \([a,b]\). We approximate \(f\) by a finite mixture of normalized B-spline basis functions. Let \(\{B_j(\cdot)\}_{j=1}^{J+1}\) be B-splines of degree \(k\) defined on a knot sequence covering \([a,b]\). Normalizing each basis yields the density components
\[
  \phi_j(x)
  =
  \frac{B_j(x)}{\displaystyle\int_a^b B_j(u)\,du},
  \qquad j=1,\dots,J+1.
\]

The unknown density is approximated by
\begin{equation}
  f(x;w)
  =
  \sum_{j=1}^{J+1} w_j\,\phi_j(x),
  \qquad
  w\in\mathcal{S}_{J+1}
  :=
  \left\{ w_j>0,\ \sum_{j=1}^{J+1} w_j = 1 \right\},
  \label{eq:mixture}
\end{equation}
where \(w=(w_1,\dots,w_{J+1})'\) is the mixture-weight vector. The log-likelihood is
\[
  \ell(w)
  =
  \sum_{i=1}^n \log\!\left( \sum_{j=1}^{J+1} w_j\,\phi_j(x_i) \right),
\]
which we maximize using the EM algorithm \citep{DempsterLairdRubin1977,McLachlanPeel2000}. 
In the E-step, we compute the posterior responsibilities
\[
  \tau_{i,j}
  =
  \frac{ w_j\,\phi_j(x_i) }
       { \sum_{\ell=1}^{J+1} w_\ell\,\phi_\ell(x_i) },
  \qquad i=1,\dots,n,\ \ j=1,\dots,J+1,
\]
which represent the conditional probabilities that observation \(x_i\) is associated with the \(j\)th spline component. In the M-step, we update the mixture weights by
\[
  w_j
  =
  \frac{1}{n} \sum_{i=1}^n \tau_{i,j},
  \qquad j=1,\ldots,J+1.
\]
Starting from an initial value for \(w\), we iterate the E-step and M-step until convergence.

To stabilize estimation in small samples, we also consider a Dirichlet-regularized version of the procedure. Specifically, suppose that
\[
  w \sim \mathrm{Dir}(\alpha_1^{(0)},\dots,\alpha_{J+1}^{(0)}).
\]
Then the posterior parameters are
\begin{align}
  \alpha_j^{\mathrm{post}}
  =
  \gamma \alpha_j^{(0)} + \sum_{i=1}^n \tau_{i,j},
  \qquad j=1,\ldots,J+1,
  \label{eq:posterior}
\end{align}
where \(\gamma\) controls the prior strength. The corresponding posterior mean is
\begin{align}
  w_j
  =
  \frac{\alpha_j^{\mathrm{post}}}
       {\sum_{\ell=1}^{J+1} \alpha_\ell^{\mathrm{post}}},
  \qquad j=1,\ldots,J+1.
  \label{Mstep}
\end{align}
The E-step remains unchanged, while the M-step is replaced by the posterior-mean update in \eqref{eq:posterior} and \eqref{Mstep}. Starting from an initial value for \(w\), we iterate the modified E-step and M-step until convergence.

Applying this procedure separately to each time \(t\) and unit \(c\), we obtain a sequence of weight vectors \(w_{t,c}\in\mathcal{S}_{J+1}\). These vectors serve as the observed density-valued time series used in the subsequent analysis. In the Ct-value application, the units correspond to regions.

\subsection{Generalized logit and softmax transforms}
\label{sec:logit}

Let \(w_{t,c}\in\mathcal{S}_{J+1}\) denote the estimated B-spline mixture weights for unit \(c=1,\ldots,C\) at time \(t\). A basic difficulty is that these weight vectors lie in the simplex \(\mathcal{S}_{J+1}\), so they satisfy positivity and unit-sum constraints and therefore do not form a linear space under the usual addition and scalar multiplication. For this reason, the weight vectors cannot be analyzed directly by standard linear time-series methods.

To obtain a linear representation, we map the simplex-valued weights into a finite-dimensional Euclidean space by the generalized logit transform of \citet{MatsudaIwafuchi2025}. This transformation provides an explicit coordinate system for the B-spline mixture weights while retaining a one-to-one correspondence with the underlying spline-mixture representation. We then equip the transformed space with an inner product chosen so that it is isometric to the associated spline-mixture function space under the \(L^2[a,b]\) inner product. This construction allows us to analyze density-valued time series by linear methods without losing the geometric meaning of distances in the original spline-mixture representation.

For \(p=(p_1,\ldots,p_{J+1})'\in\mathcal{S}_{J+1}\) with base component \(p_{J+1}\), define, for \(\delta>0\),
\[
\operatorname{logit}_\delta(p)
=
\left(
\log\frac{\delta+p_1}{\delta+p_{J+1}},
\ldots,
\log\frac{\delta+p_J}{\delta+p_{J+1}}
\right)'
\in \mathbb{R}^{J}.
\]
When \(\delta=0\), \(\operatorname{logit}_\delta\) reduces to the usual multinomial logit transform from \(\mathcal{S}_{J+1}\) to \(\mathbb{R}^J\). For \(\delta>0\), the shift by \(\delta\) improves numerical stability when some components of \(p\) are close to zero. It is convenient to extend the domain to
\[
\mathcal{S}^\delta_{J+1}
=
\left\{
p\in\mathbb{R}^{J+1}:\;
p_j>-\delta,\;
\sum_{j=1}^{J+1}p_j=1
\right\},
\]
which strictly contains the probability simplex \(\mathcal{S}_{J+1}\).

The inverse map of \(\operatorname{logit}_\delta\) is given by, 
for $b\in\mathbb R^J$,
\[
\operatorname{softmax}_\delta(b)
=
\left(
\frac{((J+1)\delta+1)e^{b_1}}{1+\sum_{j=1}^J e^{b_j}}-\delta,
\ldots,
\frac{((J+1)\delta+1)e^{b_J}}{1+\sum_{j=1}^J e^{b_j}}-\delta,
\frac{(J+1)\delta+1}{1+\sum_{j=1}^J e^{b_j}}-\delta
\right)'
\in\mathcal S_{J+1}^\delta.
\]
When \(\delta=0\), \(\operatorname{softmax}_\delta\) reduces to the standard softmax map to \(\mathcal{S}_{J+1}\). It is straightforward to verify that \(\operatorname{softmax}_\delta(b)\in\mathcal S_{J+1}^\delta\), since each component is strictly larger than \(-\delta\) and
\[
\sum_{m=1}^{J+1}\bigl(\operatorname{softmax}_\delta(b)\bigr)_m
=
\frac{((J+1)\delta+1)\left(1+\sum_{j=1}^J e^{b_j}\right)}{1+\sum_{j=1}^J e^{b_j}}-(J+1)\delta
=
1.
\]
Moreover,
\[
\operatorname{softmax}_\delta(\operatorname{logit}_\delta(p))=p
\quad\text{for } p\in\mathcal S_{J+1}^\delta,
\qquad
\operatorname{logit}_\delta(\operatorname{softmax}_\delta(b))=b
\quad\text{for } b\in\mathbb R^J.
\]
Hence, \(\operatorname{logit}_\delta\) and \(\operatorname{softmax}_\delta\) define a bijection between \(\mathcal S_{J+1}^\delta\) and \(\mathbb R^J\).

Using this bijection, we transport the usual addition and scalar multiplication in \(\mathbb R^J\) to \(\mathcal S_{J+1}^\delta\). Specifically, for \(w_1,w_2\in\mathcal S_{J+1}^\delta\) and \(\alpha\in \mathbb R\), define
\begin{align*}
  w_1\oplus w_2
  &:=
  \operatorname{softmax}_\delta\!\bigl(\operatorname{logit}_\delta(w_1)+\operatorname{logit}_\delta(w_2)\bigr),\\
  \alpha\otimes w_1
  &:=
  \operatorname{softmax}_\delta\!\bigl(\alpha\,\operatorname{logit}_\delta(w_1)\bigr).
\end{align*}
Thus, via the generalized logit transform, we may identify \(\mathcal S_{J+1}^\delta\) with \(\mathbb R^J\) and equip it with a linear structure.

Next we introduce an inner product on the transformed space so that the coordinate representation is isometric to its image in the associated spline-mixture function space under the \(L^2[a,b]\) inner product.
Let \(e_i\in\mathbb{R}^J\), \(i=1,\ldots,J\), be the canonical unit vector, and define
\[
  \hat e_i=\operatorname{softmax}_\delta(e_i)\in\mathcal S_{J+1}^\delta.
\]
We identify any weight vector \(p=(p_1,\ldots,p_{J+1})'\in\mathcal S_{J+1}^\delta\) with the associated spline-mixture function
\[
  \Psi(p)(s)
  :=
  \sum_{m=1}^{J+1} p_m\,\phi_m(s),
  \qquad s\in[a,b].
\]
For brevity, write \(\hat e_i(s):=\Psi(\hat e_i)(s)\). Define the Gram matrix
\[
  (H_J)_{ij}
  =
  \int_a^b \hat e_i(s)\,\hat e_j(s)\,ds,
  \qquad i,j=1,\ldots,J.
\]
This induces the inner product
\[
  \langle u,v\rangle_H
  =
  u'H_Jv,
  \qquad u,v\in\mathbb{R}^J.
\]
By construction, this is the pullback of the \(L^2[a,b]\) inner product of the associated spline-mixture functions through the map \(b\mapsto \Psi(\operatorname{softmax}_\delta(b))\). Hence the transformed coordinate space is isometric to its image in the spline-mixture function space.

The introduction of \(\delta>0\) improves numerical stability when some mixture weights are close to, or equal to, zero. At the same time, because \(\mathcal S_{J+1}^\delta\) strictly contains the probability simplex, the associated spline-mixture functions may take negative values for some choices of transformed coordinates. This is a potentially controversial feature. In our framework, however, \(\mathcal S_{J+1}\) remains the empirical starting point, and the enlargement to \(\mathcal S_{J+1}^\delta\) is used as a local device to obtain a stable bijection and linear structure for statistical modeling. 
The gain is computational and inferential tractability, while interpretation remains anchored in the original density representation through the inverse transformation.

We represent the density-valued time series for unit \(c\) by the transformed series
\[
  Y_{t,c}
  =
  \operatorname{logit}_\delta(w_{t,c})
  \in \mathbb{R}^{J}.
\]
Stacking across units, we define
\begin{align}
  Y_t
  =
  (Y_{t,1}',\ldots,Y_{t,C}')'
  \in \mathbb{R}^{CJ}.
  \label{eq:data}
\end{align}
For the stacked vectors \(Y_t\in\mathbb{R}^{CJ}\), we introduce the block-diagonal metric
\begin{align}
  \mathcal{H}=I_C\otimes H_J.
  \label{eq:inner_product}
\end{align}
In the Ct-value application, the units correspond to regions.

\subsection{Density-valued VAR models with latent factors}
\label{sec:dfm}

\subsubsection{Model description}

Suppose we observe the transformed density-valued time series \(Y_t\) defined in \eqref{eq:data}. Recall that each \(Y_{t,c}\in\mathbb{R}^J\) is obtained by applying the generalized logit map to the spline-mixture weights of the density at time \(t\) and unit \(c\), and that this coordinate space is equipped with an inner product making it isometric to the corresponding spline-mixture image under the \(L^2\) geometry. A natural starting point is the density-valued VAR model without latent factors,
\begin{align}
Y_{t,c}
=
\sum_{k=1}^p \sum_{d=1}^C V_{k,cd} Y_{t-k,d}
+u_{t,c},
\qquad c=1,\ldots,C,
\label{eq:plainvar}
\end{align}
where \(V_k\in\mathbb{R}^{C\times C}\) captures cross-regional autoregressive dependence and \(u_{t,c}\in\mathbb{R}^J\) is a disturbance term. This specification directly represents lagged dependence across regions in the transformed density-valued series.

However, in the present setting one expects strong contemporaneous comovement across regions, driven for example by nationwide epidemic waves or other aggregate shocks. If such common movements remain in the disturbance term \(u_{t,c}\), then a direct VAR fit may attribute part of this pervasive comovement to cross-regional lagged dependence, making the resulting network difficult to interpret. 
One possible response is to impose a structural VAR decomposition on the contemporaneous dependence \citep{Lutkepohl2005,KilianLutkepohl2017}. In practice, however, such an approach requires additional identifying restrictions, and the resulting analysis may depend on assumptions about contemporaneous ordering or other structural constraints.

To separate common movements from directed idiosyncratic dynamics without imposing such structural restrictions, we augment \eqref{eq:plainvar} with latent factors and consider the following density-valued VAR model with latent factors:
\begin{align}
Y_{t,c}
=
\sum_{k=1}^p \sum_{d=1}^C V_{k,cd} Y_{t-k,d}
+\Lambda_c f_t
+\varepsilon_{t,c},
\qquad c=1,\ldots,C.
\label{eq:factorvar}
\end{align}
Here, \(V_k\in\mathbb{R}^{C\times C}\) captures cross-regional autoregressive dependence, \(\Lambda_c\in\mathbb{R}^{J\times r}\) is the loading matrix for unit \(c\), and \(f_t\in\mathbb{R}^r\) denotes the vector of latent common factors. Thus, the model decomposes the transformed density-valued series into a common factor component and an idiosyncratic component with cross-regional VAR dependence.

Defining
\[
\Lambda
=
(\Lambda_1',\ldots,\Lambda_C')'
\in\mathbb{R}^{CJ\times r},
\]
the model can be written in matrix form as
\begin{align}
Y_t
=
\sum_{k=1}^p X_{t-k}\,\mathrm{Vec}(V_k')
+\Lambda f_t
+\varepsilon_t,
\label{eq:factorvar_mtrx}
\end{align}
where
\[
X_t = I_C \otimes (Y_{t,1},\ldots,Y_{t,C}).
\]

From this matrix representation, \eqref{eq:factorvar} can be viewed as a density-valued dynamic extension of the interactive fixed-effects framework of \citet{Bai2009}. Relative to standard factor models, it allows density-valued observations together with cross-regional VAR dependence in the idiosyncratic component.

\subsubsection{Estimation}

We estimate \(V_k\), \(\Lambda\), and \(f_t\) in \eqref{eq:factorvar} under the normalization
\begin{align*}
    (CJ)^{-1}\Lambda'\mathcal{H}\Lambda&=I_r,\\
    \frac{1}{T-p}\sum_{t=p+1}^T f_tf_t'&=\mathrm{diag}(v_1,\ldots,v_r).
\end{align*}
For notational simplicity, we assume that \(Y_t\) has zero mean. For nonzero-mean processes, we replace \(Y_t\) with
\[
Y_t-\frac{1}{T}\sum_{s=1}^T Y_s.
\]

Our estimation procedure alternates between PCA estimation of the latent factor component and projected least-squares estimation of the VAR component. Given current VAR coefficients, we extract the factor structure from the residual covariance matrix; given the resulting loading space, we partial out the factor component and update the VAR coefficients by least squares. This iterative scheme parallels \citet{Bai2009}, adapted here to the whitened transformed representation of density-valued observations.

Let \(\mathcal{K}=I_C\otimes K_J\), where \(K_JK_J'=H_J\). Define
\[
  \tilde Y_t = \mathcal{K}'Y_t,\qquad
  \tilde\Lambda = \mathcal{K}'\Lambda,\qquad
  \tilde X_t = \mathcal{K}'X_t,\qquad
  \tilde\varepsilon_t = \mathcal{K}'\varepsilon_t.
\]
Then the model becomes
\[
\tilde Y_t
=
\sum_{k=1}^p
\tilde X_{t-k}\,\mathrm{Vec}(V_k')
+
\tilde\Lambda f_t
+
\tilde\varepsilon_t,
\]
which rewrites the model in a coordinate space equipped with the standard inner product.

\paragraph{Step 1: Residualizing the VAR component.}
Given current values of \(V_k\), compute
\[
  \tilde R_t
  =
  \tilde Y_t
  -
  \sum_{k=1}^p
  \tilde X_{t-k}\,\mathrm{Vec}(V_k').
\]

\paragraph{Step 2: PCA for the factor component.}
Using these residuals, update the loading space by PCA. Form the covariance matrix
\[
  \tilde S
  =
  \frac{1}{T-p}
  \sum_{t=p+1}^T
  \tilde R_t\tilde R_t'.
\]
Let \(\hat{\tilde\Lambda}\in\mathbb{R}^{CJ\times r}\) collect the top \(r\) eigenvectors of \(\tilde S\), normalized such that
\[
(CJ)^{-1}\hat{\tilde\Lambda}'\hat{\tilde\Lambda}=I_r.
\]

\paragraph{Step 3: FWL estimation of \(V_k\).}
Define the projection matrix
\[
  M_{\hat{\tilde\Lambda}}
  =
  I_{CJ}-(CJ)^{-1}\hat{\tilde\Lambda}\hat{\tilde\Lambda}'.
\]
The projected model is
\[
M_{\hat{\tilde\Lambda}}\tilde Y_t
=
\sum_{k=1}^p
M_{\hat{\tilde\Lambda}}\tilde X_{t-k}\,\mathrm{Vec}(V_k')
+
M_{\hat{\tilde\Lambda}}\tilde\varepsilon_t,
\qquad t=p+1,\ldots,T.
\]
Let \(\tilde W_t=(\tilde X_{t-1},\ldots,\tilde X_{t-p})\). Then
\[
\begin{pmatrix}
\mathrm{vec}(\hat V_1')\\
\vdots\\
\mathrm{vec}(\hat V_p')
\end{pmatrix}
=
\left(
\sum_{t=p+1}^T
\tilde W_t' M_{\hat{\tilde\Lambda}} \tilde W_t
\right)^{-1}
\left(
\sum_{t=p+1}^T
\tilde W_t' M_{\hat{\tilde\Lambda}} \tilde Y_t
\right).
\]

\paragraph{Step 4: Iteration and factor recovery.}
Repeat Steps~1--3 until convergence of \(V_k\), \(k=1,\ldots,p\). After convergence, recover the factors by projecting the residual onto the estimated loading space, and map the loadings back to the original transformed coordinates:
\begin{align*}
  \hat f_t
  &=
  (CJ)^{-1}\hat{\tilde\Lambda}'
  \left(
  \tilde Y_t
  -
  \sum_{k=1}^p
  \tilde X_{t-k}\,\mathrm{Vec}(\hat V_k')
  \right),\\
  \hat\Lambda &= (\mathcal{K}')^{-1}\hat{\tilde\Lambda}.
\end{align*}

\subsection{Asymptotic Theory}
\label{sec:asympt}

We now study the large-sample properties of the estimator for the factor-adjusted VAR component.
Starting from the model in \eqref{eq:factorvar_mtrx}, we work with its whitened representation,
because the transformed coordinate space is equipped with the block-diagonal inner product
matrix \(\mathcal H=I_C\otimes H_J\), rather than the standard Euclidean one. Let \(\mathcal K=I_C\otimes K_J\),
where \(K_JK_J'=H_J\), and define
\[
\tilde Y_t = \mathcal K'Y_t,\qquad
\tilde\Lambda = \mathcal K'\Lambda,\qquad
\tilde X_t = \mathcal K'X_t,\qquad
\tilde\varepsilon_t = \mathcal K'\varepsilon_t.
\]
Then the model can be rewritten as
\begin{align}
\tilde Y_t
=
\tilde W_t\beta
+\tilde\Lambda f_t
+\tilde \varepsilon_t,
\label{eq:whitened_factorvar}
\end{align}
where \(\tilde W_t =(\tilde X_{t-1},\ldots,\tilde X_{t-p})\), and
\[
\beta=(\mathrm{Vec}'(V_1'),\ldots,\mathrm{Vec}'(V_p'))'.
\]
We derive a rate of convergence for the estimator of \(\beta\) and, under stronger conditions,
its asymptotic normality under the regime
\[
J,T \to \infty
\quad\text{with}\quad
C \text{ fixed}.
\]

Unlike \citet{Bai2009}, our regressors are generated by lagged dependent variables through the VAR structure, so the model does not satisfy the same exogeneity interpretation as a standard interactive fixed-effects regression. In particular, although the algebraic structure of the proof follows the Bai-type template, the dynamic nature of the regressors implies that certain weighted score terms involve internally generated regressors rather than externally given covariates. For this reason, the asymptotic theory requires additional conditions that are specific to the present factor-adjusted VAR setting.

A further difference from \citet{Bai2009} is that the asymptotic roles of \(N\) and \(T_0\) are reversed in our setting, because the PCA step is carried out along the time direction rather than the cross-sectional direction. As a result, the ratio conditions in the asymptotic theory are stated in terms of \(N/T_0\), rather than \(T_0/N\). Throughout this subsection, a superscript \(0\) denotes the true value of a parameter.

An important feature of the proposed framework is that stationarity is required only for the idiosyncratic VAR component, not for the observed series as a whole. The common factor component may exhibit nonstationary behavior, including trend-like movements, provided that the empirical second-moment condition in Assumption~B(i) holds. Thus, the model can accommodate multivariate density-valued time series with pervasive nonstationary common movements, while still allowing inference on directed dependence in the stable idiosyncratic component.

Because PCA is carried out along the time direction in our setting, the resulting projection and expansion arguments generate weighted score terms indexed by time averages of lagged regressors. In the Bai setting, the corresponding terms can be handled by exogeneity of the regressors, but that argument is no longer available here because the regressors are internally generated by the VAR dynamics. For this reason, Assumption~C(vi) is introduced to control such weighted score terms, while Assumption~E provides the additional limit and central limit conditions needed for asymptotic normality. These conditions are stronger and more model-specific than those used in the standard Bai setting, reflecting the fact that the present framework combines latent common factors with an idiosyncratic VAR structure rather than an exogenous regression design.

\paragraph{Regularity conditions.}

\paragraph{Assumption A (Regression design and identification).}
\begin{enumerate}[(i)]
\item For the true \(C\times C\) VAR coefficient matrices \(V_i^0\), \(i=1,\ldots,p\), there exists a constant \(\kappa\in(0,1)\), independent of \(J\) and \(T\), such that
\[
\sum_{i=1}^p \|V_i^0\| \le \kappa,
\]
where \(\|\cdot\|\) denotes the operator norm.

\item Let
\[
\mathcal F = \{\Lambda\in\mathbb{R}^{CJ\times r}:(CJ)^{-1}\Lambda'\Lambda=I_r\}.
\]
Then
\[
\inf_{\tilde \Lambda\in\mathcal F}\lambda_{\min}(D(\tilde \Lambda))\ge c>0,
\]
where \(\lambda_{\min}(D)\) denotes the minimum eigenvalue of \(D\), and
\begin{align*}
D(\tilde \Lambda)
&=
\frac{1}{CJ(T-p)}\sum_{t=p+1}^T Z_t(\tilde \Lambda)'Z_t(\tilde \Lambda),\\
Z_t(\tilde \Lambda)
&=
M_{\tilde \Lambda}\tilde W_t
-
\frac{1}{T-p}\sum_{s=p+1}^T
a_{ts}\,M_{\tilde \Lambda}\tilde W_s,\\
a_{ts}
&=
f_t^{0\prime}\Big(\frac{1}{T-p}F_0'F_0\Big)^{-1}f_s^0,
\end{align*}
with
\[
M_{\tilde \Lambda}=I_{CJ}-(CJ)^{-1}\tilde \Lambda\tilde \Lambda',
\qquad
F_0=(f^0_{p+1},\ldots,f^0_{T})'.
\]
\end{enumerate}

\paragraph{Assumption B (Factors and loadings).}
Let \(r\) be fixed. Treat the common factors \(\{f_t^0\}_{t=p+1}^T\) and the loading matrix
\[
\tilde\Lambda_0=(\tilde\lambda_{01},\ldots,\tilde\lambda_{0N})' \in \mathbb R^{N\times r},
\qquad N=CJ,
\]
as non-stochastic arrays.

\begin{enumerate}[(i)]
\item \textbf{Factor second-moment limit.}
\[
\frac{1}{T_0}\sum_{t=p+1}^T f_t^0 f_t^{0\prime} \to \Sigma_f,
\qquad T_0:=T-p,
\]
where \(\Sigma_f\) is positive definite. Moreover,
\[
\sup_{p+1\le t\le T}\|f_t^0\| \le K .
\]

\item \textbf{Loading second-moment limit.}
\[
\frac{1}{N}\tilde\Lambda_0'\tilde\Lambda_0 \to \Sigma_\Lambda,
\]
where \(\Sigma_\Lambda\) is positive definite. Moreover,
\[
\sup_{1\le i\le N}\|\tilde\lambda_{0i}\| \le K .
\]
\end{enumerate}

\noindent
\textit{Remark.}
Assumption B does not require the factor path to be stationary in the usual
time-series sense. It only requires the empirical second-moment limit
\[
\frac{1}{T_0}\sum_{t=p+1}^T f_t^0 f_t^{0\prime} \to \Sigma_f
\]
to exist. In particular, under an infill interpretation, trend-like common components
are allowed as long as this empirical second-moment limit exists.

\paragraph{Assumption C (Moments and dependence).}
Let \(\{\tilde\varepsilon_t\}_{t=p+1}^T\) be the idiosyncratic error sequence in
\[
\tilde Y_t=\tilde W_t\beta_0+\tilde\Lambda_0 f_t^0+\tilde\varepsilon_t.
\]
Assume:

\begin{enumerate}[(i)]
\item \textbf{Moments and scale.}
There exists \(\delta>0\) such that
\[
\sup_t\max_{1\le i\le N}E|\tilde\varepsilon_{it}|^{4+\delta}\le K<\infty.
\]

\item \textbf{No serial correlation.}
The sequence \(\{\tilde\varepsilon_t\}\) is independent across \(t\), and
\[
E(\tilde\varepsilon_t)=0,
\qquad
E(\tilde\varepsilon_t\tilde\varepsilon_t')=\Sigma_{\varepsilon,t}.
\]

\item \textbf{Uniformly bounded cross-sectional covariance.}
\[
\sup_t \|\Sigma_{\varepsilon,t}\|\le K<\infty.
\]

\item \textbf{Weak cross-sectional dependence.}
For each \(t\),
\[
\max_{1\le i\le N}\sum_{j=1}^N \big|\Sigma_{\varepsilon,t}(i,j)\big|\le K<\infty.
\]

\item \textbf{Linear-form moment bound.}
For any nonstochastic unit vectors \(u,v\in\mathbb R^N\),
\[
\sup_t E|u'\tilde\varepsilon_t|^{4+\delta}\le K,
\qquad
\sup_t E\big|(u'\tilde\varepsilon_t)(v'\tilde\varepsilon_t)\big|^2\le K .
\]

\item \textbf{Uniform covariance summability for weighted score terms.}
There exists \(K<\infty\) such that, for any deterministic uniformly bounded array
\(\{c_{ts}\}_{t,s=p+1}^T\),
\[
\frac{1}{NT_0}
\sum_{t,u=p+1}^T
\left|
E\!\left[
\Big(\sum_{s=p+1}^T c_{ts}\tilde W_s'\tilde\varepsilon_t\Big)
\Big(\sum_{v=p+1}^T c_{uv}\tilde W_v'\tilde\varepsilon_u\Big)
\right]
\right|
\le K .
\]
\end{enumerate}

\noindent
\textit{Remark.}
Assumption~C(vi) is imposed only to control the weighted score term
\[
\frac{1}{\sqrt{NT_0}}
\sum_{t=p+1}^T
\Big\{
\frac{1}{T_0}\sum_{s=p+1}^T c_{ts}\tilde W_s'
\Big\}\tilde\varepsilon_t,
\]
which appears in the proof of Theorem~\ref{thm:rate_beta}. The difficulty is that the weighted regressor
\[
b_t:=\frac{1}{T_0}\sum_{s=p+1}^T c_{ts}\tilde W_s
\]
contains both past and future values of \(\tilde W_s\), so it is not measurable with respect to the information set at time \(t-1\). Therefore, the mean-independence condition in Assumption~D is not sufficient by itself to control the second moment of \(b_t'\tilde\varepsilon_t\). Assumption~C(vi) is introduced as a convenient sufficient condition ensuring uniform summability of the covariance of such weighted score terms. It is stronger than what may be minimally necessary, but it avoids the need to impose more elaborate dependence conditions.
This is precisely the point at which the present time-direction PCA argument differs from the standard Bai setting with exogenous regressors, where the corresponding weighted score term can be handled by exogeneity.

\paragraph{Assumption D (Sequential exogeneity).}
For each \(t=p+1,\ldots,T\),
\[
E(\tilde\varepsilon_t \mid \{\tilde Y_s:s<t\})=0 .
\]

\paragraph{Assumption E (Probability limits and a score CLT).}
Assumption~E collects the additional probability-limit and central-limit conditions needed for asymptotic normality once the weighted score terms have been controlled.
For the asymptotic normality result in Theorem~\ref{thm:asy_normal_beta}, assume additionally that
\(N/T_0\to 0\). Let \(N:=CJ\) and \(T_0:=T-p\), and let \(Z_t(\tilde\Lambda)\) be defined in
Assumption~A(ii). Define
\[
H_0:=\frac{1}{N}\tilde\Lambda_0'\tilde\Lambda_0,
\qquad
\bar\Lambda_0:=\tilde\Lambda_0 H_0^{-1/2},
\]
so that \(\bar\Lambda_0\in\mathcal F\) and
\[
P_{\bar\Lambda_0}
=
\frac{1}{N}\bar\Lambda_0\bar\Lambda_0'
=
\tilde\Lambda_0(\tilde\Lambda_0'\tilde\Lambda_0)^{-1}\tilde\Lambda_0'
=
\Pi_0.
\]
Write
\[
Z_t^0 := Z_t(\bar\Lambda_0),
\qquad
\Sigma_{\varepsilon,t}:=E(\tilde\varepsilon_t\tilde\varepsilon_t').
\]
Assume:

\begin{enumerate}[(i)]
\item \textbf{Probability limits of \(D(\bar\Lambda_0)\) and \(\Omega(\bar\Lambda_0)\).}
The following probability limits exist:
\[
D(\bar\Lambda_0)
:=
\frac{1}{NT_0}\sum_{t=p+1}^T Z_t^{0\prime} Z_t^0
\xrightarrow{p} D_0,
\qquad
\Omega(\bar\Lambda_0)
:=
\frac{1}{NT_0}\sum_{t=p+1}^T Z_t^{0\prime}\Sigma_{\varepsilon,t} Z_t^0
\xrightarrow{p} \Omega_0,
\]
where \(D_0\) is positive definite.

\item \textbf{Central limit theorem for the score.}
\[
\frac{1}{\sqrt{NT_0}}\sum_{t=p+1}^T Z_t^{0\prime}\tilde\varepsilon_t
\xrightarrow{d}
\mathcal N(0,\Omega_0).
\]
\end{enumerate}

We first derive a convergence rate under 
$N/T_0\to\rho \geq 0$,
and then establish asymptotic normality under the stronger condition 
$N/T_0 \to 0$.
\begin{theorem}[Rate of convergence of \(\hat\beta\)]
\label{thm:rate_beta}
Consider the model in \eqref{eq:whitened_factorvar},
\[
\tilde Y_t
=
\tilde W_t\beta_0+\tilde\Lambda_0 f_t^0+\tilde\varepsilon_t,
\qquad t=p+1,\ldots,T,
\]
and let \((\hat\beta,\hat\Lambda)\) minimize
\[
Q(\beta,\tilde\Lambda)
=
\sum_{t=p+1}^T
(\tilde Y_t-\tilde W_t\beta)'M_{\tilde\Lambda}
(\tilde Y_t-\tilde W_t\beta)
\]
over \((\beta,\tilde\Lambda)\in\mathbb R^{pC^2}\times\mathcal F\), where
\[
\mathcal F=\{\Lambda\in\mathbb R^{N\times r}:N^{-1}\Lambda'\Lambda=I_r\},\quad
M_{\tilde\Lambda}=I_N-N^{-1}\tilde\Lambda\tilde\Lambda',\quad
N:=CJ,\ T_0:=T-p.
\]
Suppose Assumptions~A--D hold, with \(p\), \(r\), and \(C\) fixed. If \(N\to\infty\), \(T_0\to\infty\), and
\[
\frac{N}{T_0}\to \rho\geq 0,
\]
then
\[
\sqrt{NT_0}\,(\hat\beta-\beta_0)=O_p(1).
\]
\end{theorem}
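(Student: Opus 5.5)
The proof adapts the argument of \citet{Bai2009} (Propositions~A.1 and Theorem~3.1 there) with the roles of $N$ and $T_0$ interchanged. The PCA in Step~2 is taken over the $N=CJ$ coordinates, using the $N\times N$ matrix $\tilde S$.

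\textbf{Step 1: consistency.} Concentrate out the loadings. For fixed $\beta$, the minimiser of $Q(\beta,\cdot)$ over $\mathcal F$ is spanned by the top $r$ eigenvectors of $T_0^{-1}\sum_t(\tilde Y_t-\tilde W_t\beta)(\tilde Y_t-\tilde W_t\beta)'$. Next, substitute $\tilde Y_t=\tilde W_t\beta_0+\tilde\Lambda_0f_t^0+\tilde\varepsilon_t$ and write $S_{NT}(\beta,\tilde\Lambda)=(NT_0)^{-1}[Q(\beta,\tilde\Lambda)-\sum_t\tilde\varepsilon_t'M_{\tilde\Lambda_0}\tilde\varepsilon_t]$. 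Showing $S_{NT}=\tilde S_{NT}+o_p(1)$ uniformly requires three bounds.
\begin{itemize}
\item The cross terms $(NT_0)^{-1}\sum_t\tilde W_t'M_{\tilde\Lambda}\tilde\varepsilon_t$ are $o_p(1)$, by Assumption~D together with the moment bounds C(i)--(v).
\item The terms $(NT_0)^{-1}\sum_t f_t^{0\prime}\tilde\Lambda_0'M_{\tilde\Lambda}\tilde\varepsilon_t$ are $o_p(1)$, by Assumption~B and C(iii)--(iv).
\item $\sup_{\tilde\Lambda}(NT_0)^{-1}|\sum_t\tilde\varepsilon_t'P_{\tilde\Lambda}\tilde\varepsilon_t|\le r\|(NT_0)^{-1}\sum_t\tilde\varepsilon_t\tilde\varepsilon_t'\|=O_p(1/N+1/T_0)$, by C(iii), so it vanishes.
\end{itemize}
Here $\tilde S_{NT}$ is the deterministic-type quadratic form in $(\beta-\beta_0, F_0,\tilde\Lambda_0)$. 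Completing the square in $F_0$ exactly as in Bai's Proposition~A.1 gives
$\tilde S_{NT}\ge (\beta-\beta_0)'D(\tilde\Lambda)(\beta-\beta_0)+\text{(nonnegative)}$.
Assumption~A(ii) then yields $\hat\beta\to_p\beta_0$ and $\|P_{\hat\Lambda}-P_{\tilde\Lambda_0}\|\to_p0$. Assumption~A(i), $\sum\|V_i^0\|\le\kappa<1$, ensures that the idiosyncratic VAR part of $\tilde W_t$ has uniformly bounded second moments, so $\|\tilde W_t\|^2/N=O_p(1)$ uniformly on average. This is what makes the regressor sums above of the stated orders.

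\textbf{Step 2: the eigen-equation for $\hat\Lambda$.} Expand the identity $\tilde S(\hat\beta)\hat\Lambda=\hat\Lambda V_{NT}$, the transpose analogue of Bai's (A.5), into the eight terms involving $\hat\beta-\beta_0$, $F_0$, $\tilde\Lambda_0$ and $\tilde\varepsilon$. This gives
$\hat\Lambda H^{-1}-\tilde\Lambda_0=$ terms of order $O_p(\|\hat\beta-\beta_0\|)+O_p(T_0^{-1/2})+O_p(N^{-1})$ in the $N^{-1/2}\|\cdot\|$ norm. It also gives $N^{-1}\|\tilde\Lambda_0'M_{\hat\Lambda}\|=O_p(\|\hat\beta-\beta_0\|)+O_p(T_0^{-1/2}+N^{-1})$. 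The ordering of $N$ and $T_0$ is reversed relative to Bai, and this is where the ratio $N/T_0\to\rho$ enters.

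\textbf{Step 3: the first-order condition for $\beta$.} From $\sum_t\tilde W_t'M_{\hat\Lambda}(\tilde Y_t-\tilde W_t\hat\beta)=0$,
\[
\Big[\tfrac1{NT_0}\sum_t\tilde W_t'M_{\hat\Lambda}\tilde W_t\Big](\hat\beta-\beta_0)=\tfrac1{NT_0}\sum_t\tilde W_t'M_{\hat\Lambda}\tilde\Lambda_0f_t^0+\tfrac1{NT_0}\sum_t\tilde W_t'M_{\hat\Lambda}\tilde\varepsilon_t .
\]
Substitute the Step~2 expansion into the factor term. The piece linear in $\hat\beta-\beta_0$ moves to the left. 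This converts the left-hand matrix into $D(\hat\Lambda)+o_p(1)$ and the score into $(NT_0)^{-1}\sum_t Z_t(\hat\Lambda)'\tilde\varepsilon_t$ plus remainders. Each remainder is either $o_p(\|\hat\beta-\beta_0\|)$ or $O_p((NT_0)^{-1/2})$; the latter uses $N/T_0\to\rho$, which gives terms of size $O_p(1/T_0)+O_p(1/N)$ that are $O_p((NT_0)^{-1/2})$ only when $N\asymp T_0$. If $\rho=0$, the $1/T_0$ bias term is not smaller, so I expect to need to show that it is exactly $O_p((NT_0)^{-1/2})$ through a finer expansion. Since $D(\hat\Lambda)\ge c$ by A(ii), it remains to show $\|(NT_0)^{-1/2}\sum_tZ_t(\hat\Lambda)'\tilde\varepsilon_t\|=O_p(1)$.

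\textbf{Main obstacle: the weighted score.} The score splits into $(NT_0)^{-1/2}\sum_t\tilde W_t'M\tilde\varepsilon_t$, which is $O_p(1)$ by a martingale-difference second-moment computation using Assumption~D and C(ii)--(iii), and the weighted term
\[
(NT_0)^{-1/2}\sum_t\Big\{T_0^{-1}\sum_s a_{ts}\tilde W_s'\Big\}M\tilde\varepsilon_t .
\]
Its bracketed regressor involves future $\tilde W_s$, so the martingale argument fails. Here I would invoke Assumption~C(vi) with $c_{ts}=a_{ts}$, which is uniformly bounded by B(i). Chebyshev's inequality then bounds the second moment by $K$, giving $O_p(1)$.

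Replacing $M_{\hat\Lambda}$ by $M_{\tilde\Lambda_0}$ in both pieces costs $\|P_{\hat\Lambda}-P_{\tilde\Lambda_0}\|$ times $O_p(\sqrt{NT_0}\cdot(N^{-1/2}+T_0^{-1/2}))$-type quantities. Controlling this needs the Step~2 rate together with the bound $\|\sum_t\tilde\varepsilon_t\tilde W_t'\|$ from C(v). I expect this bookkeeping, rather than consistency, to be the delicate part.

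Combining the three steps gives $\|\hat\beta-\beta_0\|(1+o_p(1))=O_p((NT_0)^{-1/2})+o_p(\|\hat\beta-\beta_0\|)$, hence $\sqrt{NT_0}(\hat\beta-\beta_0)=O_p(1)$.
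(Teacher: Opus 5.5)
Your architecture is the paper's: consistency by Bai-type square completion with Assumption~A(ii), the eight-term expansion of the loading eigen-equation, substitution into the first-order condition to produce $D(\hat\Lambda)$ and $Z_t$, and Assumption~C(vi) for the weighted score. The genuine gap is the $\rho=0$ case. You flag it in Step~3, but you misdiagnose it and leave it open.

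\textbf{Which remainders are dangerous.} Since $N/T_0\to\rho<\infty$, every $O_p(T_0^{-1})$ remainder is harmless, because $\sqrt{NT_0}/T_0=\sqrt{N/T_0}=O(1)$. The dangerous remainders are the $O_p(N^{-1})$ ones, because $\sqrt{NT_0}/N=\sqrt{T_0/N}\to\infty$ when $\rho=0$.

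\textbf{The offending term.} The $O(N^{-1})$ piece of your Step~2 rate is the mean of the pure-error block of the eigen-equation, $(NT_0)^{-1}\sum_t\Sigma_{\varepsilon,t}\hat\Lambda=N^{-1}\bar\Sigma_\varepsilon\hat\Lambda$, with $\bar\Sigma_\varepsilon:=T_0^{-1}\sum_t\Sigma_{\varepsilon,t}$. Carried through $\tilde\Lambda_0-\hat\Lambda H^{-1}$ into the factor term of the first-order condition, it contributes
\[
-\frac{1}{N}\cdot\frac{1}{NT_0}\sum_{t=p+1}^T\tilde W_t'M_{\hat\Lambda}\bar\Sigma_\varepsilon\hat\Lambda\Big(\frac{1}{N}\tilde\Lambda_0'\hat\Lambda\Big)^{-1}\Big(\frac{1}{T_0}F_0'F_0\Big)^{-1}f_t^0=:\frac{1}{N}\,\tilde\zeta_{NT_0}.
\]
Here $\tilde\zeta_{NT_0}$ is $O_p(1)$, by Cauchy--Schwarz: $N^{-1}\cdot O_p(\sqrt N)\cdot O_p(\sqrt N)$. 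It is not $o_p(1)$ in general, for two reasons:
\begin{itemize}
\item $M_{\hat\Lambda}\bar\Sigma_\varepsilon\hat\Lambda$ is generically of Frobenius order $\sqrt N$. Assumption~C does not make $\bar\Sigma_\varepsilon$ act like a multiple of the identity on the loading space.
\item $\tilde W_t$ contains lagged factor components, which are correlated with $f_t^0$ whenever the factor path is persistent.
\end{itemize}
After scaling, the term is $\sqrt{T_0/N}\,D(\hat\Lambda)^{-1}\tilde\zeta_{NT_0}$. This is the heteroskedasticity bias of \citet{Bai2009} with the roles of $N$ and $T$ swapped. No finer expansion will make it $O_p((NT_0)^{-1/2})$ when $\rho=0$.

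\textbf{How the paper handles it.} The paper gets past this term only by asserting $J_{8a}=(NT_0)^{-1/2}\sqrt{N/T_0}\,\tilde\zeta_{NT_0}$, which equals $T_0^{-1}\tilde\zeta_{NT_0}$. From its own definitions, however, $J_{8a}=N^{-1}\tilde\zeta_{NT_0}$. So the paper's argument also delivers the $\sqrt{NT_0}$ rate only for $\rho\in(0,\infty)$.

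\textbf{How to close your proof.} Either assume $\rho>0$, or add a condition that removes this bias. For example, $\bar\Sigma_\varepsilon=\sigma^2I_N$ makes $M_{\hat\Lambda}\bar\Sigma_\varepsilon\hat\Lambda=0$ identically.

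\textbf{A caution on the replacement step.} You left the replacement of $M_{\hat\Lambda}$ by $M_{\Pi_0}$ as bookkeeping. You cannot bound $\sum_t\tilde W_t'(P_{\hat\Lambda}-\Pi_0)\tilde\varepsilon_t$ by $\|P_{\hat\Lambda}-\Pi_0\|$ times $\|\sum_t\tilde W_t'\tilde\varepsilon_t\|$, for two reasons:
\begin{itemize}
\item No such inequality holds when a matrix is sandwiched inside the sum.
\item $P_{\hat\Lambda}$ depends on every $\tilde\varepsilon_t$.
\end{itemize}
Expand $P_{\hat\Lambda}-\Pi_0$ through the $\tilde\varepsilon_uf_u^{0\prime}$ block of the eigen-equation. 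The diagonal pieces $u=t$ involve $\tilde\varepsilon_t'\tilde\varepsilon_t\approx\mathrm{tr}\,\Sigma_{\varepsilon,t}$ and produce a term of order $\sqrt{N/T_0}$. That term is bounded for $\rho<\infty$, which is enough for the rate, but it has to be computed explicitly rather than bounded away.
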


\begin{theorem}[Asymptotic normality of \(\hat\beta\)]
\label{thm:asy_normal_beta}
Let \(N:=CJ\) and \(T_0:=T-p\), and suppose Assumptions~A--E hold, with \(r\) and \(C\) fixed. Assume additionally that
\[
\frac{N}{T_0}\to 0.
\]
Then
\[
\sqrt{NT_0}\,(\hat\beta-\beta_0)
\xrightarrow{d}
\mathcal N\!\left(0,\ D_0^{-1}\Omega_0 D_0^{-1}\right).
\]
\end{theorem}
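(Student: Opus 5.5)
The plan is to follow the Bai (2009) template for interactive fixed effects, with the roles of the cross-section and time dimensions exchanged. Here PCA is carried out on the \(N\times N\) matrix \(\tilde S\), so the loadings \(\tilde\Lambda\) play the part that factors play in Bai. We take Theorem~\ref{thm:rate_beta} as given: its proof yields consistency of \(\hat\beta\), consistency of the loading space \(\|N^{-1/2}(\hat\Lambda - \bar\Lambda_0 G)\|\to_p 0\) for a rotation \(G\), and the rate \(\sqrt{NT_0}(\hat\beta-\beta_0)=O_p(1)\). We then sharpen the argument into an exact first-order expansion.

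Step~1: the first-order condition for \(\beta\) gives
\[
\Big(\sum_t \tilde W_t'M_{\hat\Lambda}\tilde W_t\Big)(\hat\beta-\beta_0)=\sum_t \tilde W_t'M_{\hat\Lambda}(\tilde\Lambda_0 f_t^0+\tilde\varepsilon_t).
\]
Step~2: derive the eigen-equation for \(\hat\Lambda\) from \(\tilde S\hat\Lambda=\hat\Lambda V_{NT}\), substitute \(\tilde R_t=\tilde W_t(\beta_0-\hat\beta)+\tilde\Lambda_0f_t^0+\tilde\varepsilon_t\), and obtain
\[
\hat\Lambda H^{-1}-\tilde\Lambda_0 = \text{(a sum of eight terms)}.
\]
These terms are driven by \(\hat\beta-\beta_0\), by \(T_0^{-1}\sum_t \tilde\varepsilon_t f_t^{0\prime}\), and by \(T_0^{-1}\sum_t\tilde\varepsilon_t\tilde\varepsilon_t'\). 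Step~3: insert this into \(M_{\hat\Lambda}\tilde\Lambda_0 f_t^0\) using \(M_{\hat\Lambda}\tilde\Lambda_0=M_{\hat\Lambda}(\tilde\Lambda_0-\hat\Lambda H^{-1})\). The terms linear in \(\hat\beta-\beta_0\) combine with the left-hand side. This produces the projection correction \(-T_0^{-1}\sum_s a_{ts}\tilde W_s\) and turns \(\tilde W_t'M_{\hat\Lambda}\tilde W_t\) into \(Z_t(\hat\Lambda)'Z_t(\hat\Lambda)\) up to \(o_p(NT_0)\). The result is
\[
D(\hat\Lambda)\sqrt{NT_0}(\hat\beta-\beta_0)=\frac{1}{\sqrt{NT_0}}\sum_t Z_t(\hat\Lambda)'\tilde\varepsilon_t+\text{bias terms}+o_p(1).
\]

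Step~4 handles the bias. Following Bai, the remaining terms are of two kinds. One has order \(\sqrt{N/T_0}\), arising from \(T_0^{-1}\sum_t\tilde\varepsilon_t\tilde\varepsilon_t'-T_0^{-1}\sum_t\Sigma_{\varepsilon,t}\) and the mean of the error covariance in the time direction. The other has order \(\sqrt{T_0/N}\cdot N/T_0\)-type, coming from cross-sectional correlation. I would bound each term in operator norm using C(iii)--(v): for instance \(\|T_0^{-1}\sum_t\tilde\varepsilon_t\tilde\varepsilon_t'\|=O_p(1+N/T_0)\) by C(iii) and a matrix moment bound. I would then check that after multiplying by \(\sqrt{NT_0}/(NT_0)\) each term is \(O_p(\sqrt{N/T_0})+O_p(1/\sqrt{N})\), which vanishes because \(N/T_0\to0\). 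The term containing \(b_t'\tilde\varepsilon_t\), whose weighted regressors \(b_t\) are non-adapted, is exactly where C(vi) enters: with \(c_{ts}=a_{ts}\), it shows the variance is \(O(1)\) before the extra \(T_0^{-1}\) scaling, so the term is negligible.

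Step~5 replaces \(\hat\Lambda\) by \(\bar\Lambda_0\). Since \(M_{\hat\Lambda}-M_{\bar\Lambda_0}=O_p(\|\hat\beta-\beta_0\|+N^{-1/2}+T_0^{-1/2})\) in the scaled norm, we have \(D(\hat\Lambda)-D(\bar\Lambda_0)=o_p(1)\). For the score, I would show
\[
(NT_0)^{-1/2}\sum_t (Z_t(\hat\Lambda)-Z_t^0)'\tilde\varepsilon_t=o_p(1)
\]
by expanding \(P_{\hat\Lambda}-P_{\bar\Lambda_0}\) with the Step~2 representation and again using D and C(vi) for the cross terms. E(i) then gives \(D(\bar\Lambda_0)\to_p D_0\), which is positive definite, E(ii) gives the score CLT with variance \(\Omega_0\), and Slutsky's lemma concludes.

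The main obstacle is Step~4 together with the score replacement in Step~5. There the dependence between the lagged-endogenous \(\tilde W_s\) and \(\tilde\varepsilon_t\) for \(s>t\) prevents Bai's exogeneity argument. Each such cross term must be routed through C(vi) or through the martingale structure of Assumption~D, with the \(s\le t\) part adapted and the \(s>t\) part controlled via the VAR stability in A(i). The assumption \(N/T_0\to0\), rather than \(\rho>0\), is what kills the \(\sqrt{N/T_0}\) bias that otherwise only remains \(O_p(1)\) in Theorem~\ref{thm:rate_beta}.
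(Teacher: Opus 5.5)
Your route is the paper's route: the first-order condition, the eight-term eigen-expansion of \(\tilde\Lambda_0-\hat\Lambda H^{-1}\), substitution through \(M_{\hat\Lambda}\tilde\Lambda_0=M_{\hat\Lambda}(\tilde\Lambda_0-\hat\Lambda H^{-1})\) so that the \(\hat\beta\)-linear part produces \(D(\hat\Lambda)\), then bias, replacement of \(\hat\Lambda\) by \(\bar\Lambda_0\), E(i)--(ii) and Slutsky. (The \(D(\hat\Lambda)\) step is in fact exact, because \(T_0^{-1}\sum_t a_{ts}a_{tu}=a_{su}\).)

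The gap is in Step~4. The deterministic part \(\bar\Sigma_\varepsilon:=T_0^{-1}\sum_t\Sigma_{\varepsilon,t}\) of \(T_0^{-1}\sum_t\tilde\varepsilon_t\tilde\varepsilon_t'\) enters the eigen-equation as \(\mathcal J_{8a}=N^{-1}\bar\Sigma_\varepsilon\hat\Lambda\), because \(\hat S\) carries the factor \(1/N\). Its contribution to the factor term is therefore
\[
J_{8a}=\frac1N\,\tilde\zeta_{NT_0},\qquad
\tilde\zeta_{NT_0}=-\frac{1}{NT_0}\sum_{t=p+1}^T\tilde W_t'M_{\hat\Lambda}\bar\Sigma_\varepsilon\hat\Lambda\Big(\frac1N\tilde\Lambda_0'\hat\Lambda\Big)^{-1}\Big(\frac1{T_0}F_0'F_0\Big)^{-1}f_t^0=O_p(1),
\]
so \(\sqrt{NT_0}\,J_{8a}=\sqrt{T_0/N}\;\tilde\zeta_{NT_0}\), not \(O_p(\sqrt{N/T_0})\). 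Your own bound already shows this: \(\sqrt{NT_0}\cdot N^{-1}\cdot O_p(1+N/T_0)=O_p(\sqrt{T_0/N}+\sqrt{N/T_0})\), and the ``1'' is exactly \(\|\bar\Sigma_\varepsilon\|\). You assign the cross-sectional term the order \(\sqrt{T_0/N}\cdot N/T_0\), but there is no factor \(N/T_0\).

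This is Bai's cross-sectional heteroskedasticity/correlation bias. The criterion equals \(\min_{\beta,\Lambda,\{f_t\}}\sum_t\|\tilde Y_t-\tilde W_t\beta-\Lambda f_t\|^2\), i.e.\ Bai's LS estimator with the \(N\) coordinates as units. Profiling out \(\Lambda\) rather than \(F\) does not change the estimator, so that bias is of order \(\sqrt{T_0/N}\), as \(\sqrt{T/N}\) is in Bai. It is generically nonzero here:
\begin{itemize}
\item For \(C\ge2\), the regressor for \(V_{k,cd}\) carries \(e_c\otimes\tilde\Lambda_{0,d}f^0_{t-k}\), which is not in the span of \(\tilde\Lambda_0\). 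So \(M_{\Pi_0}\tilde W_t\) keeps a factor component.
\item \(M_{\Pi_0}\bar\Sigma_\varepsilon\tilde\Lambda_0\neq0\) as soon as error variances differ across regions.
\end{itemize}
With serially dependent (e.g.\ trending) factor paths, \(\tilde\zeta_{NT_0}\) therefore has a nonzero limit.

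Consequently \(N/T_0\to0\) kills the \(\sqrt{N/T_0}\)-type terms but makes this one diverge. Those terms are time heteroskedasticity, the Nickell-type effect of lagged regressors, and the score-replacement term. To close the argument you need one of the following:
\begin{itemize}
\item an extra condition, e.g.\ \(\bar\Sigma_\varepsilon=\sigma^2I_N\), for which \(M_{\hat\Lambda}\bar\Sigma_\varepsilon\hat\Lambda=0\) identically, or \(\sqrt{T_0/N}\,\|M_{\Pi_0}\bar\Sigma_\varepsilon\Pi_0\|\to0\);
\item an explicit bias correction;
\item the regime \(N/T_0\to\rho\in(0,\infty)\) with both biases kept in the limit.
\end{itemize}
Following the paper does not rescue this step. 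Proposition~\ref{prop:beta_linear_expansion} writes \(J_{8a}=(NT_0)^{-1/2}\sqrt{N/T_0}\,\tilde\zeta_{NT_0}=\tilde\zeta_{NT_0}/T_0\), which contradicts its own \(\mathcal J_{8a}=N^{-1}\Omega\hat\Lambda\); that is the same slip as yours.

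On Step~5, your plan to expand \(P_{\hat\Lambda}-\Pi_0\) inside the score is better than the paper's Lemma~\ref{lem:score_replacement_routeA}. That lemma bounds \(\|\sum_t\tilde W_t'\Delta_M\tilde\varepsilon_t\|\) by \(\|\Delta_M\|\,\|\sum_t\tilde W_t'\tilde\varepsilon_t\|\), which is not valid, since \(\Delta_M\) sits inside every summand.
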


Theorem~\ref{thm:rate_beta} establishes a convergence rate, and hence consistency of the estimator, while Theorem~\ref{thm:asy_normal_beta} provides its asymptotic normality under the stronger condition \(N/T_0\to0\). The main difficulty in the asymptotic analysis is Proposition~\ref{prop:beta_linear_expansion}, which derives a linear expansion for \(\hat\beta\) in the presence of latent factors and internally generated VAR regressors. Although the overall proof follows the Bai-type projection-and-expansion strategy, the present setting is not a routine extension of \citet{Bai2009}. Because the regressors are lagged dependent variables and PCA is carried out along the time direction, the expansion produces a weighted score term involving time averages of lagged regressors. This term cannot be handled by the standard exogeneity argument used in the Bai setting. Assumption~C(vi) is introduced precisely to control this weighted score contribution. The resulting expansion contains a leading score term and a bias term. The bias term is of order \(O_p(\sqrt{N/T_0})\), so it affects the rate result in Theorem~\ref{thm:rate_beta} when \(N/T_0\to\rho\ge0\), but becomes asymptotically negligible under the stronger condition \(N/T_0\to0\), which yields the asymptotic normality result in Theorem~\ref{thm:asy_normal_beta}. Full proofs and auxiliary technical lemmas are provided in the online Appendix.

\paragraph{Remark on feasible inference.}
Theorem~\ref{thm:asy_normal_beta} implies that
\[
\widehat{\operatorname{Var}}(\hat\beta)
=
\frac{1}{NT_0}\hat D^{-1}\hat\Omega \hat D^{-1},
\]
so coefficient-wise inference may be based on the corresponding standard errors, where
\[
\hat D
:=
\frac{1}{NT_0}\sum_{t=p+1}^T \hat Z_t'\hat Z_t,
\qquad
\hat Z_t
:=
Z_t(\hat\Lambda),
\]
and
\[
\hat\Omega
:=
\frac{1}{NT_0}\sum_{t=p+1}^T \hat Z_t'\hat\Sigma_{\varepsilon,t}\hat Z_t.
\]
Here \(\hat Z_t\) denotes the feasible analogue of \(Z_t(\bar\Lambda_0)\), obtained by replacing \(\bar\Lambda_0\) with \(\hat\Lambda\) and \(a_{ts}\) with
\[
\hat a_{ts}
:=
\hat f_t'
\Big(\frac{1}{T_0}\hat F'\hat F\Big)^{-1}
\hat f_s,
\qquad
\hat F:=(\hat f_{p+1},\ldots,\hat f_T)'.
\]
\(\hat\Sigma_{\varepsilon,t}\) denotes a feasible estimator of the contemporaneous covariance matrix of the idiosyncratic error.
In finite samples, a fully unrestricted estimator of \(\hat\Omega\) may be singular or numerically unstable when \(T_0\) is not sufficiently large relative to the dimension of the projected score vectors. In the simulation and empirical analyses, we therefore estimate \(\hat\Sigma_{\varepsilon,t}\) by a block-diagonal matrix with \(J\times J\) blocks, one for each region. This block-diagonal approximation yields a stable and nonsingular feasible covariance estimator for coefficient-wise \(t\)-tests.

\section{Simulation study}\label{sec:simulation}

This section examines whether increasing the number of estimated factors mechanically eliminates the idiosyncratic VAR network in our procedure. This issue is important for interpreting the Ct-value application, because the detected network may weaken or disappear as the factor dimension is varied. The simulation study shows that, when genuine idiosyncratic VAR dependence is present, increasing the number of estimated factors does not mechanically erase the directed network. Rather, it mainly reduces spuriously detected edges induced by common factor movements and sharpens the remaining edge set.

\subsection{Data-generating process}\label{subsec:simulation_dgp}

We generate data from a factor-augmented density-valued VAR model of the form
\begin{equation}
Y_t = (V_{\mathrm{true}} \otimes I_J) Y_{t-1} + L f_t + \varepsilon_t,
\label{eq:sim_dgp_main}
\end{equation}
where $Y_t \in \mathbb{R}^{CJ}$, $V_{\mathrm{true}}$ is a $C \times C$ coefficient matrix, $I_J$ is the $J \times J$ identity matrix, $L$ is the loading matrix, and $f_t$ is an $r^\ast$-dimensional factor process.

In the simulation, we set $T=114$, $C=20$, and $J=15$, so that $Y_t$ has dimension $CJ=300$. The true number of common factors is fixed at $r^\ast=5$. The factor process follows
\begin{equation}
f_t = A f_{t-1} + u_t,
\label{eq:sim_factor_var}
\end{equation}
where
\[
A=\mathrm{diag}(0,0,0.9,0.9,0.9),
\]
and $u_t$ is Gaussian with independent components and standard deviations
\[
(1.0,\,1.0,\,0.3,\,0.3,\,0.3).
\]
Thus, the first two factors are white noise, whereas the remaining factors are persistent.

To construct the loading matrix, let $Q=(q_1,\ldots,q_{r^\ast})$ be a $C\times r^\ast$ orthonormal matrix obtained by applying a QR decomposition to a Gaussian random matrix, and let $U=(u_1,\ldots,u_{r^\ast})$ be a $J\times r^\ast$ matrix whose columns are independently generated Gaussian vectors normalized to unit length. The loading matrix $L=(\ell_1,\ldots,\ell_{r^\ast}) \in \mathbb{R}^{CJ\times r^\ast}$ is defined by
\[
\ell_1 = \left(\mathbf{1}_C/\sqrt{C}\right)\otimes u_1,
\qquad
\ell_k = q_k \otimes u_k,\quad k=2,\ldots,r^\ast,
\]
where $\mathbf{1}_C$ denotes the $C$-vector of ones. Hence, the first factor has a global loading shared by all units, whereas the remaining factors generate structured cross-sectional heterogeneity.

The true VAR matrix $V_{\mathrm{true}}$ is constructed from
\[
M = Q \,\mathrm{diag}(1.00,0.90,0.80,0.70,0.60)\, Q^\prime.
\]
We restrict attention to off-diagonal entries in rows indexed by the source set $\{1,2,5\}$, rank these candidates by $|M_{ij}|$, and select the top 30 entries. For each selected pair $(i,j)$, we set
\[
(V_{\mathrm{true}})_{ij}=\alpha_V |M_{ij}|,
\]
and set all remaining entries to zero. If the spectral radius of $V_{\mathrm{true}}$ exceeds $0.95$, we rescale the entire matrix so that its spectral radius becomes $0.95$. We consider three levels of VAR strength, $\alpha_V \in \{0, 0.5, 1.0\}$. When $\alpha_V=0$, there is no idiosyncratic VAR dependence. When $\alpha_V>0$, the data contain genuine directed edges in the idiosyncratic VAR component.

The idiosyncratic disturbance is Gaussian,
\[
\varepsilon_t \sim N(0,\sigma_\varepsilon^2 I_{CJ}),
\qquad \sigma_\varepsilon=0.1.
\]
For each design, we generate 100 Monte Carlo replications.

\subsection{Estimation, edge selection, and results}\label{subsec:simulation_estimation}

For each simulated sample, we estimate the model using exactly the same procedure as in Section~\ref{sec:dfm}. We vary the number of estimated factors over
\[
r=0,1,\ldots,8,
\]
so that the estimated factor dimension may be smaller or larger than the true value $r^\ast=5$.

When $r=0$, we estimate a VAR without factors and compute block-diagonal robust $t$-statistics. More precisely, the covariance matrix of the error term is assumed to be block-diagonal across units, with each block given by a $J\times J$ covariance matrix.

When $r \geq 1$, we iteratively estimate the VAR coefficient matrix and the factor loading matrix by alternating between factor extraction from the residuals and projected least-squares estimation of the VAR component. Given the final estimator, we compute the robust $t$-statistics for the VAR coefficients
based on the asymptotic covariance formula in Section~\ref{sec:asympt}.
The asymptotic covariance matrix is estimated by
\[
(NT_0)^{-1}\hat D^{-1}\hat\Omega \hat D^{-1},
\]
where $\hat D$ and $\hat\Omega$ are sample analogues of the corresponding population quantities in Assumption~E. In estimating $\hat\Omega$, we assume that $\Sigma_{\varepsilon,t}$ has a block-diagonal structure,
\[
\mathrm{diag}
\bigl(
\varepsilon_{t,1}\varepsilon_{t,1}^\prime,\ldots,
\varepsilon_{t,C}\varepsilon_{t,C}^\prime
\bigr),
\]
where $\varepsilon_{t,c}$ denotes the $J$-dimensional idiosyncratic error vector for unit $c$ at time $t$.

Directed edges are selected from positive off-diagonal entries of the estimated VAR matrix using one-sided $t$-tests with the Benjamini--Yekutieli false discovery rate control at level 10\% \citep{BenjaminiYekutieli2001}. For each replication and each value of $r$, we record the number of detected edges, the recall rate relative to the true edge set, the empirical false discovery proportion (FDP), and the precision.
These four summaries allow us to assess whether increasing the estimated factor dimension mainly removes spurious edges induced by common factor movements, or instead erases genuine idiosyncratic VAR dependence.

\begin{figure}[tbp]
\centering
\includegraphics[width=0.9\textwidth]{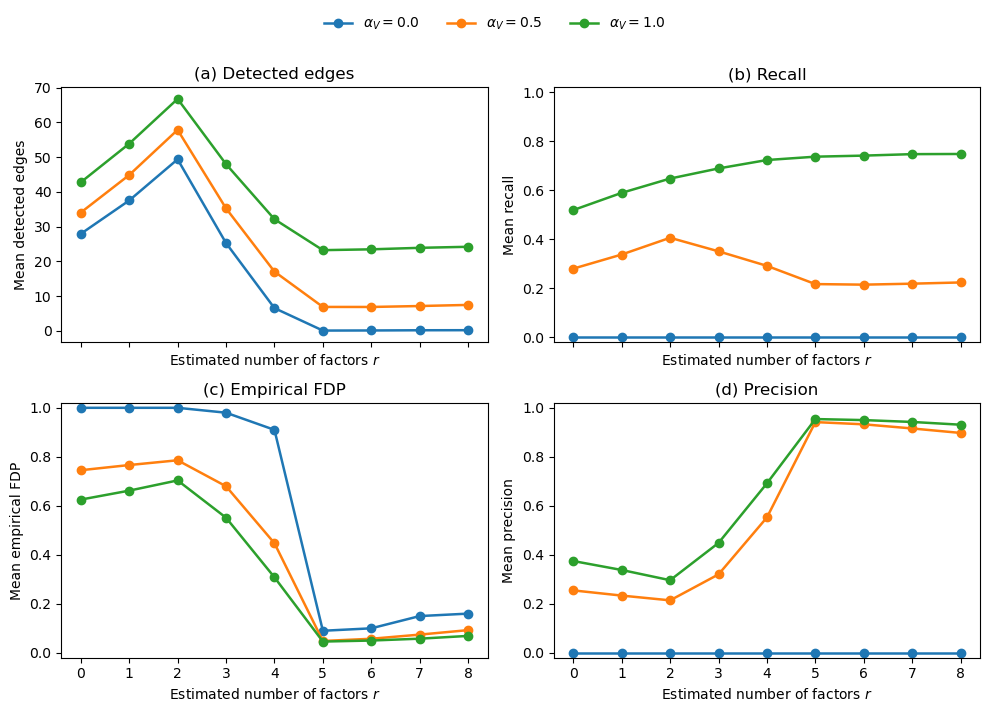}
\caption{Simulation results as functions of the estimated number of factors $r$. Panel (a) reports the mean number of detected edges, panel (b) the mean recall, panel (c) the mean empirical false discovery proportion (FDP), and panel (d) the mean precision. The three curves correspond to $\alpha_V=0$, $0.5$, and $1.0$.}
\label{fig:sim_edges_recall}
\end{figure}

Figure~\ref{fig:sim_edges_recall} reports four summaries of edge-selection performance as functions of the estimated number of factors $r$: the mean number of detected edges, the mean recall, the mean empirical FDP, and the mean precision. The results reveal a clear transition from many but noisy detected edges at small $r$ to fewer but more credible edges once sufficiently many factors are included.

When $\alpha_V=0$, recall is identically zero by construction, so any detected edge is spurious. In this case, the mean number of detected edges is large for small values of $r$, peaking around $r=2$, while the empirical FDP is essentially one. Once $r$ reaches about the true factor dimension, however, the mean number of detected edges falls to nearly zero. Thus, when the idiosyncratic VAR component is absent, apparent directed edges generated by common factor movements largely disappear after sufficient factor adjustment.

When $\alpha_V=0.5$, the same reduction in the number of detected edges is accompanied not by the disappearance of the true signal, but by fewer false discoveries and higher precision. As $r$ increases beyond 4, the mean number of detected edges drops sharply, but recall remains positive even for $r>r^\ast$. At the same time, the empirical FDP declines rapidly and precision rises to a high level. This shows that, under moderate idiosyncratic VAR dependence, increasing the number of estimated factors mainly removes false positives driven by common factor movements, while preserving a nontrivial fraction of genuine directed edges.

When $\alpha_V=1.0$, the pattern is even clearer. Although the mean number of detected edges again declines as $r$ increases, recall remains high and continues to improve slightly for large values of $r$. At the same time, the empirical FDP becomes small and precision becomes very high once $r$ reaches about 5. Thus, when the true idiosyncratic VAR signal is strong, estimating more factors does not destroy the directed network; instead, it sharpens it by removing common-factor-driven spurious edges while retaining most of the genuine ones.

Overall, the simulation study shows that increasing the number of estimated factors does not mechanically eliminate genuine idiosyncratic VAR edges. What large values of $r$ mainly do is reduce false discoveries and sharpen the selected network. Therefore, if detected edges disappear in the Ct-value application, this should not be viewed as a mechanical consequence of factor augmentation itself. Rather, it suggests that the apparent network at small $r$ may be driven mainly by common factor movements, or that the idiosyncratic VAR component is weak or absent.
\section{Empirical data analysis}
\label{sec:dt}

The real-world dataset analyzed in this study consists of raw polymerase chain reaction (PCR) test results, including SARS-CoV-2 cycle threshold (Ct) values, obtained from a major clinical diagnostics laboratory in Brazil. The Ct value is the number of amplification cycles required for the viral signal to cross a detection threshold, so lower Ct values correspond to higher viral load. Temporal and regional changes in the Ct distribution can therefore be interpreted as changes in the distribution of viral load among tested individuals.

In total, the dataset comprises approximately 343{,}000 individual PCR test observations collected between 20 March 2020 and 22 May 2022. Although specimens were collected from more than 2{,}000 sites across the country, all laboratory analyses were performed at a centralized facility in S\~ao Paulo. A key feature of the dataset is therefore that cross-regional differences in the observed Ct distributions are less likely to reflect systematic differences in laboratory processing across locations. In addition, two PCR assays were used by the central laboratory (Seegene and Taq), and the laboratory confirmed that the resulting Ct values are comparable across assays.

We define the week index by
\[
t = 1+\lfloor (\text{date}-\text{start date})/7 \rfloor,
\]
where the start date is 16 March 2020. We take Monday, 16 March 2020, as the reference date to define week boundaries, even though the first PCR observations enter from 20 March 2020. This yields \(T=114\) weekly observations in total. Figure~\ref{fig:weekly_summary} reports basic weekly summaries of the raw Ct observations: the weekly mean Ct value and the 5th and 95th percentiles computed from all observations nationwide, together with the weekly sample size. The 5th--95th percentile band of weekly Ct values is broadly stable over time, although it narrows around early 2022, driven mainly by an upward shift in the lower tail, even as the weekly sample size varies substantially.

\begin{figure}[t]
  \centering
  \includegraphics[width=0.95\linewidth]{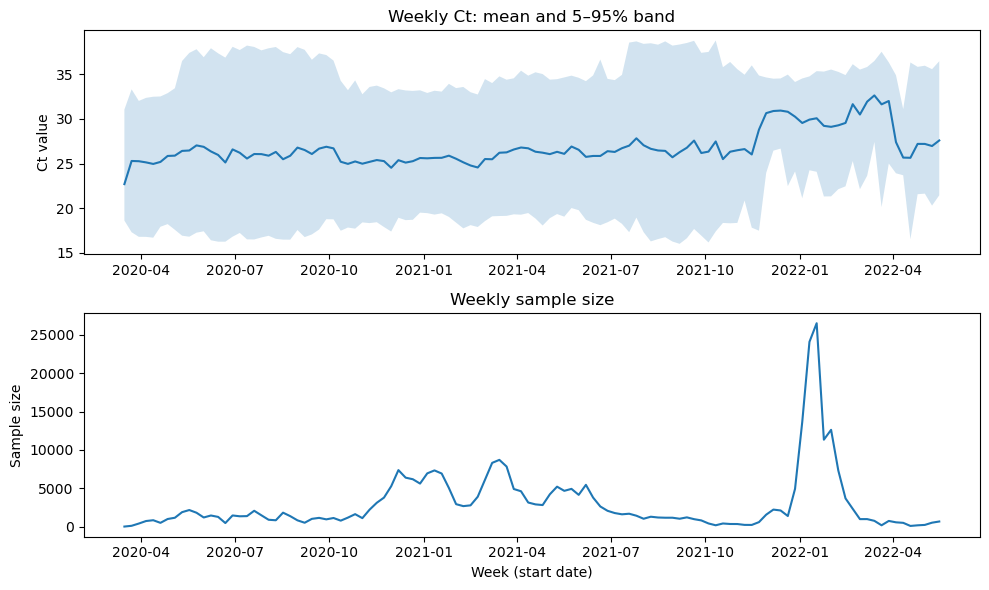}
  \caption{Weekly summaries of the raw Ct observations. The top panel reports the weekly mean Ct value together with the 5th and 95th percentiles, computed from nationwide observations. The bottom panel shows the weekly sample size. Both panels are indexed by the week start date.}
  \label{fig:weekly_summary}
\end{figure}

Our empirical analysis follows the construction and modeling framework developed in Section~\ref{sec:met}. We first aggregate raw Ct observations into a weekly sequence of region-level density estimates, represented by B-spline mixture weights. We then apply the generalized logit transform to obtain the vector-valued series \(Y_{t,c}\) and fit the factor-adjusted VAR model developed in Section~\ref{sec:met}. The resulting VAR coefficients are summarized as a directed network that captures lagged dependence in the evolution of Ct-value distributions across regions.

The main empirical issue is not merely whether one directed network can be drawn from the data, but under what conditions an idiosyncratic network becomes visible after strong common movements have been removed. In particular, both the sample period and the prior strength used in weekly density estimation affect how clearly the idiosyncratic VAR component can be separated from pervasive common factor movements. The empirical analysis below is therefore organized around this visibility question.

\subsection{Construction of weekly Ct-value distributions}
\label{sec:data}

Each PCR observation is indexed by the date of sample collection and the reported municipality (city) and state. 
To connect city-level observations to the region-level model in Section~\ref{sec:met}, we construct \(C=20\) geographical regions by spatial clustering of Brazilian cities.
We first apply spatial \(k\)-means clustering to city coordinates (latitude and longitude) with a deliberately large initial number of clusters (\(K=60\)). We then iteratively merge clusters whose average weekly sample size falls below 30 tests per week, always merging the smallest cluster into its nearest neighboring cluster in terms of centroid distance. This procedure yields \(C=20\) geographically coherent regions with more balanced weekly observation counts, which helps avoid extremely sparse cells in the weekly density estimation. 
The resulting spatial partition
is shown in Figure~\ref{fig:cluster_map}. Table~\ref{tab:summary_by_rep} reports summary statistics of the weekly sample size \(n_{t,c}\) for each region, labeled by its representative city.

\begin{figure}[t]
  \centering
  \includegraphics[width=0.9\linewidth]{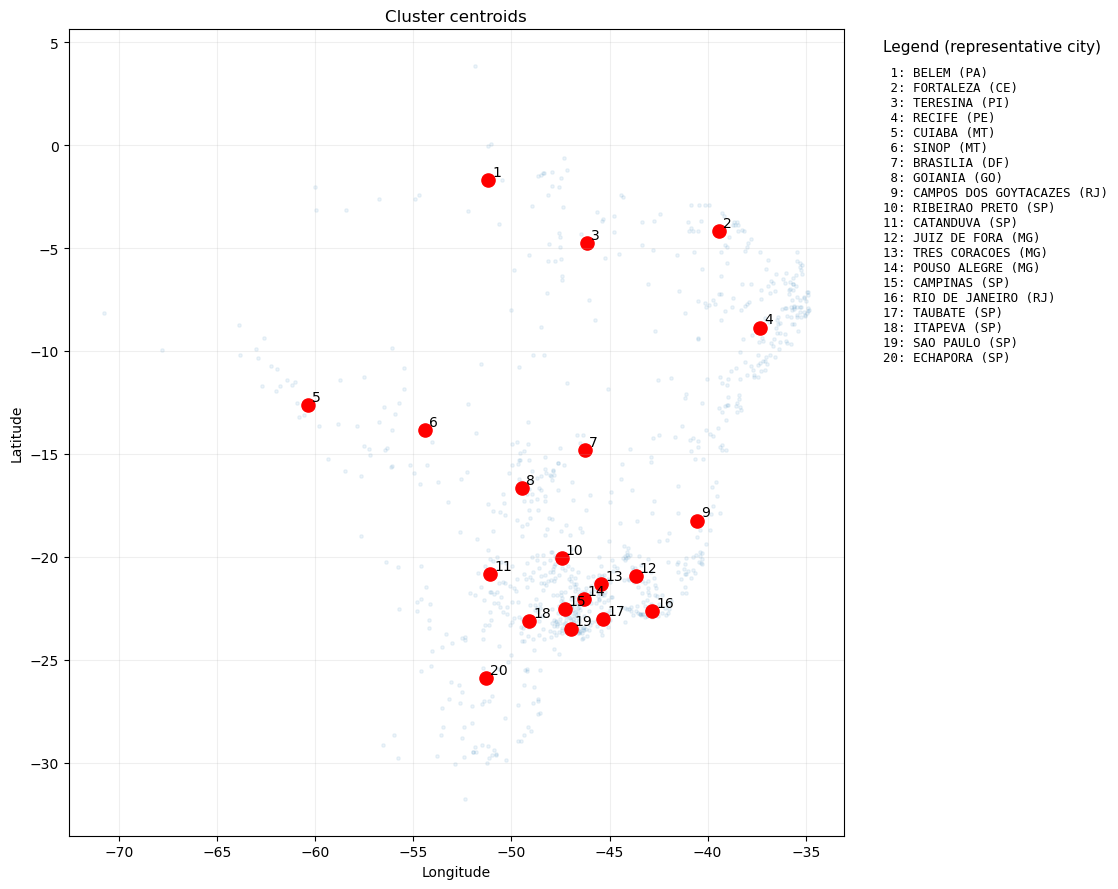}
  \caption{Map of the 20 regions in Brazil. Red dots indicate region centroids, and the numeric labels (ID 1--20) follow the north-to-south ordering. The legend lists each region ID and its representative city.}
  \label{fig:cluster_map}
\end{figure}

\begin{table}[htbp]
\centering
\caption{Summary statistics of weekly sample size by representative area}
\label{tab:summary_by_rep}
\begin{tabular}{r l r r r r r r r r}
\hline
ID & Rep. City & count & mean & std & min & 25\% & 50\% & 75\% & max \\
\hline
1  & Belem                 & 114 &  73.8 & 149.9 & 0 &   3.00 &  10.0 &   34.75 &  623 \\
2  & Fortaleza             & 114 &  43.9 &  67.6 & 0 &   0.25 &  17.0 &   58.00 &  396 \\
3  & Teresina              & 114 &  44.7 &  80.1 & 0 &   3.00 &  11.0 &   49.25 &  414 \\
4  & Recife                & 114 & 255.5 & 599.4 & 0 &  19.25 & 100.0 &  235.50 & 4640 \\
5  & Cuiaba                & 114 & 142.1 & 291.8 & 0 &  31.25 &  72.0 &  132.25 & 2372 \\
6  & Sinop                 & 114 &  45.2 &  58.1 & 0 &   8.25 &  23.5 &   60.00 &  384 \\
7  & Brasilia              & 114 & 157.6 & 225.5 & 0 &  37.75 &  98.0 &  165.00 & 1462 \\
8  & Goiania               & 114 &  57.1 &  89.8 & 0 &  11.25 &  35.5 &   58.25 &  716 \\
9  & Campos dos Goytacazes & 114 &  80.1 & 111.4 & 0 &  15.50 &  38.5 &   86.00 &  626 \\
10 & Ribeirao Preto        & 114 & 137.1 & 223.0 & 0 &  31.00 &  73.0 &  170.50 & 1528 \\
11 & Catanduva             & 114 & 110.9 & 236.0 & 0 &  12.00 &  33.0 &   96.75 & 1573 \\
12 & Juiz de Fora          & 114 &  94.6 & 131.6 & 0 &  15.00 &  34.5 &  119.50 &  675 \\
13 & Tres Coracoes         & 114 &  30.7 &  47.3 & 0 &   3.25 &  15.0 &   31.50 &  255 \\
14 & Pouso Alegre          & 114 & 143.3 & 364.1 & 0 &   8.25 &  28.0 &  133.75 & 2713 \\
15 & Campinas              & 114 & 271.0 & 336.4 & 1 &  62.25 & 141.0 &  358.75 & 1971 \\
16 & Rio de Janeiro        & 114 & 271.5 & 407.8 & 0 &  36.00 & 124.0 &  313.25 & 2988 \\
17 & Taubate               & 114 &  76.9 & 137.1 & 0 &   6.25 &  21.0 &   72.00 &  661 \\
18 & Itapeva               & 114 &  32.8 &  56.3 & 0 &   4.00 &  12.5 &   39.50 &  372 \\
19 & Sao Paulo             & 114 & 785.8 & 960.4 & 1 & 163.25 & 362.5 & 1127.75 & 5173 \\
20 & Echapora              & 114 & 150.6 & 281.9 & 0 &  12.25 &  30.0 &  127.00 & 1701 \\
\hline
\end{tabular}
\end{table}

Let \(x_{t,c,i}\) denote the \(i\)th individual Ct value observed in region \(c\) during week \(t\). For each region \(c\) and week \(t\), we collect all Ct observations observed in that cell and estimate the weekly region-level density on the fixed support \([10,40]\) by the B-spline mixture model in \eqref{eq:mixture}. In the empirical implementation, we use \(J+1\) normalized B-spline components and obtain the mixture weights \(w_{t,c}\in\mathcal S_{J+1}\) by the EM algorithm with a Dirichlet prior, as described in Section~\ref{sec:bspline}. This regularization is important in practice because weekly sample sizes vary substantially across regions and are close to zero in the lower quartiles for several regions.

More specifically, for each week we use the nationwide weekly density as a common prior and estimate region-specific weekly densities as posterior-mean B-spline weights. The prior strength is controlled by the scalar parameter \(\gamma\) in \eqref{eq:posterior}. A larger \(\gamma\) pulls region-level densities more strongly toward the nationwide weekly pattern, whereas a smaller \(\gamma\) leaves more room for region-specific variation. Since this choice turns out to matter for the visibility of the idiosyncratic VAR network, we compare several values of \(\gamma\) in the empirical analysis below.

We transform the estimated weight vectors into \(\mathbb R^{J}\) via the generalized logit map in Section~\ref{sec:logit}, \(Y_{t,c} = \mathrm{logit}_{\delta}(w_{t,c})\), and stack them as \(Y_t=(Y_{t,1}',\ldots,Y_{t,20}')'\). All subsequent estimation uses the block-diagonal metric \(\mathcal H = I_C\otimes H_J\) introduced in \eqref{eq:inner_product}. 
Equivalently, we work with the whitened series
\[
\tilde Y_t = \mathcal K' Y_t ,
\]
where \(\mathcal K=I_C\otimes K_J\) and \(\mathcal H=\mathcal K\mathcal K'\).
Unless stated otherwise, we fix the tuning parameter \(\delta=1\) and the spline specification \((J=15,k=3)\) throughout the empirical analysis for numerical stability and comparability across regions and weeks.

\subsection{Full-sample versus post-September 2020 results}
\label{sec:emp_period_prior}

We now examine how the detected cross-regional VAR network depends on the sample period and the prior strength used in the weekly density estimation. The main empirical question is whether the directed edges identified by our procedure reflect genuine idiosyncratic distributional dynamics, or whether they are masked by strong common factor movements and overshrinkage toward the nationwide weekly density.

To address this issue, we compare two sample periods. The first uses the full sample from 16 March 2020 onward. The second starts on 28 September 2020 and excludes the first six months of the pandemic. We focus on this later subsample because the early phase appears to be dominated by strong nationwide common movements, whereas the later phase is expected to contain richer regional heterogeneity. In addition, weekly sample sizes become more stable after the early phase, which improves the reliability of the region-level density estimates.

\begin{figure}[t]
  \centering
  \includegraphics[width=0.90\linewidth]{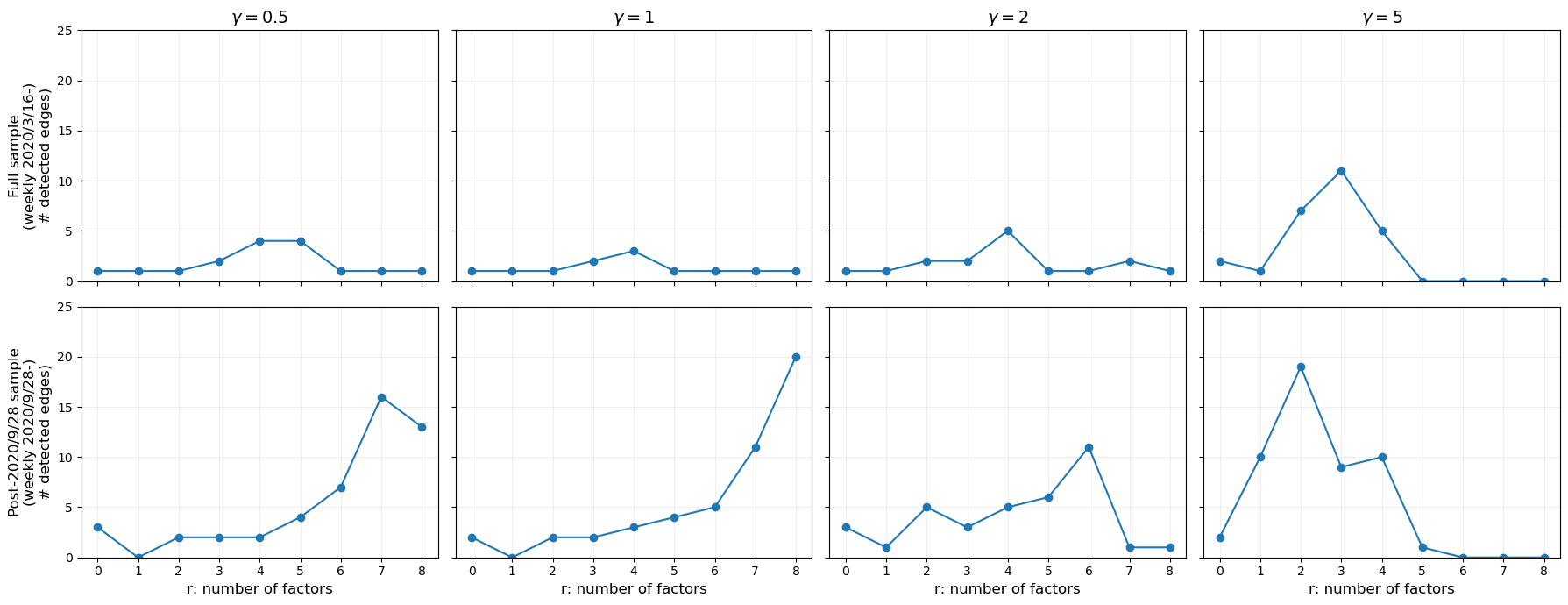}
  \caption{Number of detected directed edges as a function of the estimated number of factors \(r\). The top row uses the full sample from 2020/3/16 onward, and the bottom row uses the subsample from 2020/9/28 onward. The four columns correspond to \(\gamma=0.5,1,2,5\). Edges are selected from positive off-diagonal VAR coefficients using one-sided \(t\)-tests with Benjamini--Yekutieli false discovery rate control at the 5\% level.}
  \label{fig:edge_count_compare}
\end{figure}

Figure~\ref{fig:edge_count_compare} reports the number of detected directed edges as a function of the estimated number of factors \(r\), for four values of the prior-strength parameter, \(\gamma=0.5,1,2,5\). In each case, edges are selected from positive off-diagonal VAR coefficients using one-sided \(t\)-tests with Benjamini--Yekutieli false discovery rate control at the 5\% level \citep{BenjaminiYekutieli2001}.

The contrast between the two sample periods is sharp. In the full sample, the detected network is generally weak for all values of \(\gamma\). Under the strong prior \(\gamma=5\), the number of detected edges increases temporarily around \(r=2\) and \(r=3\), but then falls back to nearly zero for \(r\geq5\). This pattern is consistent with the simulation evidence in Section~\ref{sec:simulation}: apparent directed edges induced by strong common factor movements may remain visible when too few factors are included, but they disappear once enough factors are added. By contrast, in the post-2020/9/28 subsample, a substantial number of edges emerge under weak priors, especially for \(\gamma=0.5\) and \(\gamma=1\), when \(r\) is sufficiently large. This suggests that the early phase of the pandemic is more strongly dominated by common movements, whereas the later sample contains a more visible idiosyncratic VAR structure.

The prior-strength parameter also has a clear effect. Under weak priors (\(\gamma=0.5\) and \(\gamma=1\)), the later subsample yields a substantial directed network for larger values of \(r\). By contrast, stronger priors (\(\gamma=2\) and \(\gamma=5\)) largely suppress the detected network even in the later subsample. This is consistent with the idea that a large \(\gamma\) overshrinks the region-level weekly densities toward the nationwide weekly pattern and thereby removes region-specific variation needed to identify the idiosyncratic VAR component.

Taken together, Figure~\ref{fig:edge_count_compare} shows that two empirical choices are crucial for recovering an interpretable directed network from the Ct-value data: excluding the first six months of the sample and using a weak prior in the weekly density estimation. It also suggests that networks detected only at small values of \(r\) should be interpreted with caution, because they may still reflect insufficient removal of common movements. 
In the remainder of the empirical analysis, we therefore focus on the post-2020/9/28 subsample and use \(\gamma=1\) as our baseline specification. This choice is motivated by the preceding evidence: weak priors preserve region-specific variation, while sufficiently large values of \(r\) help remove strong common movements and make the idiosyncratic VAR network visible.

\subsection{Directed network in the post-2020/9/28 weak-prior baseline}
\label{sec:emp_network}

We next examine the directed network under the post-2020/9/28 subsample with the weak prior \(\gamma=1\), focusing on the large-\(r\) results for \(r=7\) and \(r=8\). As shown in Section~\ref{sec:emp_period_prior}, weak-prior specifications combined with sufficiently large values of \(r\) are the settings in which the idiosyncratic VAR network becomes visible after strong common movements are removed. Rather than singling out one specification as uniquely preferred, we compare two nearby large-\(r\) specifications and focus on the directed edges that remain selected in both of them.

\begin{figure}[ht]
  \centering
  \includegraphics[width=0.7\linewidth]{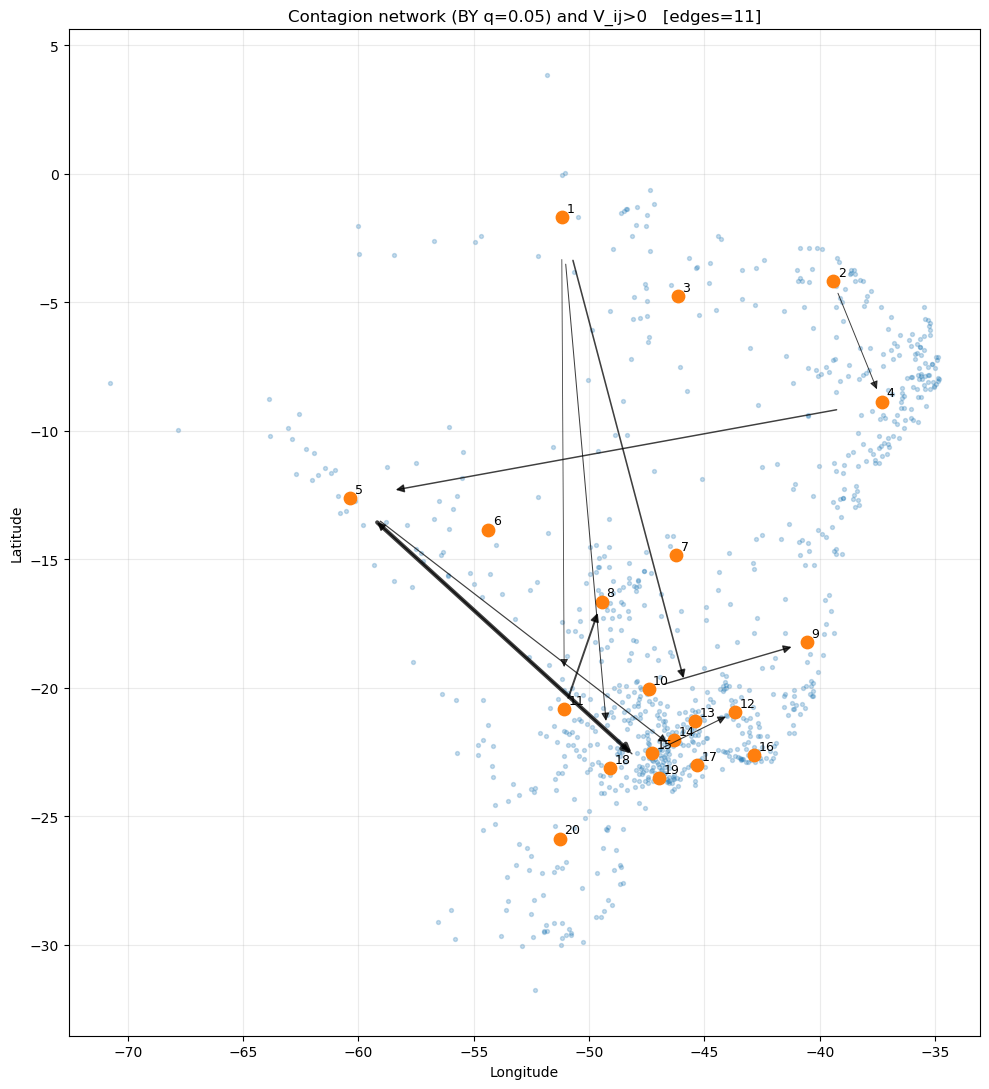}
  \caption{Directed network for the post-2020/9/28 subsample with \(\gamma=1\) and \(r=7\). All edges shown here remain selected in the \(r=8\) specification. Edges are selected from positive off-diagonal VAR coefficients using one-sided \(t\)-tests with Benjamini--Yekutieli false discovery rate control at the 5\% level.}
  \label{fig:baseline_network}
\end{figure}

Figure~\ref{fig:baseline_network} visualizes the network for \(r=7\), all of whose edges remain selected in the \(r=8\) specification, while Table~\ref{tab:common_edges_baseline}
lists the edges common to both specifications.
Each directed edge corresponds to a positive off-diagonal VAR coefficient selected by one-sided \(t\)-tests with Benjamini--Yekutieli false discovery rate control at the 5\% level. An arrow \(c' \to c\) indicates a Granger-predictive relation in the idiosyncratic VAR component \citep{Granger1969,Lutkepohl2005}: past distributional innovations in region \(c'\) help predict current innovations in region \(c\) after partialling out the common factor movements.

\begin{table}[t]
\centering
\caption{Directed edges selected in both \(r=7\) and \(r=8\) under the post-2020/9/28 weak-prior baseline \((\gamma=1)\). Region labels follow the representative-city convention in Table~\ref{tab:summary_by_rep}.}
\label{tab:common_edges_baseline}
\begin{tabular}{ccllrr}
\hline
Source ID & Target ID & Source rep. city & Target rep. city & Coef. (\(r=8\)) & \(t\)-stat (\(r=8\)) \\
\hline
5  & 19 & Cuiaba         & Sao Paulo             & 0.191 & 5.806 \\
11 & 8  & Catanduva      & Goiania               & 0.203 & 4.069 \\
1  & 13 & Belem          & Tres Coracoes         & 0.125 & 3.999 \\
4  & 5  & Recife         & Cuiaba                & 0.115 & 3.534 \\
10 & 9  & Ribeirao Preto & Campos dos Goytacazes & 0.171 & 3.805 \\
19 & 5  & Sao Paulo      & Cuiaba                & 0.237 & 3.783 \\
5  & 17 & Cuiaba         & Taubate               & 0.183 & 3.330 \\
15 & 12 & Campinas       & Juiz de Fora          & 0.226 & 3.734 \\
1  & 18 & Belem          & Itapeva               & 0.124 & 4.189 \\
2  & 4  & Fortaleza      & Recife                & 0.154 & 3.703 \\
1  & 11 & Belem          & Catanduva             & 0.080 & 3.691 \\
\hline
\end{tabular}
\end{table}

\begin{table}[t]
\centering
\caption{Outgoing-edge counts by source region under the post-2020/9/28 weak-prior baseline \((\gamma=1)\).}
\label{tab:sender_summary_baseline}
\begin{tabular}{cccc}
\hline
Source ID & Representative city & Outgoing at \(r=7\) & Outgoing at \(r=8\) \\
\hline
1  & Belem          & 3 & 3 \\
5  & Cuiaba         & 2 & 2 \\
2  & Fortaleza      & 1 & 2 \\
4  & Recife         & 1 & 2 \\
10 & Ribeirao Preto & 1 & 2 \\
11 & Catanduva      & 1 & 5 \\
15 & Campinas       & 1 & 1 \\
19 & Sao Paulo      & 1 & 2 \\
\hline
\end{tabular}
\end{table}

Two features stand out. 
First, the \(r=7\) network remains fully preserved in the broader \(r=8\) specification: all 11 edges selected at \(r=7\) remain selected at \(r=8\), with 9 additional edges. 
By contrast, the corresponding full-sample specifications yield only one selected edge in each case. This reinforces the conclusion of Section~\ref{sec:emp_period_prior}: once the first six months are excluded and the density prior is kept weak, the data contain a nontrivial idiosyncratic VAR network rather than a pattern driven solely by pervasive common movements.

Second, sender importance should be judged not only by the number of outgoing edges in a single specification, but also by whether those edges remain selected across nearby large-\(r\) specifications. The region represented by Belem has three outgoing edges at both \(r=7\) and \(r=8\), directed toward the regions represented by Tres Coracoes, Itapeva, and Catanduva. Although the representative city of region 1 is Belem, this northern region also contains Manaus. Thus, the persistent outgoing role of region 1 is broadly consistent with epidemiological accounts emphasizing the importance of the Manaus region during the spread of the Gamma (P.1) variant \citep{Faria2021Science,Sabino2021Lancet,Banho2022CommunMed}.

Cuiaba also appears as a stable sender. Its outgoing edge count is two at both \(r=7\) and \(r=8\), and its links to Sao Paulo and Taubate remain selected in both specifications. In addition, one of these links is among the strongest selected edges in the table. By contrast, the region represented by Catanduva has the largest outgoing count at \(r=8\), but not at \(r=7\). This shows that the source region with the largest number of outgoing edges in a single specification need not be the most robust sender across nearby large-\(r\) specifications.

Taken together, Figure~\ref{fig:baseline_network} and Tables~\ref{tab:common_edges_baseline}--\ref{tab:sender_summary_baseline} indicate that the post-2020/9/28 weak-prior baseline yields a network with multiple sender regions. Within this pattern, the northern region represented by Belem (including Manaus) and the inland region represented by Cuiaba stand out because their outgoing roles remain visible across both \(r=7\) and \(r=8\).

At the same time, the estimated network should be interpreted with caution. Our analysis does not model infections or case counts directly; instead, it studies lagged dependence in weekly Ct-value distributions. Hence, the detected edges should not be read as literal flows of infected individuals across regions. Rather, they capture predictive dependence in how the shapes of the observed viral-load distributions evolve across regions. Since Ct distributions depend on who is tested and when samples are collected, the network may reflect variation in testing composition and timing as well as transmission dynamics. Even so, the estimated network provides evidence of geographically structured idiosyncratic dependence in the evolution of Ct-value distributions across Brazil.

\subsection{Summary of empirical findings}
\label{sec:emp_summary}

The empirical analysis yields three main findings. First, when the full sample from March 2020 onward is used, the directed network implied by the idiosyncratic VAR component remains weak for all values of the prior-strength parameter considered here. Second, once the first six months are excluded and the sample starts on 28 September 2020, a substantial number of directed edges emerge under weak priors, especially for \(\gamma=0.5\) and \(\gamma=1\), when the number of factors is sufficiently large. Third, under the post-2020/9/28 weak-prior baseline with \(\gamma=1\), the large-\(r\) results reveal a network with multiple sender regions, among which the northern region represented by Belem (including Manaus) and the inland region represented by Cuiaba remain visible across both \(r=7\) and \(r=8\).

Taken together, these results indicate that the Ct-value data contain a visible idiosyncratic VAR structure only after the early pandemic phase is excluded and the density prior is kept weak enough to preserve region-specific variation. Thus, the disappearance of detected edges in some specifications should not be viewed as a mechanical failure of the factor-adjusted VAR framework, but rather as a consequence of strong common movements and excessive shrinkage in the density estimation step.
\section{Conclusion}
\label{sec:conc}
This paper proposed a density-valued VAR model with latent factors for multivariate time series of density functions observed across units.
The new method combines B-spline mixture density estimation, a generalized logit transformation, an isometric inner-product structure for the transformed coordinates, latent factor decomposition, and cross-unit VAR dependence in a unified framework. This construction allows the separation between strong common movements and directed idiosyncratic distributional dynamics, while summarizing the latter as a Granger-predictive network.

On the theoretical side, we established large-sample properties of the factor-adjusted VAR estimator in the transformed Euclidean representation. Under a joint asymptotic regime in which the basis dimension and the time dimension diverge with the number of regions fixed, we derived its rate of convergence and, under stronger conditions, asymptotic normality for coefficient-wise inference. These results provide a formal basis for testing directed lagged dependence after removing latent common factors.

The simulation study showed that increasing the number of estimated factors does not mechanically eliminate a genuine idiosyncratic VAR network. When the true VAR component is absent, spuriously detected edges largely disappear once enough factors are included. When genuine idiosyncratic dependence is present, however, recall remains positive even for large values of the estimated factor dimension, while false discoveries decline sharply. This distinction is important for interpreting empirical results in settings with strong common movements.

In the empirical application to weekly regional distributions of SARS-CoV-2 Ct values in Brazil, the detected network depended crucially on both the sample period and the strength of the density prior. In the full sample, only a weak directed network was detected. By contrast, once the first six months of the pandemic were excluded and the nationwide weekly prior was kept weak, a substantial idiosyncratic network became visible. Under the post-2020/9/28 weak-prior baseline, the large-\(r\) results revealed stable outgoing links from the northern region represented by Belem, which includes Manaus, and from Cuiab\'a toward southeastern metropolitan areas.

These findings show that the proposed framework can recover directed idiosyncratic dependence in multivariate time series of density functions after strong common movements are removed. 
More broadly, the paper provides a tractable approach to analyzing time series of distributions when common factors and directed idiosyncratic dependence are both present. 
Although the empirical application treats regions as the observational units, the same framework can also be applied when the units represent different variables, demographic groups, or other collections of density-valued observations over time.
A natural next step is to extend the present VAR structure beyond scalar coefficients multiplying the transformed density coordinates as a whole, for example by allowing component-specific effects or more general cross-component interactions in the transformed space. Such extensions would broaden the class of directed distributional dynamics that can be represented within the same density-valued time-series framework.

\section{Appendix}
This Appendix collects auxiliary technical results and 
the proofs of Theorems~\ref{thm:rate_beta} and \ref{thm:asy_normal_beta} in the main text. The proofs follow the algebraic template of \cite{Bai2009}, 
adapted to the present density-valued setting with PCA carried out along the time direction.

Throughout the Appendix, for a matrix \(A\), \(\|A\|\) denotes its operator norm and \(\|A\|_F\) its Frobenius norm. For a vector \(x\), \(\|x\|\) denotes its Euclidean norm.

\subsection{Proofs}
\begin{proof}[Proof of Theorem~\ref{thm:rate_beta}]
Let \(N:=CJ\) and \(T_0:=T-p\), and assume that \(N\to\infty\), \(T_0\to\infty\), and
\(N/T_0\to \rho\ge 0\).

By the projection-replacement argument established before the condition \(N/T_0\to0\) is invoked in Corollary~\ref{cor:linear_representation_routeA}, we have
\begin{align*}
\sqrt{NT_0}\,(\hat\beta-\beta_0)
&=
D(\hat\Lambda)^{-1}
\frac{1}{\sqrt{NT_0}}
\sum_{t=p+1}^T
\Bigg[
\tilde W_t' M_{\Pi_0}
-
\Big\{
\frac{1}{T_0}\sum_{s=p+1}^T a_{ts}\tilde W_s'
\Big\}M_{\Pi_0}
\Bigg]\tilde\varepsilon_t \\
&\qquad
+\sqrt{\frac{N}{T_0}}\,\zeta_{NT_0}
+o_p(1),
\end{align*}
where
\[
a_{ts}
=
f_t^{0\prime}
\Big(
F_0'F_0/T_0
\Big)^{-1}
f_s^0,
\qquad
F_0=(f_{p+1}^0,\ldots,f_T^0)',
\]
and
\[
\Pi_0
=
\tilde\Lambda_0(\tilde\Lambda_0'\tilde\Lambda_0)^{-1}\tilde\Lambda_0'.
\]

By Assumption~A(ii),
\[
\inf_{\tilde\Lambda\in\mathcal F}\lambda_{\min}\{D(\tilde\Lambda)\}\ge c>0
\]
with probability approaching one. Since \(\hat\Lambda\in\mathcal F\), it follows that
\[
D(\hat\Lambda)^{-1}=O_p(1).
\]

Next consider the score term
\[
S_{NT_0}
:=
\frac{1}{\sqrt{NT_0}}
\sum_{t=p+1}^T
\Bigg[
\tilde W_t' M_{\Pi_0}
-
\Big\{
\frac{1}{T_0}\sum_{s=p+1}^T a_{ts}\tilde W_s'
\Big\}M_{\Pi_0}
\Bigg]\tilde\varepsilon_t .
\]
Since \(\|M_{\Pi_0}\|\le 1\), we have
\begin{align*}
\|S_{NT_0}\|
&\le
\Bigg\|
\frac{1}{\sqrt{NT_0}}
\sum_{t=p+1}^T
\tilde W_t'\tilde\varepsilon_t
\Bigg\|
+
\Bigg\|
\frac{1}{\sqrt{NT_0}}
\sum_{t=p+1}^T
\Big\{
\frac{1}{T_0}\sum_{s=p+1}^T a_{ts}\tilde W_s'
\Big\}\tilde\varepsilon_t
\Bigg\|.
\end{align*}
The first term is \(O_p(1)\) by Assumptions~A--D and the same second-moment argument as in Step~1 of the proof of Lemma~\ref{lem:score_replacement_routeA}. For the second term, Assumption~B(i) implies that the coefficients \(a_{ts}\) are uniformly bounded. Therefore, by Assumption~C(vi),
\[
\Bigg\|
\frac{1}{\sqrt{NT_0}}
\sum_{t=p+1}^T
\Big\{
\frac{1}{T_0}\sum_{s=p+1}^T a_{ts}\tilde W_s'
\Big\}\tilde\varepsilon_t
\Bigg\|
=O_p(1).
\]
Therefore
\[
S_{NT_0}=O_p(1).
\]

Moreover, \(\zeta_{NT_0}=O_p(1)\) by Lemma~\ref{lem:bias_bound}. Since \(N/T_0\to \rho\ge 0\), we have
\[
\sqrt{\frac{N}{T_0}}\,\zeta_{NT_0}=O_p(1).
\]

Combining these bounds yields
\[
\sqrt{NT_0}\,(\hat\beta-\beta_0)=O_p(1),
\]
which completes the proof.
\end{proof}

\begin{proof}[Proof of Theorem~\ref{thm:asy_normal_beta}]
Let \(N:=CJ\) and \(T_0:=T-p\). Define
\[
H_0:=\frac{1}{N}\tilde\Lambda_0'\tilde\Lambda_0,
\qquad
\bar\Lambda_0:=\tilde\Lambda_0 H_0^{-1/2},
\qquad
\Pi_0:=\frac{1}{N}\bar\Lambda_0\bar\Lambda_0'
=
\tilde\Lambda_0(\tilde\Lambda_0'\tilde\Lambda_0)^{-1}\tilde\Lambda_0'.
\]
By Corollary~\ref{cor:linear_representation_routeA}, we have the linear expansion
\begin{equation}
\sqrt{NT_0}\,(\hat\beta-\beta_0)
=
D(\hat\Lambda)^{-1}
\left(
\frac{1}{\sqrt{NT_0}}\sum_{t=p+1}^T Z_t(\bar\Lambda_0)'\tilde\varepsilon_t
\right)
+o_p(1),
\label{eq:thm2_lin}
\end{equation}
where \(Z_t(\bar\Lambda_0)\) is defined as in Assumption~A(ii).

By Proposition~\ref{prop:consistency_ours}(ii),
\[
\|P_{\hat\Lambda}-\Pi_0\|\xrightarrow{p}0,
\]
and hence
\[
\|M_{\hat\Lambda}-M_{\Pi_0}\|\xrightarrow{p}0.
\]
Hence, by the definition of \(Z_t(\Lambda)\), we obtain
\[
\|D(\hat\Lambda)-D(\bar\Lambda_0)\|=o_p(1).
\]
Combined with Assumption~E(i), this yields
\[
D(\hat\Lambda)\xrightarrow{p}D_0.
\]
Since \(D_0\succ0\), we further obtain
\[
D(\hat\Lambda)^{-1}\xrightarrow{p}D_0^{-1}.
\]
By Assumption~E(ii),
\[
\frac{1}{\sqrt{NT_0}}\sum_{t=p+1}^T Z_t(\bar\Lambda_0)'\tilde\varepsilon_t
\xrightarrow{d}\mathcal N(0,\Omega_0).
\]
Therefore, Slutsky's theorem (e.g., \citealp[Theorem 2.7]{vanderVaart1998})
applied to \eqref{eq:thm2_lin} yields
\[
\sqrt{NT_0}\,(\hat\beta-\beta_0)
\ \xrightarrow{d}\
\mathcal N\!\left(0,\ D_0^{-1}\Omega_0 D_0^{-1}\right).
\]
This completes the proof.
\end{proof}

\subsection{Auxiliary results}
\begin{lemma}\label{lem:projection_bounds}
Let
\[
\tilde Y_t=\tilde W_t\beta_0+\tilde\Lambda_0 f_t^0+\tilde\varepsilon_t,
\qquad t=p+1,\ldots,T,
\]
where \(\tilde W_t=(\tilde X_{t-1},\ldots,\tilde X_{t-p})\) and
\(\tilde X_t=I_C\otimes(\tilde Y_{t,1},\ldots,\tilde Y_{t,C})\).
Let \(N=CJ\) and, for any \(\tilde\Lambda\in\mathcal F:=\{\Lambda\in\mathbb R^{N\times r}:N^{-1}\Lambda'\Lambda=I_r\}\), define
\[
P_{\tilde\Lambda}=N^{-1}\tilde\Lambda\tilde\Lambda',
\qquad
M_{\tilde\Lambda}=I_N-P_{\tilde\Lambda}.
\]
Assume Assumptions A(i), B, C, and D in the main text. Then, as \(J,T\to\infty\) with \(C\) fixed,
\begin{enumerate}[(i)]
\item \label{it:A1_1}
\[
\sup_{\tilde\Lambda\in\mathcal F}
\left\|
\frac{1}{N(T-p)}
\sum_{t=p+1}^T
\tilde W_t' M_{\tilde\Lambda}\tilde\varepsilon_t
\right\|
=o_p(1).
\]

\item \label{it:A1_2}
\[
\sup_{\tilde\Lambda\in\mathcal F}
\left|
\frac{1}{N(T-p)}
\sum_{t=p+1}^T
f_t^{0\prime}\tilde\Lambda_0' M_{\tilde\Lambda}\tilde\varepsilon_t
\right|
=o_p(1).
\]

\item \label{it:A1_3}
\[
\sup_{\tilde\Lambda\in\mathcal F}
\left|
\frac{1}{N(T-p)}
\sum_{t=p+1}^T
\tilde\varepsilon_t'P_{\tilde\Lambda}\tilde\varepsilon_t
\right|
=o_p(1).
\]
\end{enumerate}
\end{lemma}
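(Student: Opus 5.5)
The plan is to follow the Bai (2009) uniform-in-$\tilde\Lambda$ argument. The key device is to write $P_{\tilde\Lambda}=N^{-1}\tilde\Lambda\tilde\Lambda'$ with $\|N^{-1/2}\tilde\Lambda\|=1$ on $\mathcal F$. Every term that depends on $\tilde\Lambda$ can then be bounded by an operator-norm quantity that no longer involves $\tilde\Lambda$. Each supremum is thereby reduced to a single random quantity, and we show that quantity is $o_p(1)$ using second-moment bounds from Assumptions C and D.

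\textbf{Parts (i) and (ii).} For (i), split $M_{\tilde\Lambda}=I_N-P_{\tilde\Lambda}$.

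The $I_N$ part is $(NT_0)^{-1}\sum_t\tilde W_t'\tilde\varepsilon_t$, which does not depend on $\tilde\Lambda$. By Assumption D, its summands form a martingale difference sequence, because $\tilde W_t$ is a function of $\{\tilde Y_s:s<t\}$. Its second moment is therefore $(NT_0)^{-2}\sum_tE\|\tilde W_t'\tilde\varepsilon_t\|^2$. Here $E\|\tilde W_t'\tilde\varepsilon_t\|^2=E\,\mathrm{tr}(\tilde W_t'\Sigma_{\varepsilon,t}\tilde W_t)\le K\,E\|\tilde W_t\|_F^2$, where conditioning on the past is applied to the fourth-moment structure via C(ii)--(iii). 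The regressor bound $E\|\tilde W_t\|_F^2=O(N)$ uniformly in $t$ comes from the stability condition A(i): invert the VAR as $\tilde Y_{t,\cdot}=\sum_{j\ge0}\Phi_j(\Lambda f_{t-j}+\varepsilon_{t-j})$ with geometrically decaying $\|\Phi_j\|\le\kappa^j$, and use B(i)--(ii) and C(iii). The second moment is then $O(1/(NT_0))=o(1)$.

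For the $P_{\tilde\Lambda}$ part, write it as $(NT_0)^{-1}\sum_t\tilde W_t'N^{-1}\tilde\Lambda\tilde\Lambda'\tilde\varepsilon_t$. Each column of $\tilde W_t$ is a vector in $\mathbb R^N$, so a typical entry is $N^{-1}\,\mathrm{tr}\big(\tilde\Lambda'\,(NT_0)^{-1}\sum_t\tilde\varepsilon_t w_{t,k}'\,\tilde\Lambda\big)$, where $w_{t,k}$ is the $k$th column of $\tilde W_t$. Its modulus is at most $r\,\|(NT_0)^{-1}\sum_t\tilde\varepsilon_tw_{t,k}'\|$. This holds because $\|N^{-1}\tilde\Lambda\tilde\Lambda'\|=1$ and $|\mathrm{tr}(A'BA)|\le r\|A\|^2\|B\|$ for $A$ with $r$ columns. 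I will bound the operator norm by the Frobenius norm, whose square has expectation $(NT_0)^{-2}\sum_t E\|\tilde\varepsilon_t\|^2\|w_{t,k}\|^2$, again using the martingale difference structure. This expectation is $O(N\cdot N\cdot T_0)/(NT_0)^2=O(1/T_0)\to0$.

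Part (ii) is handled identically. Its $I_N$ part, $(NT_0)^{-1}\sum_tf_t^{0\prime}\tilde\Lambda_0'\tilde\varepsilon_t$, has variance $O(1/(NT_0))$ by C(ii)--(iii) and B. For the $P_{\tilde\Lambda}$ part, use $|f_t'\tilde\Lambda_0'P_{\tilde\Lambda}\tilde\varepsilon_t|$ summed inside a trace, bounding it by $r\|N^{-1/2}\tilde\Lambda_0\|\,\|(NT_0)^{-1}\sum_t\tilde\varepsilon_tf_t^{0\prime}\|\sqrt N$. The Frobenius norm of $(NT_0)^{-1}\sum_t\tilde\varepsilon_tf_t^{0\prime}$ is $O_p((NT_0)^{-1}\sqrt{NT_0})=O_p((NT_0)^{-1/2})$, so the product is $O_p(T_0^{-1/2})$.

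\textbf{Part (iii).} Here $\tilde\varepsilon_t'P_{\tilde\Lambda}\tilde\varepsilon_t=N^{-1}\mathrm{tr}\big(\tilde\Lambda'\tilde\varepsilon_t\tilde\varepsilon_t'\tilde\Lambda\big)$, so the supremum is at most $rN^{-1}\|T_0^{-1}\sum_t\tilde\varepsilon_t\tilde\varepsilon_t'\|$. Decompose the inner matrix as $T_0^{-1}\sum_t\Sigma_{\varepsilon,t}$ plus the centered part $T_0^{-1}\sum_t(\tilde\varepsilon_t\tilde\varepsilon_t'-\Sigma_{\varepsilon,t})$. The first has norm at most $K$ by C(iii), giving a contribution $O(1/N)$. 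For the centered part I would bound the operator norm by the Frobenius norm. Using independence over $t$ (C(ii)) and the fourth moments (C(i)), its expected squared Frobenius norm is $T_0^{-2}\sum_t\sum_{i,j}\mathrm{Var}(\tilde\varepsilon_{it}\tilde\varepsilon_{jt})=O(N^2/T_0)$. The contribution is then $O_p(N^{-1}\cdot N/\sqrt{T_0})=O_p(T_0^{-1/2})$, which suffices.

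\textbf{Main obstacle.} The delicate step is the regressor bound in (i). The regressors $\tilde W_t$ are internally generated and contain the nonstationary factor component, so I must verify $\sup_tE\|\tilde W_t\|_F^2=O(N)$ and, more importantly, control the cross moments $E\|\tilde\varepsilon_t\|^2\|w_{t,k}\|^2$. The conditioning on the past gives $E[\|\tilde\varepsilon_t\|^2\mid\mathcal F_{t-1}]$, but C(ii) only gives independence of $\tilde\varepsilon_t$ across time, not of $\tilde\varepsilon_t$ from the past. I would try to combine the independence of $\tilde\varepsilon_t$ from $\{\tilde\varepsilon_s\}_{s<t}$, which holds because the past is generated by $\tilde\varepsilon_{s<t}$ and the deterministic factors, with the MA($\infty$) representation under A(i) to factor this expectation. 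Failing that, I would use Cauchy--Schwarz with the $4+\delta$ moments.
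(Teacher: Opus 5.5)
Your proposal is correct, but it handles the projection terms differently from the paper. Part (iii) and the $I_N$ halves of (i) and (ii) coincide with the paper's proof. Both use the trace bound $(r/N)\|T_0^{-1}\sum_t\tilde\varepsilon_t\tilde\varepsilon_t'\|$, the mean/centred split with an $O(N^2/T_0)$ Frobenius bound, and the martingale-difference variance computation.

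The difference is in the $P_{\tilde\Lambda}$ halves. The paper proves (iii) first and then disposes of both projection terms by Cauchy--Schwarz across $t$:
\[
\Big\|\frac{1}{NT_0}\sum_{t}\tilde W_t'P_{\tilde\Lambda}\tilde\varepsilon_t\Big\|
\le
\Big(\frac{1}{NT_0}\sum_{t}\|\tilde W_t\|_F^2\Big)^{1/2}
\Big(\frac{1}{NT_0}\sum_{t}\tilde\varepsilon_t'P_{\tilde\Lambda}\tilde\varepsilon_t\Big)^{1/2},
\]
and analogously with $\tilde\Lambda_0 f_t^0$ in place of $\tilde W_t$. The only $\tilde\Lambda$-dependent factor is then the one already controlled uniformly in (iii). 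You instead write each entry as $\mathrm{tr}(U'B_kU)$ with $U=N^{-1/2}\tilde\Lambda$ and a $\tilde\Lambda$-free matrix $B_k=(NT_0)^{-1}\sum_t\tilde\varepsilon_t w_{t,k}'$, and you bound $\|B_k\|_F$ directly.

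\emph{What the paper's route buys.} It is cheaper in assumptions. For the projection terms it needs only $(NT_0)^{-1}\sum_t\|\tilde W_t\|_F^2=O_p(1)$ and C(i)--(iii). It uses no martingale cancellation and no joint moment of $\tilde\varepsilon_t$ and $\tilde W_t$. The cost is that taking the square root of (iii) leaves only $O_p(N^{-1/2}+T_0^{-1/4})$.

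\emph{What your route buys.} It gives the sharper uniform rate $O_p(T_0^{-1/2})$. For (ii) this comes at no extra cost, since $f_t^0$ is deterministic. For (i) the price is the mixed moment $E\|\tilde\varepsilon_t\|^2\|w_{t,k}\|^2=O(N^2)$. You obtain it correctly, either from independence of $\tilde\varepsilon_t$ and $\{\tilde Y_s:s<t\}$ or from Cauchy--Schwarz with an $L^4$ bound on $\tilde W_t$.

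The obstacle you flag is not specific to your route. The paper's $I_N$ step (``conditioning on $\tilde W_t$ and using $E(\tilde\varepsilon_t\tilde\varepsilon_t')=\Sigma_{\varepsilon,t}$'') silently uses the same independence, so your justification is one the paper needs as well.

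One small correction concerns the regressor bound. The MA$(\infty)$ inversion is awkward here: $f_t^0$ is defined only for $t\ge p+1$, and for $p>1$ the bound $\|\Phi_j\|\le\kappa^j$ can fail (e.g.\ $V_1=0$, $V_2=\kappa$ gives $\Phi_2=\kappa$). A forward recursion is cleaner, given initial values of the same order:
\[
(E\|\tilde Y_t\|^2)^{1/2}\le\kappa\max_{k\le p}(E\|\tilde Y_{t-k}\|^2)^{1/2}+O(\sqrt N),
\]
using $\|V_k\otimes I_J\|=\|V_k\|$ and A(i).
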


\begin{proof}[Proof of Lemma~\ref{lem:projection_bounds}]
Write \(T_0:=T-p\). Fix \(r\) and let
\[
\mathcal F=\{\tilde\Lambda\in\mathbb R^{N\times r}:N^{-1}\tilde\Lambda'\tilde\Lambda=I_r\}.
\]
For \(\tilde\Lambda\in\mathcal F\), define
\[
P_{\tilde\Lambda}=N^{-1}\tilde\Lambda\tilde\Lambda',
\qquad
M_{\tilde\Lambda}=I_N-P_{\tilde\Lambda}.
\]

We first prove part (iii), which is then used in the proofs of parts (i) and (ii).

\textit{(iii).}
For \(\tilde\Lambda\in\mathcal F\), write
\[
P_{\tilde\Lambda}=UU',
\qquad
U:=N^{-1/2}\tilde\Lambda\in\mathbb R^{N\times r},
\qquad
U'U=I_r.
\]
Then
\[
S_3(\tilde\Lambda)
=
\frac{1}{NT_0}
\sum_{t=p+1}^T
\tilde\varepsilon_t'P_{\tilde\Lambda}\tilde\varepsilon_t
=
\frac{1}{N}\operatorname{tr}\!\left[
U'
\left(
\frac{1}{T_0}\sum_{t=p+1}^T \tilde\varepsilon_t\tilde\varepsilon_t'
\right)
U
\right].
\]
Hence, for
\[
A:=\frac{1}{T_0}\sum_{t=p+1}^T \tilde\varepsilon_t\tilde\varepsilon_t',
\]
we have
\[
|S_3(\tilde\Lambda)|
\le
\frac{1}{N}\,\|U'AU\|_*,
\]
where \(\|\cdot\|_*\) denotes the nuclear norm. Since \(U'AU\) is an \(r\times r\) matrix,
\[
\|U'AU\|_*\le r\|U'AU\|\le r\|A\|,
\]
and therefore
\[
\sup_{\tilde\Lambda\in\mathcal F}|S_3(\tilde\Lambda)|
\le
\frac{r}{N}
\left\|
\frac{1}{T_0}\sum_{t=p+1}^T \tilde\varepsilon_t\tilde\varepsilon_t'
\right\|.
\]

Now decompose
\[
\frac{1}{T_0}\sum_{t=p+1}^T \tilde\varepsilon_t\tilde\varepsilon_t'
=
\bar\Sigma_\varepsilon
+
\frac{1}{T_0}\sum_{t=p+1}^T
\big(\tilde\varepsilon_t\tilde\varepsilon_t'-\Sigma_{\varepsilon,t}\big),
\qquad
\bar\Sigma_\varepsilon:=\frac{1}{T_0}\sum_{t=p+1}^T \Sigma_{\varepsilon,t}.
\]
By Assumption~C(iii),
\[
\|\bar\Sigma_\varepsilon\|
\le
\frac{1}{T_0}\sum_{t=p+1}^T \|\Sigma_{\varepsilon,t}\|
\le K.
\]
Hence
\[
\frac{r}{N}\|\bar\Sigma_\varepsilon\|=O(N^{-1})=o(1).
\]

For the centered part, define
\[
C_t:=\tilde\varepsilon_t\tilde\varepsilon_t'-\Sigma_{\varepsilon,t},
\qquad t=p+1,\ldots,T.
\]
By Assumption~C(ii), \(\{C_t\}\) are independent across \(t\) and \(E(C_t)=0\). Using \(\|A\|\le \|A\|_F\),
\[
\left\|
\frac{1}{T_0}\sum_{t=p+1}^T C_t
\right\|
\le
\left\|
\frac{1}{T_0}\sum_{t=p+1}^T C_t
\right\|_F.
\]
Moreover,
\[
E\left\|
\frac{1}{T_0}\sum_{t=p+1}^T C_t
\right\|_F^2
=
\frac{1}{T_0^2}\sum_{t=p+1}^T E\|C_t\|_F^2.
\]
Now
\[
\|C_t\|_F^2
\le
2\|\tilde\varepsilon_t\tilde\varepsilon_t'\|_F^2
+
2\|\Sigma_{\varepsilon,t}\|_F^2.
\]
Since
\[
\|\tilde\varepsilon_t\tilde\varepsilon_t'\|_F^2
=
(\tilde\varepsilon_t'\tilde\varepsilon_t)^2
=
\Big(\sum_{i=1}^N \tilde\varepsilon_{it}^2\Big)^2
=
\sum_{i=1}^N\sum_{j=1}^N \tilde\varepsilon_{it}^2\tilde\varepsilon_{jt}^2,
\]
Assumption~C(i) and Hölder's inequality imply
\[
E\|\tilde\varepsilon_t\tilde\varepsilon_t'\|_F^2
\le
\sum_{i=1}^N\sum_{j=1}^N
\big(E|\tilde\varepsilon_{it}|^4\big)^{1/2}
\big(E|\tilde\varepsilon_{jt}|^4\big)^{1/2}
\le
K N^2.
\]
Also, since \(\|\Sigma_{\varepsilon,t}\|_F\le \sqrt N\,\|\Sigma_{\varepsilon,t}\|\),
Assumption~C(iii) gives
\[
\|\Sigma_{\varepsilon,t}\|_F^2\le N\|\Sigma_{\varepsilon,t}\|^2\le K N.
\]
Therefore
\[
E\|C_t\|_F^2\le K N^2
\]
uniformly in \(t\), and hence
\[
E\left\|
\frac{1}{T_0}\sum_{t=p+1}^T C_t
\right\|_F^2
\le
\frac{K N^2}{T_0}.
\]
It follows that
\[
\left\|
\frac{1}{T_0}\sum_{t=p+1}^T C_t
\right\|
=
O_p\!\left(\frac{N}{\sqrt{T_0}}\right).
\]
Consequently,
\[
\frac{r}{N}
\left\|
\frac{1}{T_0}\sum_{t=p+1}^T C_t
\right\|
=
O_p(T_0^{-1/2})
=
o_p(1).
\]

Combining the deterministic and centered parts, we obtain
\[
\sup_{\tilde\Lambda\in\mathcal F}|S_3(\tilde\Lambda)|
\le
\frac{r}{N}
\left\|
\frac{1}{T_0}\sum_{t=p+1}^T \tilde\varepsilon_t\tilde\varepsilon_t'
\right\|
=
o_p(1).
\]

\textit{(i).}
Decompose \(S_1(\tilde\Lambda)\) using \(M_{\tilde\Lambda}=I_N-P_{\tilde\Lambda}\):
\[
S_1(\tilde\Lambda)
=
\frac{1}{NT_0}\sum_{t=p+1}^T \tilde W_t'\tilde\varepsilon_t
-
\frac{1}{NT_0}\sum_{t=p+1}^T \tilde W_t'P_{\tilde\Lambda}\tilde\varepsilon_t
=:A_1-B_1(\tilde\Lambda).
\]

\textit{Step 1: the non-projection term.}
By Assumptions C(ii) and D, the error \(\tilde\varepsilon_t\) is mean independent of past outcomes, and hence of \(\tilde W_t\), which is measurable with respect to \(\{\tilde Y_s:s<t\}\). Therefore \(E(A_1)=0\), and
\[
E\|A_1\|^2
=
\frac{1}{N^2T_0^2}
\sum_{t=p+1}^T
E\!\left[
\tilde\varepsilon_t'\tilde W_t\tilde W_t'\tilde\varepsilon_t
\right].
\]
Conditioning on \(\tilde W_t\) and using \(E(\tilde\varepsilon_t\tilde\varepsilon_t')=\Sigma_{\varepsilon,t}\),
\[
E\!\left[
\tilde\varepsilon_t'\tilde W_t\tilde W_t'\tilde\varepsilon_t
\,\middle|\, \tilde W_t
\right]
=
\operatorname{tr}(\Sigma_{\varepsilon,t}\tilde W_t\tilde W_t')
\le
\|\Sigma_{\varepsilon,t}\|\,\|\tilde W_t\|_F^2
\le
K\|\tilde W_t\|_F^2.
\]
Since \(\tilde W_t\) is \(N\times (pC^2)\) and \(C,p\) are fixed, Assumption A(i) together with Assumptions B--D implies
\[
E\|\tilde W_t\|_F^2 = O(N)
\]
uniformly in \(t\). Therefore
\[
E\|A_1\|^2
\le
\frac{1}{N^2T_0^2}\sum_{t=p+1}^T K\cdot O(N)
=
O\!\left(\frac{1}{NT_0}\right)\to 0,
\]
so \(A_1=o_p(1)\).

\textit{Step 2: the projection term, uniformly in \(\tilde\Lambda\).}
By Cauchy--Schwarz,
\[
\|B_1(\tilde\Lambda)\|
\le
\left(
\frac{1}{NT_0}\sum_{t=p+1}^T \|\tilde W_t\|_F^2
\right)^{1/2}
\left(
\frac{1}{NT_0}\sum_{t=p+1}^T \tilde\varepsilon_t'P_{\tilde\Lambda}\tilde\varepsilon_t
\right)^{1/2}.
\]
The first factor is \(O_p(1)\) because
\[
\frac{1}{T_0}\sum_{t=p+1}^T \|\tilde W_t\|_F^2 = O_p(N),
\]
and the second factor is \(o_p(1)\) uniformly in \(\tilde\Lambda\in\mathcal F\) by part (iii). Hence
\[
\sup_{\tilde\Lambda\in\mathcal F}\|B_1(\tilde\Lambda)\|=o_p(1).
\]
Combining Step 1 and Step 2,
\[
\sup_{\tilde\Lambda\in\mathcal F}\|S_1(\tilde\Lambda)\|
\le
\|A_1\|+\sup_{\tilde\Lambda\in\mathcal F}\|B_1(\tilde\Lambda)\|
=o_p(1).
\]

\textit{(ii).}
Similarly, write
\[
S_2(\tilde\Lambda)
=
\frac{1}{NT_0}\sum_{t=p+1}^T f_t^{0\prime}\tilde\Lambda_0'\tilde\varepsilon_t
-
\frac{1}{NT_0}\sum_{t=p+1}^T f_t^{0\prime}\tilde\Lambda_0'P_{\tilde\Lambda}\tilde\varepsilon_t
=:A_2-B_2(\tilde\Lambda).
\]

\textit{Step 1: the non-projection term.}
Since \(\tilde\Lambda_0\) and \(\{f_t^0\}\) are non-stochastic arrays,
\[
E(A_2)=0
\]
by Assumption C(ii), and by independence across \(t\),
\[
E(A_2^2)
=
\frac{1}{N^2T_0^2}
\sum_{t=p+1}^T
E\!\left[
\big(
f_t^{0\prime}\tilde\Lambda_0'\tilde\varepsilon_t
\big)^2
\right].
\]
Moreover,
\[
E\!\left[
\big(
f_t^{0\prime}\tilde\Lambda_0'\tilde\varepsilon_t
\big)^2
\right]
=
f_t^{0\prime}\tilde\Lambda_0'\Sigma_{\varepsilon,t}\tilde\Lambda_0 f_t^0
\le
K\,\|\tilde\Lambda_0\|^2\,\|f_t^0\|^2.
\]
By Assumption B(ii),
\[
\|\tilde\Lambda_0\|^2 = O(N),
\]
and by Assumption B(i),
\[
\frac{1}{T_0}\sum_{t=p+1}^T \|f_t^0\|^2 = O(1).
\]
Hence
\[
E(A_2^2)
\le
\frac{K}{N^2T_0^2}\,\|\tilde\Lambda_0\|^2
\sum_{t=p+1}^T \|f_t^0\|^2
=
O\!\left(\frac{1}{NT_0}\right)
=o(1),
\]
so \(A_2=o_p(1)\).

\textit{Step 2: the projection term, uniformly in \(\tilde\Lambda\).}
By Cauchy--Schwarz,
\[
|B_2(\tilde\Lambda)|
\le
\left(
\frac{1}{NT_0}\sum_{t=p+1}^T \|\tilde\Lambda_0 f_t^0\|^2
\right)^{1/2}
\left(
\frac{1}{NT_0}\sum_{t=p+1}^T \tilde\varepsilon_t'P_{\tilde\Lambda}\tilde\varepsilon_t
\right)^{1/2}.
\]
The first factor is \(O_p(1)\), since
\[
\|\tilde\Lambda_0 f_t^0\|^2
\le
\|\tilde\Lambda_0\|^2\,\|f_t^0\|^2,
\qquad
\|\tilde\Lambda_0\|^2=O(N),
\qquad
\frac{1}{T_0}\sum_{t=p+1}^T \|f_t^0\|^2=O(1),
\]
and the second factor is \(o_p(1)\) uniformly in \(\tilde\Lambda\) by part (iii). Hence
\[
\sup_{\tilde\Lambda\in\mathcal F}|B_2(\tilde\Lambda)|=o_p(1),
\]
and therefore
\[
\sup_{\tilde\Lambda\in\mathcal F}|S_2(\tilde\Lambda)|
\le
|A_2|+\sup_{\tilde\Lambda\in\mathcal F}|B_2(\tilde\Lambda)|
=o_p(1).
\]
This completes the proof.
\end{proof}

\begin{proposition}[Consistency]
\label{prop:consistency_ours}
Consider the transformed model
\[
\tilde Y_t=\tilde W_t\beta_0+\tilde\Lambda_0 f_t^0+\tilde\varepsilon_t,
\qquad t=p+1,\ldots,T,
\]
with \(N:=CJ\) and
\[
\mathcal F:=\{\Lambda\in\mathbb R^{N\times r}:N^{-1}\Lambda'\Lambda=I_r\}.
\]
Let \((\hat\beta,\hat\Lambda)\) minimize
\[
Q(\beta,\tilde\Lambda)
=
\sum_{t=p+1}^T(\tilde Y_t-\tilde W_t\beta)'M_{\tilde\Lambda}(\tilde Y_t-\tilde W_t\beta),
\qquad
M_{\tilde\Lambda}=I_N-N^{-1}\tilde\Lambda\tilde\Lambda'.
\]
Under Assumptions~A--D, as \(J,T\to\infty\) with \(C\) fixed, the following hold:
\begin{enumerate}[(i)]
\item \(\hat\beta\xrightarrow{p}\beta_0\).

\item \(N^{-1}\tilde\Lambda_0'\hat\Lambda\) is invertible with probability approaching one, and
\[
\|P_{\hat\Lambda}-\Pi_0\|\xrightarrow{p}0,
\]
where
\[
P_{\hat\Lambda}=N^{-1}\hat\Lambda\hat\Lambda',
\qquad
\Pi_0:=\tilde\Lambda_0(\tilde\Lambda_0'\tilde\Lambda_0)^{-1}\tilde\Lambda_0'.
\]
\end{enumerate}
\end{proposition}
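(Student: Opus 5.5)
Following Bai (2009, Proposition 1), I will work with the concentrated objective. Define
\[
S(\beta,\tilde\Lambda):=\frac{1}{NT_0}\bigl[Q(\beta,\tilde\Lambda)-Q(\beta_0,\tilde\Lambda_0)\bigr],
\]
where the centering term uses \(\bar\Lambda_0\) (any normalization of \(\tilde\Lambda_0\) in \(\mathcal F\) with the same projection \(\Pi_0\)). Substituting \(\tilde Y_t-\tilde W_t\beta=\tilde W_t(\beta_0-\beta)+\tilde\Lambda_0f_t^0+\tilde\varepsilon_t\) and expanding gives
\[
S(\beta,\tilde\Lambda)=\tilde S(\beta,\tilde\Lambda)+R(\beta,\tilde\Lambda),
\]
with
\[
\tilde S(\beta,\tilde\Lambda)=\frac{1}{NT_0}\sum_t\bigl(\tilde W_t(\beta_0-\beta)+\tilde\Lambda_0f_t^0\bigr)'M_{\tilde\Lambda}\bigl(\tilde W_t(\beta_0-\beta)+\tilde\Lambda_0f_t^0\bigr).
\]
The remainder \(R\) collects the cross terms \(2(NT_0)^{-1}\sum_t(\cdot)'M_{\tilde\Lambda}\tilde\varepsilon_t\) and the difference \((NT_0)^{-1}\sum_t\tilde\varepsilon_t'(P_{\Pi_0}-P_{\tilde\Lambda})\tilde\varepsilon_t\). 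Lemma~\ref{lem:projection_bounds}(i)--(iii) shows that \(R(\beta,\tilde\Lambda)=o_p(1)\) uniformly over \(\tilde\Lambda\in\mathcal F\) and over \(\beta\) in bounded sets. The \(\beta\)-dependent piece of \(R\) is linear in \(\beta-\beta_0\) and multiplied by an \(o_p(1)\) vector, so \(R=o_p(1)(1+\|\beta-\beta_0\|)\) uniformly. Since \(\tilde S(\beta_0,\bar\Lambda_0)=0\) and \(S(\hat\beta,\hat\Lambda)\le0\), it follows that
\[
\tilde S(\hat\beta,\hat\Lambda)\le o_p(1)(1+\|\hat\beta-\beta_0\|).
\]

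The key step is a lower bound on \(\tilde S\) that separates the \(\beta\)-part from the factor part. Write \(\eta=\beta-\beta_0\) and \(\tilde S=\eta'D_1\eta+\theta'B\theta+2\eta'E\theta\) in the notation of Bai. Here \(D_1=(NT_0)^{-1}\sum_t\tilde W_t'M_{\tilde\Lambda}\tilde W_t\), and the factor part is \(\operatorname{tr}[(F_0'F_0/T_0)(\tilde\Lambda_0'M_{\tilde\Lambda}\tilde\Lambda_0/N)]\). Completing the square in the factor direction, i.e.\ projecting the regressor onto the span of the \(f_s^0\) in the time direction, yields
\[
\tilde S\ge \eta'D(\tilde\Lambda)\eta+\text{(nonnegative remainder)}.
\]
Here \(D(\tilde\Lambda)\) is exactly the matrix from Assumption~A(ii), with the weights \(a_{ts}\) implementing that time-direction projection. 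More precisely, I would show
\[
\tilde S(\beta,\tilde\Lambda)=\eta'D(\tilde\Lambda)\eta+\theta(\tilde\Lambda,\eta)'\,\bigl(F_0'F_0/T_0\otimes I\bigr)\,\theta(\tilde\Lambda,\eta),
\]
with \(\theta\) a vectorized combination of \(M_{\tilde\Lambda}\tilde\Lambda_0\) and \(\eta\).

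Assumption~A(ii) then gives \(\tilde S\ge c\|\eta\|^2\). Combined with the upper bound this forces \(c\|\hat\eta\|^2\le o_p(1)(1+\|\hat\eta\|)\), hence \(\hat\eta=o_p(1)\), which proves (i).

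For (ii), with \(\hat\beta\) consistent, \(\tilde S(\hat\beta,\hat\Lambda)=o_p(1)\) together with the bound \(\tilde S\ge\operatorname{tr}[(F_0'F_0/T_0)(\tilde\Lambda_0'M_{\hat\Lambda}\tilde\Lambda_0/N)]-O_p(\|\hat\eta\|)\) gives \(\tilde\Lambda_0'M_{\hat\Lambda}\tilde\Lambda_0/N=o_p(1)\). This uses Assumption~B(i), \(\Sigma_f\succ0\), and the bound \((NT_0)^{-1}\sum_t\|\tilde W_t\|_F^2=O_p(1)\) already established in the proof of Lemma~\ref{lem:projection_bounds}. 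Writing \(G=\tilde\Lambda_0'\hat\Lambda/N\), this says \(H_0-GG'=o_p(1)\), where \(H_0\to\Sigma_\Lambda\succ0\) by Assumption~B(ii). Hence \(GG'\) is asymptotically positive definite, so \(G\) is invertible with probability approaching one. Finally,
\[
\|P_{\hat\Lambda}-\Pi_0\|^2\le\|P_{\hat\Lambda}-\Pi_0\|_F^2=2r-2\operatorname{tr}(P_{\hat\Lambda}\Pi_0),
\]
and \(\operatorname{tr}(P_{\hat\Lambda}\Pi_0)=\operatorname{tr}(G'H_0^{-1}G)\to r\) by the previous display. This proves (ii).

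The main obstacle is the quadratic-form identity and lower bound \(\tilde S\ge c\|\eta\|^2\) uniformly in \(\tilde\Lambda\). The subtlety is that minimizing over the factor part is in the time direction, so the relevant concentration of the \(\eta\)-part uses the \(a_{ts}\) weights rather than Bai's cross-sectional weights. I would first try to verify directly that minimizing the quadratic \(\tilde S\) over the unconstrained \(N\times r\) "loading perturbation" yields precisely \(\eta'D(\tilde\Lambda)\eta\). A secondary concern is making the \(o_p(1)\) remainders uniform in \(\beta\), which needs the linear-in-\(\eta\) structure noted above rather than compactness.
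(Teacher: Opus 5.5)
Your proposal is correct and follows essentially the same route as the paper's proof. Both arguments decompose the criterion as $\tilde S+R$ with $R=o_p(1)(1+\|\beta-\beta_0\|)$ uniformly via Lemma~\ref{lem:projection_bounds}, complete the square to obtain $\tilde S=\eta'D(\tilde\Lambda)\eta+\theta'B\theta$ using the time-direction weights $a_{ts}$ and Assumption~A(ii), and then derive $N^{-1}\tilde\Lambda_0'M_{\hat\Lambda}\tilde\Lambda_0=o_p(1)$, which yields both the invertibility of $N^{-1}\tilde\Lambda_0'\hat\Lambda$ and projection consistency. The differences are cosmetic: you center at $Q(\beta_0,\bar\Lambda_0)$ and solve $c\|\hat\eta\|^2\le o_p(1)(1+\|\hat\eta\|)$ directly, whereas the paper first confines $\hat\beta$ to a compact set before concluding.
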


\begin{proof}[Proof of Proposition~\ref{prop:consistency_ours}] 
We divide the proof into four steps.
\paragraph{Step 1 (Decomposition).}
Let \(N:=CJ\) and \(T_0:=T-p\). Define the normalized criterion
\[
S_{NT_0}(\beta,\tilde\Lambda):=\frac{1}{NT_0}\sum_{t=p+1}^T
(\tilde Y_t-\tilde W_t\beta)'M_{\tilde\Lambda}(\tilde Y_t-\tilde W_t\beta),
\qquad
M_{\tilde\Lambda}:=I_N-\frac{1}{N}\tilde\Lambda\tilde\Lambda',
\quad \tilde\Lambda\in\mathcal F.
\]
Without loss of generality, set \(\beta_0=0\) by replacing \(\tilde Y_t\) with \(\tilde Y_t-\tilde W_t\beta_0\). Then
\[
\tilde Y_t=\tilde\Lambda_0 f_t^0+\tilde\varepsilon_t,
\]
and hence
\begin{align*}
S_{NT_0}(\beta,\tilde\Lambda)
&=
\frac{1}{NT_0}\sum_{t=p+1}^T
(\tilde\Lambda_0 f_t^0-\tilde W_t\beta)'M_{\tilde\Lambda}(\tilde\Lambda_0 f_t^0-\tilde W_t\beta) \\
&\quad
+\frac{2}{NT_0}\sum_{t=p+1}^T
(\tilde\Lambda_0 f_t^0-\tilde W_t\beta)'M_{\tilde\Lambda}\tilde\varepsilon_t
+\frac{1}{NT_0}\sum_{t=p+1}^T\tilde\varepsilon_t'M_{\tilde\Lambda}\tilde\varepsilon_t .
\end{align*}
Write
\[
\tilde\varepsilon_t'M_{\tilde\Lambda}\tilde\varepsilon_t
=\tilde\varepsilon_t'\tilde\varepsilon_t-\tilde\varepsilon_t'P_{\tilde\Lambda}\tilde\varepsilon_t,
\qquad
P_{\tilde\Lambda}:=\frac{1}{N}\tilde\Lambda\tilde\Lambda'.
\]
Therefore,
\[
S_{NT_0}(\beta,\tilde\Lambda)
=
\widetilde S_{NT_0}(\beta,\tilde\Lambda)
+
\underbrace{\frac{1}{NT_0}\sum_{t=p+1}^T\tilde\varepsilon_t'\tilde\varepsilon_t}_{=:C_{NT_0}}
+
R_{NT_0}(\beta,\tilde\Lambda),
\]
where
\begin{align*}
\widetilde S_{NT_0}(\beta,\tilde\Lambda)
&:=
\frac{1}{NT_0}\sum_{t=p+1}^T
(\tilde\Lambda_0 f_t^0-\tilde W_t\beta)'M_{\tilde\Lambda}(\tilde\Lambda_0 f_t^0-\tilde W_t\beta),\\[2mm]
R_{NT_0}(\beta,\tilde\Lambda)
&:=
\frac{2}{NT_0}\sum_{t=p+1}^T
(\tilde\Lambda_0 f_t^0-\tilde W_t\beta)'M_{\tilde\Lambda}\tilde\varepsilon_t
-\frac{1}{NT_0}\sum_{t=p+1}^T\tilde\varepsilon_t'P_{\tilde\Lambda}\tilde\varepsilon_t .
\end{align*}
Since \(C_{NT_0}\) does not depend on \((\beta,\tilde\Lambda)\), it plays no role in minimization.

\medskip
\noindent\textbf{Uniform bound for the remainder.}
Expanding the cross term gives
\[
(\tilde\Lambda_0 f_t^0-\tilde W_t\beta)'M_{\tilde\Lambda}\tilde\varepsilon_t
=
f_t^{0\prime}\tilde\Lambda_0' M_{\tilde\Lambda}\tilde\varepsilon_t
-\beta'\tilde W_t' M_{\tilde\Lambda}\tilde\varepsilon_t .
\]
Hence, for any \(\beta\in\mathbb R^K\),
\begin{align*}
\sup_{\tilde\Lambda\in\mathcal F}|R_{NT_0}(\beta,\tilde\Lambda)|
&\le
2\sup_{\tilde\Lambda\in\mathcal F}\left|
\frac{1}{NT_0}\sum_{t=p+1}^T f_t^{0\prime}\tilde\Lambda_0' M_{\tilde\Lambda}\tilde\varepsilon_t
\right| \\
&\quad
+2\|\beta\|\sup_{\tilde\Lambda\in\mathcal F}\left\|
\frac{1}{NT_0}\sum_{t=p+1}^T \tilde W_t' M_{\tilde\Lambda}\tilde\varepsilon_t
\right\| \\
&\quad
+\sup_{\tilde\Lambda\in\mathcal F}\left|
\frac{1}{NT_0}\sum_{t=p+1}^T \tilde\varepsilon_t'P_{\tilde\Lambda}\tilde\varepsilon_t
\right|.
\end{align*}
By Lemma~\ref{lem:projection_bounds}\ref{it:A1_2}, Lemma~\ref{lem:projection_bounds}\ref{it:A1_1}, and Lemma~\ref{lem:projection_bounds}\ref{it:A1_3}, respectively, each supremum term on the right-hand side is \(o_p(1)\). Therefore,
\[
\sup_{\tilde\Lambda\in\mathcal F}|R_{NT_0}(\beta,\tilde\Lambda)|
=
o_p(1)\,(1+\|\beta\|).
\]

\paragraph{Step 2 (Square completion).}
Fix \(\tilde\Lambda\in\mathcal F\). Let \(F_0:=(f_{p+1}^0,\ldots,f_T^0)'\in\mathbb R^{T_0\times r}\) and define
\[
B:=\Big(\frac{1}{T_0}F_{0}^\prime F_0\Big)\otimes\Big(\frac{1}{N}I_N\Big),
\qquad
\eta(\tilde\Lambda):=\mathrm{vec}(M_{\tilde\Lambda}\tilde\Lambda_0)\in\mathbb R^{Nr},
\]
and
\[
A(\tilde\Lambda)
:=\frac{1}{NT_0}\sum_{t=p+1}^T \tilde W_t' M_{\tilde\Lambda}\tilde W_t \in\mathbb R^{K\times K},
\qquad
C(\tilde\Lambda)
:=-\frac{1}{NT_0}\sum_{t=p+1}^T \big(f_t^0\otimes M_{\tilde\Lambda}\tilde W_t\big)\in\mathbb R^{Nr\times K},
\]
where \(K=\dim(\beta)=pC^2\).

Expanding \(\widetilde S_{NT_0}(\beta,\tilde\Lambda)\) and using vec/Kronecker identities yields
\[
\widetilde S_{NT_0}(\beta,\tilde\Lambda)
=
\beta'A(\tilde\Lambda)\beta
+\eta(\tilde\Lambda)'B\,\eta(\tilde\Lambda)
+2\,\beta' C(\tilde\Lambda)'\eta(\tilde\Lambda).
\]
Completing the square gives
\begin{align*}
\widetilde S_{NT_0}(\beta,\tilde\Lambda)
&=
\beta'\big(A(\tilde\Lambda)-C(\tilde\Lambda)'B^{-1}C(\tilde\Lambda)\big)\beta\\
&\quad
+\Big(\eta(\tilde\Lambda)+B^{-1}C(\tilde\Lambda)\beta\Big)'B
\Big(\eta(\tilde\Lambda)+B^{-1}C(\tilde\Lambda)\beta\Big) \\
&=:\ \beta'D(\tilde\Lambda)\beta+\theta(\tilde\Lambda,\beta)'B\,\theta(\tilde\Lambda,\beta),
\end{align*}
where
\[
D(\tilde\Lambda):=A(\tilde\Lambda)-C(\tilde\Lambda)'B^{-1}C(\tilde\Lambda),
\qquad
\theta(\tilde\Lambda,\beta):=\eta(\tilde\Lambda)+B^{-1}C(\tilde\Lambda)\beta .
\]
Since \(B\succ0\) with probability approaching one by Assumption~B(i), we have
\[
\widetilde S_{NT_0}(\beta,\tilde\Lambda)\ge \beta'D(\tilde\Lambda)\beta.
\]

Let
\[
a_{ts}:= f_t^{0\prime}\Big(\frac{1}{T_0}F_0^{\prime}F_0\Big)^{-1}f_s^0
\]
and define
\[
Z_t(\tilde\Lambda)
:=
M_{\tilde\Lambda}\tilde W_t-\frac{1}{T_0}\sum_{s=p+1}^T a_{ts}\,M_{\tilde\Lambda}\tilde W_s.
\]
Then the Frisch--Waugh--Lovell identity implies
\[
D(\tilde\Lambda)
=
\frac{1}{NT_0}\sum_{t=p+1}^T Z_t(\tilde\Lambda)'Z_t(\tilde\Lambda).
\]
In particular, Assumption~A(ii) implies that there exists \(c_0>0\) such that
\[
\inf_{\tilde\Lambda\in\mathcal F}\lambda_{\min}\{D(\tilde\Lambda)\}\ge c_0
\quad \text{with probability approaching one.}
\]

\paragraph{Step 3 (Consistency of \(\hat\beta\)).}
Let
\[
H_0:=\frac{1}{N}\tilde\Lambda_0'\tilde\Lambda_0,
\qquad
\bar\Lambda_0:=\tilde\Lambda_0 H_0^{-1/2}.
\]
Then \(\bar\Lambda_0\in\mathcal F\), and
\[
P_{\bar\Lambda_0}
=
\frac{1}{N}\bar\Lambda_0\bar\Lambda_0'
=
\tilde\Lambda_0(\tilde\Lambda_0'\tilde\Lambda_0)^{-1}\tilde\Lambda_0'
=
\Pi_0,
\]
so \(M_{\bar\Lambda_0}\tilde\Lambda_0=0\).

By Step~2,
\[
\widetilde S_{NT_0}(\beta,\tilde\Lambda)\ge c_0\|\beta\|^2
\]
uniformly in \(\tilde\Lambda\in\mathcal F\) with probability approaching one. Combining this with Step~1 gives
\[
S_{NT_0}(\beta,\tilde\Lambda)
\ge
C_{NT_0}
+
c_0\|\beta\|^2
-
o_p(1)(1+\|\beta\|)
\]
uniformly in \(\tilde\Lambda\in\mathcal F\).
Since the quadratic term dominates the linear remainder for \(\|\beta\|\) large enough, there exists \(B<\infty\) such that
\[
\inf_{\|\beta\|>B,\ \tilde\Lambda\in\mathcal F}
S_{NT_0}(\beta,\tilde\Lambda)
>
C_{NT_0}+1
\]
with probability approaching one.

On the other hand, evaluating the criterion at \((0,\bar\Lambda_0)\), we have
\[
S_{NT_0}(0,\bar\Lambda_0)
=
C_{NT_0}+R_{NT_0}(0,\bar\Lambda_0),
\]
because \(M_{\bar\Lambda_0}\tilde\Lambda_0=0\). By Step~1,
\[
R_{NT_0}(0,\bar\Lambda_0)=o_p(1),
\]
and hence
\[
S_{NT_0}(0,\bar\Lambda_0)=C_{NT_0}+o_p(1).
\]
Therefore, with probability approaching one, any minimizer \((\hat\beta,\hat\Lambda)\) of \(S_{NT_0}(\beta,\tilde\Lambda)\) must satisfy
\[
\|\hat\beta\|\le B.
\]

Restricting attention to the compact set \(\{\beta:\|\beta\|\le B\}\), Step~1 yields
\[
\sup_{\|\beta\|\le B,\ \tilde\Lambda\in\mathcal F}|R_{NT_0}(\beta,\tilde\Lambda)|=o_p(1).
\]
Hence
\[
S_{NT_0}(\beta,\tilde\Lambda)
=
\widetilde S_{NT_0}(\beta,\tilde\Lambda)
+
C_{NT_0}
+
o_p(1)
\]
uniformly over \(\|\beta\|\le B\) and \(\tilde\Lambda\in\mathcal F\). Since \((\hat\beta,\hat\Lambda)\) is a minimizer,
\[
\widetilde S_{NT_0}(\hat\beta,\hat\Lambda)
\le
\widetilde S_{NT_0}(0,\bar\Lambda_0)+o_p(1).
\]
Because \(M_{\bar\Lambda_0}\tilde\Lambda_0=0\),
\[
\widetilde S_{NT_0}(0,\bar\Lambda_0)=0,
\]
and therefore
\begin{equation}\label{eq:tildeS_hat_op1_bai}
\widetilde S_{NT_0}(\hat\beta,\hat\Lambda)=o_p(1).
\end{equation}

Again by Step~2,
\[
\widetilde S_{NT_0}(\hat\beta,\hat\Lambda)\ge \hat\beta'D(\hat\Lambda)\hat\beta\ge c_0\|\hat\beta\|^2
\quad \text{with probability approaching one.}
\]
Combining this with \eqref{eq:tildeS_hat_op1_bai} yields
\[
\|\hat\beta\|=o_p(1),
\]
that is,
\[
\hat\beta\xrightarrow{p}0=\beta_0.
\]

\paragraph{Step 4 (Consistency of the loading space).}
Expanding \(\widetilde S_{NT_0}(\beta,\tilde\Lambda)\) yields
\begin{align*}
\widetilde S_{NT_0}(\beta,\tilde\Lambda)
&=
\underbrace{\frac{1}{NT_0}\sum_{t=p+1}^T
f_t^{0\prime}\tilde\Lambda_0'M_{\tilde\Lambda}\tilde\Lambda_0 f_t^0}_{=:c(\tilde\Lambda)}
-2\beta'\underbrace{\frac{1}{NT_0}\sum_{t=p+1}^T
\tilde W_t'M_{\tilde\Lambda}\tilde\Lambda_0 f_t^0}_{=:b(\tilde\Lambda)}\\
&\quad
+\beta'\underbrace{\frac{1}{NT_0}\sum_{t=p+1}^T
\tilde W_t'M_{\tilde\Lambda}\tilde W_t}_{=:A(\tilde\Lambda)}\beta .
\end{align*}
Evaluating at \((\hat\beta,\hat\Lambda)\) and using \eqref{eq:tildeS_hat_op1_bai}, we obtain
\[
c(\hat\Lambda)
=
\widetilde S_{NT_0}(\hat\beta,\hat\Lambda)
+2\hat\beta'b(\hat\Lambda)-\hat\beta'A(\hat\Lambda)\hat\beta
=o_p(1),
\]
where we used \(\|\hat\beta\|=o_p(1)\) from Step~3 together with \(b(\hat\Lambda)=O_p(1)\) and \(A(\hat\Lambda)=O_p(1)\).

Write \(F_0=(f_{p+1}^0,\ldots,f_T^0)'\in\mathbb R^{T_0\times r}\). Then
\[
c(\hat\Lambda)
=
\frac{1}{N}\mathrm{tr}\!\left(
\tilde\Lambda_0' M_{\hat\Lambda}\tilde\Lambda_0\cdot
\frac{1}{T_0}F_0^{\prime}F_0
\right)
=o_p(1).
\]
By Assumption~B(i), \(T_0^{-1}F_0^{\prime}F_0 \to \Sigma_f\succ0\), and therefore
\begin{equation}\label{eq:Lambda_space_key_bai}
\frac{1}{N}\tilde\Lambda_0' M_{\hat\Lambda}\tilde\Lambda_0=o_p(1).
\end{equation}

Define
\[
H_0:=\frac{1}{N}\tilde\Lambda_0'\tilde\Lambda_0,
\qquad
\bar\Lambda_0:=\tilde\Lambda_0 H_0^{-1/2}.
\]
By Assumption~B(ii), \(H_0\to \Sigma_\Lambda\succ0\), so \(H_0\) is invertible with probability approaching one and
\[
\frac{1}{N}\bar\Lambda_0'\bar\Lambda_0=I_r.
\]
Hence
\[
P_{\bar\Lambda_0}:=\frac{1}{N}\bar\Lambda_0\bar\Lambda_0'
\]
is an orthogonal projector of rank \(r\), and
\[
P_{\bar\Lambda_0}
=
\tilde\Lambda_0(\tilde\Lambda_0'\tilde\Lambda_0)^{-1}\tilde\Lambda_0'
=
\Pi_0.
\]

From \eqref{eq:Lambda_space_key_bai},
\[
\frac{1}{N}\bar\Lambda_0' M_{\hat\Lambda}\bar\Lambda_0
=
H_0^{-1/2}\Big(\frac{1}{N}\tilde\Lambda_0' M_{\hat\Lambda}\tilde\Lambda_0\Big)H_0^{-1/2}
=o_p(1).
\]
Since both \(P_{\hat\Lambda}\) and \(P_{\bar\Lambda_0}\) are orthogonal projectors of rank \(r\),
\[
\|P_{\hat\Lambda}-P_{\bar\Lambda_0}\|_F^2
=
2\,\mathrm{tr}(M_{\hat\Lambda}P_{\bar\Lambda_0}).
\]
Using \(P_{\bar\Lambda_0}=(1/N)\bar\Lambda_0\bar\Lambda_0'\), we obtain
\[
\mathrm{tr}(M_{\hat\Lambda}P_{\bar\Lambda_0})
=
\frac{1}{N}\mathrm{tr}(\bar\Lambda_0'M_{\hat\Lambda}\bar\Lambda_0)
=o_p(1),
\]
and therefore
\[
\|P_{\hat\Lambda}-P_{\bar\Lambda_0}\|_F^2=o_p(1).
\]
Since \(\|A\|\le \|A\|_F\) for any matrix \(A\), we conclude that
\[
\|P_{\hat\Lambda}-\Pi_0\|
=
\|P_{\hat\Lambda}-P_{\bar\Lambda_0}\|
=o_p(1).
\]

Finally, since
\[
\frac{1}{N}\tilde\Lambda_0'P_{\hat\Lambda}\tilde\Lambda_0
=
\frac{1}{N}\tilde\Lambda_0'\tilde\Lambda_0
-\frac{1}{N}\tilde\Lambda_0'M_{\hat\Lambda}\tilde\Lambda_0
=
H_0+o_p(1),
\]
and \(H_0\) is invertible with probability approaching one, it follows that
\[
\frac{1}{N}\tilde\Lambda_0'\hat\Lambda
\]
is invertible with probability approaching one. This completes the proof.
\end{proof}

\begin{proposition}[Loading PCA]
\label{prop:loading_pca}
Let
\[
\tilde Y_t=\tilde W_t\beta_0+\tilde\Lambda_0 f_t^0+\tilde\varepsilon_t,
\qquad t=p+1,\ldots,T,
\]
with \(N:=CJ\) and \(T_0:=T-p\). Let \(\hat\beta\) be the estimator of \(\beta_0\) defined in Proposition~\ref{prop:consistency_ours}, and define
\[
u_t:=\tilde Y_t-\tilde W_t\hat\beta,\qquad
\hat S:=\frac{1}{NT_0}\sum_{t=p+1}^T u_tu_t'.
\]
Let \(\hat\Lambda\in\mathbb R^{N\times r}\) collect the eigenvectors associated with the \(r\) largest eigenvalues of \(\hat S\), normalized such that
\[
\frac{1}{N}\hat\Lambda'\hat\Lambda=I_r.
\]
Let \(\hat V_{NT_0}\) denote the corresponding diagonal matrix of eigenvalues, defined by
\begin{equation}\label{eq:eig_loading}
\hat S\,\hat\Lambda=\hat\Lambda\,\hat V_{NT_0}.
\end{equation}

Under Assumptions~A--D, as \(J,T\to\infty\) with \(C\) fixed, the following hold:

\begin{enumerate}[(i)]
\item \textbf{Eigenvalues.}
\(\hat V_{NT_0}\) is invertible with probability approaching one, and
\[
\hat V_{NT_0}\xrightarrow{p}V,
\]
where \(V\) is the \(r\times r\) diagonal matrix consisting of the \(r\) eigenvalues of
\[
\Sigma_f^{1/2}\Sigma_\Lambda\Sigma_f^{1/2}
\quad\text{(equivalently, of }\Sigma_\Lambda^{1/2}\Sigma_f\Sigma_\Lambda^{1/2}\text{)}.
\]

\item \textbf{Loading space and explicit rotation.}
Define
\[
H:=\Big(\frac{1}{T_0}F_0^{\prime}F_0\Big)\Big(\frac{1}{N}\tilde\Lambda_0'\hat\Lambda\Big)\hat V_{NT_0}^{-1},
\qquad
F_0:=(f_{p+1}^0,\ldots,f_T^0)'.
\]
Then \(H\) is invertible with probability approaching one, and
\[
\frac{1}{N}\|\hat\Lambda-\tilde\Lambda_0 H\|_F^2
=
O_p\!\big(\|\hat\beta-\beta_0\|^2\big)
+
O_p\!\Big(\frac{1}{\min\{N,T_0\}}\Big).
\]
In particular, if \(\hat\beta\xrightarrow{p}\beta_0\), then
\[
\frac{1}{N}\|\hat\Lambda-\tilde\Lambda_0 H\|_F^2
=
o_p(1),
\qquad\text{and hence}\qquad
\|P_{\hat\Lambda}-\Pi_0\|\xrightarrow{p}0,
\]
where
\[
P_{\hat\Lambda}:=N^{-1}\hat\Lambda\hat\Lambda',
\qquad
\Pi_0:=\tilde\Lambda_0(\tilde\Lambda_0'\tilde\Lambda_0)^{-1}\tilde\Lambda_0'.
\]
\end{enumerate}
\end{proposition}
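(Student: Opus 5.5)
The plan is to follow the standard Bai (2003, 2009) PCA argument, with the cross-section and time directions swapped. First I expand the residual. Writing $u_t=\tilde\Lambda_0 f_t^0+\tilde\varepsilon_t-\tilde W_t(\hat\beta-\beta_0)$ gives
\[
\hat S=\frac{1}{NT_0}\tilde\Lambda_0F_0'F_0\tilde\Lambda_0'+\sum_{k=1}^{8}I_k,
\]
where the eight remainder terms are the cross products involving $\tilde\varepsilon_t$ and $\tilde W_t(\hat\beta-\beta_0)$. Post-multiplying the eigen-equation \eqref{eq:eig_loading} by $\hat V_{NT_0}^{-1}$ yields the identity
\[
\hat\Lambda-\tilde\Lambda_0H=\Big(\sum_k I_k\Big)\hat\Lambda\hat V_{NT_0}^{-1},
\]
with $H$ exactly as defined in (ii). So part (ii) reduces to two things: that $\hat V_{NT_0}^{-1}=O_p(1)$, and that $N^{-1}\|I_k\hat\Lambda\|_F^2$ is controlled for each $k$.

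For part (i), I bound each $\|I_k\|$ in operator norm.
\begin{itemize}
\item The pure-noise term satisfies $\|(NT_0)^{-1}\sum_t\tilde\varepsilon_t\tilde\varepsilon_t'\|=O_p(N^{-1})+O_p(T_0^{-1/2})$, exactly as in the proof of Lemma~\ref{lem:projection_bounds}(iii).
\item The factor--noise cross term has operator norm at most $N^{-1}\|\tilde\Lambda_0\|\,\|T_0^{-1}\sum_t f_t^0\tilde\varepsilon_t'\|$. Independence over $t$ together with Assumptions B and C(iii) gives $E\|T_0^{-1}\sum_t f_t^0\tilde\varepsilon_t'\|_F^2=O(N/T_0)$, so this term is $O_p(T_0^{-1/2})$.
\item The terms containing $\tilde W_t(\hat\beta-\beta_0)$ are bounded by $O_p(\|\hat\beta-\beta_0\|)$ or $O_p(\|\hat\beta-\beta_0\|^2)$. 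This uses $(NT_0)^{-1}\sum_t\|\tilde W_t\|_F^2=O_p(1)$, already established in Lemma~\ref{lem:projection_bounds}.
\end{itemize}
Hence $\|\hat S-N^{-1}\tilde\Lambda_0(T_0^{-1}F_0'F_0)\tilde\Lambda_0'\|=o_p(1)$, using Proposition~\ref{prop:consistency_ours}(i). The nonzero eigenvalues of the rank-$r$ leading matrix equal those of $(N^{-1}\tilde\Lambda_0'\tilde\Lambda_0)^{1/2}(T_0^{-1}F_0'F_0)(N^{-1}\tilde\Lambda_0'\tilde\Lambda_0)^{1/2}$, which converge to the eigenvalues of $\Sigma_\Lambda^{1/2}\Sigma_f\Sigma_\Lambda^{1/2}$. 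Weyl's inequality then gives $\hat V_{NT_0}\to_p V$. Since $V\succ0$, $\hat V_{NT_0}$ is invertible with probability approaching one. (I assume the eigenvalues are distinct, or interpret $V$ up to ordering.)

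For the rate in (ii), I use $\|\hat\Lambda\|_F^2=Nr$, so that $N^{-1}\|I_k\hat\Lambda\|_F^2\le r\|I_k\|^2$ or a sharper Frobenius-type bound. The $\beta$-terms contribute $O_p(\|\hat\beta-\beta_0\|^2)$, and the factor--noise terms contribute $O_p(T_0^{-1})$.

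The main obstacle is the pure-noise term, because the crude operator bound $O_p(T_0^{-1/2})$ only yields $O_p(T_0^{-1})$ after squaring. I would first try to bound $N^{-1}\|(NT_0)^{-1}\sum_t\tilde\varepsilon_t\tilde\varepsilon_t'\hat\Lambda\|_F^2$ by splitting $\tilde\varepsilon_t\tilde\varepsilon_t'=\Sigma_{\varepsilon,t}+C_t$.
\begin{itemize}
\item The mean part gives $N^{-2}\|\bar\Sigma_\varepsilon\|^2\cdot r=O(N^{-2})$. Assumption C(iv) is what keeps the row sums of $\bar\Sigma_\varepsilon$ bounded here.
\item For the centered part, I would replace $\hat\Lambda$ by $\tilde\Lambda_0H+(\hat\Lambda-\tilde\Lambda_0H)$. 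Against the deterministic $\tilde\Lambda_0$, the linear-form moment bound C(v) gives $O_p(1/(NT_0))$ per column, so $O_p(T_0^{-1})$ overall. Against the difference $(\hat\Lambda-\tilde\Lambda_0H)$, the term is $o_p(1)\cdot N^{-1}\|\hat\Lambda-\tilde\Lambda_0H\|_F^2$, which can be absorbed into the left side.
\end{itemize}
Together these give the $O_p(1/\min\{N,T_0\})$ term. The invertibility of $H$ then follows from Proposition~\ref{prop:consistency_ours}(ii), which gives invertibility of $N^{-1}\tilde\Lambda_0'\hat\Lambda$, combined with part (i) and $\Sigma_f\succ0$.

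Finally, suppose $\hat\beta\to_p\beta_0$. Then $N^{-1}\|\hat\Lambda-\tilde\Lambda_0H\|_F^2=o_p(1)$. Together with $N^{-1}\hat\Lambda'\hat\Lambda=I_r$ and the invertibility of $H$, this gives $\|P_{\hat\Lambda}-\Pi_0\|\to_p0$, because $\tilde\Lambda_0H$ spans the same space as $\tilde\Lambda_0$.
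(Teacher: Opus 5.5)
Your proposal is correct and is essentially the paper's proof. It uses the same split $\hat S=S_f+R$ into the rank-$r$ signal and eight cross-product terms, operator-norm bounds on $R$, and Weyl's inequality for (i). For (ii) it uses the same identity $\hat\Lambda-\tilde\Lambda_0H=R\hat\Lambda\hat V_{NT_0}^{-1}$ together with $\|\hat\Lambda\|_F^2=Nr$.

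The main deviation comes from misreading the target rate. The crude bound $\|(NT_0)^{-1}\sum_t\tilde\varepsilon_t\tilde\varepsilon_t'\|=O(N^{-1})+O_p(T_0^{-1/2})$ squares to $O(N^{-2})+O_p(T_0^{-1})$. That is already $O_p(1/\min\{N,T_0\})$, so the pure-noise term is not an obstacle, and the paper uses exactly this bound. Your mean/centred split, with the substitution $\hat\Lambda=\tilde\Lambda_0H+(\hat\Lambda-\tilde\Lambda_0H)$ and the absorption step, is valid but gains nothing. The factor--noise term $(NT_0)^{-1}\sum_t\tilde\varepsilon_tf_t^{0\prime}\tilde\Lambda_0'\hat\Lambda$ contributes $O_p(T_0^{-1})$ regardless.

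For invertibility of $N^{-1}\tilde\Lambda_0'\hat\Lambda$, you cite Proposition~\ref{prop:consistency_ours}(ii), whereas the paper invokes Davis--Kahan. Your route is non-circular, since that proposition does not rely on the present one. You should still add one linking step. The loading component of the joint minimiser maximises $\mathrm{tr}(\tilde\Lambda'\hat S\tilde\Lambda)$ over $\mathcal F$, with $\hat S$ built from $\hat\beta$. Hence it spans the top-$r$ eigenspace of $\hat S$, which is unique with probability approaching one because Weyl's inequality gives an eigengap. The two definitions of $\hat\Lambda$ therefore differ only by an orthogonal rotation, which preserves invertibility.
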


\begin{proof}[Proof of Proposition~\ref{prop:loading_pca}]
Let \(T_0:=T-p\) and \(N:=CJ\). Define
\[
u_t:=\tilde Y_t-\tilde W_t\hat\beta
=\tilde\Lambda_0 f_t^0+\tilde\varepsilon_t-\tilde W_t(\hat\beta-\beta_0),
\qquad t=p+1,\ldots,T.
\]
Write
\[
s_t:=\tilde\Lambda_0 f_t^0,\qquad
\varepsilon_t:=\tilde\varepsilon_t,\qquad
r_t:=-\tilde W_t(\hat\beta-\beta_0),
\]
so that \(u_t=s_t+\varepsilon_t+r_t\). Recall
\[
\hat S=\frac{1}{NT_0}\sum_{t=p+1}^T u_tu_t',
\qquad
\hat S\,\hat\Lambda=\hat\Lambda\,\hat V_{NT_0},
\qquad
\frac1N\hat\Lambda'\hat\Lambda=I_r.
\]

\paragraph{Step 1 (Signal decomposition and a spectral bound).}
Define the signal matrix
\begin{equation}\label{eq:S_signal}
S_f
:=\frac{1}{NT_0}\sum_{t=p+1}^T s_ts_t'
=\frac1N\tilde\Lambda_0\Big(\frac1{T_0}F_0{\prime}F_0\Big)\tilde\Lambda_0',
\qquad
F_0:=(f_{p+1}^0,\ldots,f_T^0)'.
\end{equation}
Then
\[
\hat S = S_f + R,
\]
where
\[
R
=\frac{1}{NT_0}\sum_{t=p+1}^T
\Big(
s_t\varepsilon_t'+\varepsilon_ts_t'
+s_tr_t'+r_ts_t'
+\varepsilon_tr_t'+r_t\varepsilon_t'
+\varepsilon_t\varepsilon_t'
+r_tr_t'
\Big).
\]

\medskip
\noindent\emph{Claim.} Under Assumptions~A--D,
\begin{equation}\label{eq:R_rate}
\|R\|
=
O_p\big(\|\hat\beta-\beta_0\|\big)
+
O_p\!\Big(\frac1{\sqrt{T_0}}\Big)
+
O_p\!\Big(\frac1{N}\Big).
\end{equation}

\noindent\emph{Proof of the claim.}
We bound each block in operator norm.

\smallskip
\noindent (a) Terms involving \(r_t\).
By Cauchy--Schwarz and \(\|ab'\|\le \|a\|\,\|b\|\),
\[
\Big\|\frac{1}{NT_0}\sum s_tr_t'\Big\|
\le
\Big(\frac{1}{NT_0}\sum \|s_t\|^2\Big)^{1/2}
\Big(\frac{1}{NT_0}\sum \|r_t\|^2\Big)^{1/2}.
\]
By \eqref{eq:S_signal},
\[
\frac{1}{NT_0}\sum \|s_t\|^2
=
\frac1N\tr\!\left(\tilde\Lambda_0' \tilde\Lambda_0\cdot \frac1{T_0}F_0{\prime}F_0\right)
=O(1)
\]
using Assumption~B(i)--(ii). Moreover,
\[
\frac{1}{NT_0}\sum \|r_t\|^2
\le
\|\hat\beta-\beta_0\|^2\cdot \frac{1}{NT_0}\sum \|\tilde W_t\|_F^2
=O_p(1)\,\|\hat\beta-\beta_0\|^2,
\]
because \(K=\dim(\beta)=pC^2\) is fixed and, by Assumption~A(i) together with Assumptions~B and C,
\(E\|\tilde W_t\|_F^2=O(N)\) uniformly in \(t\). Hence
\[
\Big\|\frac{1}{NT_0}\sum s_tr_t'\Big\| = O_p(\|\hat\beta-\beta_0\|).
\]
The same bound holds for \(\frac{1}{NT_0}\sum r_ts_t'\), \(\frac{1}{NT_0}\sum \varepsilon_tr_t'\), and \(\frac{1}{NT_0}\sum r_t\varepsilon_t'\). Finally,
\[
\Big\|\frac{1}{NT_0}\sum r_tr_t'\Big\|
\le
\frac{1}{NT_0}\sum \|r_t\|^2
=
O_p(\|\hat\beta-\beta_0\|^2).
\]

\smallskip
\noindent (b) Cross terms \(s_t\varepsilon_t'\) and \(\varepsilon_ts_t'\).
Since \(s_t=\tilde\Lambda_0 f_t^0\) is deterministic and \(E(\varepsilon_t)=0\) by Assumption~C(ii),
we have \(E(s_t\varepsilon_t')=0\). Moreover, the summands are independent over \(t\) by Assumption~C(ii).
Using \(\|A\|\le \|A\|_F\),
\[
E\Big\|\frac{1}{NT_0}\sum s_t\varepsilon_t'\Big\|_F^2
=
\frac{1}{N^2T_0^2}\sum E(\|s_t\|^2\|\varepsilon_t\|^2).
\]
Moreover,
\[
\|s_t\|^2=\|\tilde\Lambda_0 f_t^0\|^2
\le \|\tilde\Lambda_0\|^2\|f_t^0\|^2=O(N)
\]
uniformly in \(t\), by Assumption~B(i)--(ii),
and \(E\|\varepsilon_t\|^2=O(N)\) by Assumption~C(i). Therefore
\[
E\Big\|\frac{1}{NT_0}\sum s_t\varepsilon_t'\Big\|_F^2
=O\Big(\frac{1}{T_0}\Big),
\]
and hence
\[
\Big\|\frac{1}{NT_0}\sum s_t\varepsilon_t'\Big\|
=O_p\Big(\frac{1}{\sqrt{T_0}}\Big).
\]
The same bound holds for \(\frac{1}{NT_0}\sum \varepsilon_ts_t'\).

\smallskip
\noindent (c) Pure error term.
By Assumption~C(ii),
\[
E\Big(\frac{1}{NT_0}\sum \varepsilon_t\varepsilon_t'\Big)
=\frac{1}{N}\cdot \frac1{T_0}\sum_{t=p+1}^T \Sigma_{\varepsilon,t},
\]
so by Assumption~C(iii),
\[
\Big\|E\Big(\frac{1}{NT_0}\sum \varepsilon_t\varepsilon_t'\Big)\Big\| \le \frac{K}{N}.
\]
Moreover,
\[
E\Big\|\frac{1}{NT_0}\sum\big(\varepsilon_t\varepsilon_t'-E(\varepsilon_t\varepsilon_t')\big)\Big\|_F^2
=O\Big(\frac{1}{T_0}\Big),
\]
so the centered part is \(O_p(T_0^{-1/2})\) in operator norm. Hence
\[
\Big\|\frac{1}{NT_0}\sum \varepsilon_t\varepsilon_t'\Big\|
=
O_p\Big(\frac{1}{\sqrt{T_0}}\Big)+O\Big(\frac{1}{N}\Big).
\]

\smallskip
\noindent Collecting (a)--(c) proves \eqref{eq:R_rate}. \hfill\(\square\)

\paragraph{Step 2 (Eigenvalues).}
Let \(\mu_1(\cdot)\ge\cdots\ge \mu_N(\cdot)\) denote ordered eigenvalues. By Weyl's inequality (e.g., \citealp[Theorem 4.3.1]{horn2012matrix}),
\[
\max_{1\le j\le r}\big|\mu_j(\hat S)-\mu_j(S_f)\big|
\le \|\hat S-S_f\|=\|R\|.
\]
Thus, by \eqref{eq:R_rate},
\begin{equation}\label{eq:weyl_rate}
\max_{1\le j\le r}\big|\mu_j(\hat S)-\mu_j(S_f)\big|
=
O_p\big(\|\hat\beta-\beta_0\|\big)+O_p(T_0^{-1/2})+O_p(N^{-1}).
\end{equation}

The \(r\) nonzero eigenvalues of \(S_f\) are the eigenvalues of the \(r\times r\) matrix
\[
\Big(\frac1{T_0}F_0^{\prime}F_0\Big)^{1/2}
\Big(\frac1N\tilde\Lambda_0'\tilde\Lambda_0\Big)
\Big(\frac1{T_0}F_0^{\prime}F_0\Big)^{1/2}.
\]
By Assumption~B(i)--(ii),
\[
\frac1{T_0}F_0^{\prime}F_0 \to \Sigma_f\succ0,
\qquad
\frac1N\tilde\Lambda_0'\tilde\Lambda_0 \to \Sigma_\Lambda\succ0,
\]
so the \(r\) nonzero eigenvalues of \(S_f\) converge to those of
\[
\Sigma_f^{1/2}\Sigma_\Lambda\Sigma_f^{1/2}.
\]
Denote by \(V\) the diagonal matrix collecting these limiting eigenvalues. Since Proposition~\ref{prop:consistency_ours}(i) gives \(\hat\beta\xrightarrow{p}\beta_0\), combining this with \eqref{eq:weyl_rate} yields
\[
\hat V_{NT_0}\xrightarrow{p}V.
\]
In particular, since \(V\succ0\), \(\hat V_{NT_0}\) is invertible with probability approaching one.

\paragraph{Step 3 (Rotation and loading error bound).}
From \(\hat S=S_f+R\) and \(\hat S\hat\Lambda=\hat\Lambda\hat V_{NT_0}\),
\[
S_f\hat\Lambda + R\hat\Lambda = \hat\Lambda\hat V_{NT_0}.
\]
Using \eqref{eq:S_signal},
\begin{equation}\label{eq:key_eq_loading}
\hat\Lambda\hat V_{NT_0}
-
\tilde\Lambda_0\Big(\frac1{T_0}F_0^{\prime}F_0\Big)\Big(\frac1N\tilde\Lambda_0'\hat\Lambda\Big)
=
R\hat\Lambda.
\end{equation}
Define
\[
H:=\Big(\frac{1}{T_0}F_0^{\prime}F_0\Big)\Big(\frac{1}{N}\tilde\Lambda_0'\hat\Lambda\Big)\hat V_{NT_0}^{-1}.
\]
Right-multiplying \eqref{eq:key_eq_loading} by \(\hat V_{NT_0}^{-1}\) yields
\begin{equation}\label{eq:Lambda_minus_LH}
\hat\Lambda-\tilde\Lambda_0 H = R\hat\Lambda\hat V_{NT_0}^{-1}.
\end{equation}
Taking Frobenius norms and using \(\|ABC\|_F\le \|A\|\|B\|_F\|C\|\),
\[
\frac1{\sqrt N}\|\hat\Lambda-\tilde\Lambda_0 H\|_F
\le
\|R\|\cdot \frac1{\sqrt N}\|\hat\Lambda\|_F\cdot \|\hat V_{NT_0}^{-1}\|.
\]
Since \(N^{-1}\hat\Lambda'\hat\Lambda=I_r\), we have \(\|\hat\Lambda\|_F=\sqrt{Nr}\), and hence
\[
\frac1{\sqrt N}\|\hat\Lambda-\tilde\Lambda_0 H\|_F
\le
\sqrt r\,\|R\|\,\|\hat V_{NT_0}^{-1}\|.
\]
By Step~2, \(\|\hat V_{NT_0}^{-1}\|=O_p(1)\). Combining this with \eqref{eq:R_rate} gives
\[
\frac1{\sqrt N}\|\hat\Lambda-\tilde\Lambda_0 H\|_F
=
O_p\big(\|\hat\beta-\beta_0\|\big)
+
O_p\!\Big(\frac1{\sqrt{T_0}}\Big)
+
O_p\!\Big(\frac1{N}\Big).
\]
Squaring both sides yields
\[
\frac1N\|\hat\Lambda-\tilde\Lambda_0 H\|_F^2
=
O_p\big(\|\hat\beta-\beta_0\|^2\big)
+
O_p\!\Big(\frac1{T_0}\Big)
+
O_p\!\Big(\frac1{N^2}\Big)
=
O_p\big(\|\hat\beta-\beta_0\|^2\big)
+
O_p\!\Big(\frac1{\min\{N,T_0\}}\Big).
\]

\paragraph{Step 4 (Invertibility of \(H\) and projection consistency).}
By Proposition~\ref{prop:consistency_ours}(i), \(\hat\beta\xrightarrow{p}\beta_0\). Hence, by \eqref{eq:R_rate}, \(\|\hat S-S_f\|=o_p(1)\).

By Assumption~B(i), \(T_0^{-1}F_0^{\prime}F_0\) is invertible with probability approaching one, and by Step~2, \(\hat V_{NT_0}\) is invertible with probability approaching one. It therefore remains to show that \(N^{-1}\tilde\Lambda_0'\hat\Lambda\) is invertible with probability approaching one.

The \(r\) nonzero eigenvalues of \(S_f\) are separated from the remaining eigenvalues, since the \(r\)th eigenvalue converges to a strictly positive limit while the \((r+1)\)th eigenvalue is zero. Hence, by the Davis--Kahan eigenspace perturbation theorem (see \citealp{DavisKahan1970}),
the principal angles between the column spaces of
\(\hat\Lambda\) and \(\tilde\Lambda_0\) converge to zero in probability.
In particular, \(N^{-1}\tilde\Lambda_0'\hat\Lambda\) is invertible with probability approaching one, and therefore \(H\) is invertible with probability approaching one.

Since \(H\) is nonsingular and
\[
\frac1N\|\hat\Lambda-\tilde\Lambda_0 H\|_F^2=o_p(1),
\]
the column space of \(\hat\Lambda\) converges to that of \(\tilde\Lambda_0\). Equivalently,
\[
\|P_{\hat\Lambda}-\Pi_0\|\xrightarrow{p}0,
\qquad
P_{\hat\Lambda}:=\frac1N\hat\Lambda\hat\Lambda',
\qquad
\Pi_0:=\tilde\Lambda_0(\tilde\Lambda_0'\tilde\Lambda_0)^{-1}\tilde\Lambda_0'.
\]
This completes the proof.
\end{proof}

\begin{lemma}
\label{lem:quadratic_form_bounds}
Under Assumptions~A--D, there exists a finite constant \(M<\infty\) such that
\[
E\Bigg\|
T_0^{-1/2}\sum_{t=p+1}^T
\frac{1}{N}\,
\tilde\Lambda_0'
\Big(\tilde\varepsilon_t\tilde\varepsilon_t'-\Sigma_{\varepsilon,t}\Big)
\tilde\Lambda_0
\Bigg\|_F^2
\le M,
\qquad
\Sigma_{\varepsilon,t}:=E(\tilde\varepsilon_t\tilde\varepsilon_t').
\]
\end{lemma}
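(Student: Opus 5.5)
The plan is to write the $r\times r$ matrix inside the norm as a normalized sum of independent, mean-zero terms and then compute its second moment directly. Set
\[
\xi_t:=\frac{1}{N}\tilde\Lambda_0'\big(\tilde\varepsilon_t\tilde\varepsilon_t'-\Sigma_{\varepsilon,t}\big)\tilde\Lambda_0,\qquad t=p+1,\ldots,T.
\]
By Assumption~C(ii) the $\xi_t$ are independent across $t$ with $E(\xi_t)=0$. Hence all cross terms $E\,\mathrm{tr}(\xi_t'\xi_u)$ with $t\neq u$ vanish, and
\[
E\Big\|T_0^{-1/2}\sum_t \xi_t\Big\|_F^2=\frac{1}{T_0}\sum_{t=p+1}^T E\|\xi_t\|_F^2 .
\]
It therefore suffices to show that $\sup_t E\|\xi_t\|_F^2\le M$.

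Because $r$ is fixed, $\|\xi_t\|_F^2=\sum_{k,l=1}^r (\xi_t)_{kl}^2$, so it is enough to bound each entry. Let $\lambda_{0}^{(k)}$ denote the $k$th column of $\tilde\Lambda_0$ and define the unit vectors $u_k:=\lambda_0^{(k)}/\|\lambda_0^{(k)}\|$. Then
\[
(\xi_t)_{kl}=\frac{\|\lambda_0^{(k)}\|\,\|\lambda_0^{(l)}\|}{N}\Big\{(u_k'\tilde\varepsilon_t)(u_l'\tilde\varepsilon_t)-u_k'\Sigma_{\varepsilon,t}u_l\Big\}.
\]
Assumption~B(ii) gives $\|\lambda_0^{(k)}\|^2\le\|\tilde\Lambda_0\|_F^2\le NK^2$, so the prefactor is bounded by $K^2$ uniformly in $N$. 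For the bracket, use the variance bound $E|X-EX|^2\le E|X|^2$ with $X=(u_k'\tilde\varepsilon_t)(u_l'\tilde\varepsilon_t)$; note that $EX=u_k'\Sigma_{\varepsilon,t}u_l$. The linear-form moment bound in Assumption~C(v) then gives $\sup_t E|(u_k'\tilde\varepsilon_t)(u_l'\tilde\varepsilon_t)|^2\le K$. Since the $u_k$ are nonstochastic (the loadings are treated as non-stochastic arrays in Assumption~B), the assumption applies directly. Summing the resulting bound $E(\xi_t)_{kl}^2\le K^5$ over the $r^2$ entries yields $M=r^2K^5$.

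The main point to handle carefully is the following. A naive bound through $\|\tilde\Lambda_0\|_F^4\,E\|\tilde\varepsilon_t\tilde\varepsilon_t'-\Sigma_{\varepsilon,t}\|^2/N^2$ would only give order $N^2$, as in the proof of Lemma~\ref{lem:projection_bounds}(iii). The $1/N$ normalization is absorbed only by first projecting the errors onto the fixed directions $u_k$. This is exactly what Assumption~C(v) provides, so the key step is to reduce everything to those linear forms.

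If one prefers to avoid C(v), a fallback is to expand $(\xi_t)_{kl}$ as a double sum over $i,j$ and use the $4+\delta$ moments in C(i) together with the absolute row-summability of $\Sigma_{\varepsilon,t}$ in C(iv). However, that route needs a fourth-order cumulant summability condition that is not assumed, so the linear-form argument is the one to use.
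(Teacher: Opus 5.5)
Your proposal is correct and follows essentially the same route as the paper. Both arguments use independence across \(t\) (Assumption~C(ii)) to reduce the bound to \(\sup_t E\|\xi_t\|_F^2\), then bound each of the \(r^2\) entries by normalizing the loading columns to deterministic unit vectors, controlling the scale via Assumption~B(ii), and applying the linear-form bound in Assumption~C(v) after dropping the centering by \(\mathrm{Var}(X)\le E(X^2)\).
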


\begin{proof}[Proof of Lemma~\ref{lem:quadratic_form_bounds}]
Define the \(r\times r\) random matrix
\[
Z_t
:=
\frac{1}{N}\,
\tilde\Lambda_0'
\Big(\tilde\varepsilon_t\tilde\varepsilon_t'-\Sigma_{\varepsilon,t}\Big)
\tilde\Lambda_0,
\qquad t=p+1,\ldots,T.
\]
Let \(a_k:=N^{-1/2}\tilde\Lambda_{0,\cdot k}\) be the \(k\)th normalized loading vector. Then the \((k,\ell)\) element of \(Z_t\) is
\[
(Z_t)_{k\ell}
=
(a_k'\tilde\varepsilon_t)(a_\ell'\tilde\varepsilon_t)
-
E\!\left[(a_k'\tilde\varepsilon_t)(a_\ell'\tilde\varepsilon_t)\right].
\]
By Assumption~C(ii), \(\{Z_t\}\) are independent across \(t\), and \(E(Z_t)=0\). Hence,
\[
E\Big\|T_0^{-1/2}\sum_{t=p+1}^T Z_t\Big\|_F^2
=
\frac{1}{T_0}\sum_{t=p+1}^T E\|Z_t\|_F^2.
\]

Since \(r\) is fixed and \(\|Z_t\|_F^2=\sum_{k,\ell\le r}(Z_t)_{k\ell}^2\), it suffices to bound \(E|(Z_t)_{k\ell}|^2\) uniformly in \(t\). Write
\[
a_k=\|a_k\|u_k,\qquad a_\ell=\|a_\ell\|u_\ell,
\]
where \(u_k,u_\ell\) are deterministic unit vectors whenever \(a_k\neq 0\), \(a_\ell\neq 0\). By Assumption~B(ii),
\[
\|a_k\|^2=\frac{1}{N}\|\tilde\Lambda_{0,\cdot k}\|^2=O(1),
\qquad
\|a_\ell\|^2=\frac{1}{N}\|\tilde\Lambda_{0,\cdot \ell}\|^2=O(1).
\]
Therefore, using \(\Var(X)\le E(X^2)\),
\begin{align*}
E|(Z_t)_{k\ell}|^2
&\le
E\Big[(a_k'\tilde\varepsilon_t)^2(a_\ell'\tilde\varepsilon_t)^2\Big] \\
&=
\|a_k\|^2\|a_\ell\|^2\,
E\Big[(u_k'\tilde\varepsilon_t)^2(u_\ell'\tilde\varepsilon_t)^2\Big]
\le K,
\end{align*}
where the last inequality follows from Assumption~C(v) and the uniform boundedness of \(\|a_k\|,\|a_\ell\|\).

Consequently,
\[
\sup_t E\|Z_t\|_F^2
\le
\sum_{k,\ell\le r}\sup_t E|(Z_t)_{k\ell}|^2
\le r^2K<\infty.
\]
Hence
\[
E\Big\|T_0^{-1/2}\sum_{t=p+1}^T Z_t\Big\|_F^2
\le
\sup_t E\|Z_t\|_F^2
\le
r^2K=:M.
\]
\end{proof}

\begin{lemma}
\label{lem:loading_error_bounds}
Under Assumptions~A--D, let
\[
H:=\Big(\frac{1}{T_0}F_0^{\prime}F_0\Big)\Big(\frac{1}{N}\tilde\Lambda_0'\hat\Lambda\Big)\hat V_{NT_0}^{-1},
\qquad
F_0:=(f_{p+1}^0,\ldots,f_T^0)'.
\]
Then \(H\) is nonsingular with probability approaching one, and the following hold:
\begin{enumerate}[(i)]
\item
\[
\left\|
\frac{1}{N}\tilde\Lambda_0'
(\hat\Lambda-\tilde\Lambda_0 H)
\right\|_F
=
O_p\!\big(\|\hat\beta-\beta_0\|\big)
+
O_p\!\Big(\frac{1}{\min\{\sqrt N,\sqrt{T_0}\}}\Big).
\]

\item
\[
\left\|
\frac{1}{N}
(\hat\Lambda-\tilde\Lambda_0 H)'
(\hat\Lambda-\tilde\Lambda_0 H)
\right\|_F
=
O_p\!\big(\|\hat\beta-\beta_0\|^2\big)
+
O_p\!\Big(\frac{1}{\min\{N,T_0\}}\Big).
\]

\item
For any fixed \(t\in\{p+1,\ldots,T\}\),
\[
\left\|
\frac{1}{N}
\tilde Y_t'
(\hat\Lambda-\tilde\Lambda_0 H)
\right\|
=
O_p\!\big(\|\hat\beta-\beta_0\|\big)
+
O_p\!\Big(\frac{1}{\min\{\sqrt N,\sqrt{T_0}\}}\Big).
\]

\item
\[
\left\|
\frac{1}{NT_0}
\sum_{t=p+1}^T
\tilde W_t'
M_{\hat\Lambda}
(\hat\Lambda-\tilde\Lambda_0 H)
\right\|_F
=
O_p\!\big(\|\hat\beta-\beta_0\|\big)
+
O_p\!\Big(\frac{1}{\min\{\sqrt N,\sqrt{T_0}\}}\Big).
\]
\end{enumerate}
\end{lemma}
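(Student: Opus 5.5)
The whole argument runs through the exact identity \eqref{eq:Lambda_minus_LH} from the proof of Proposition~\ref{prop:loading_pca},
\[
\hat\Lambda-\tilde\Lambda_0 H = R\hat\Lambda\hat V_{NT_0}^{-1},
\]
where \(R=\hat S-S_f\) is expanded into the eight blocks generated by \(s_t\), \(\varepsilon_t\) and \(r_t=-\tilde W_t(\hat\beta-\beta_0)\). Nonsingularity of \(H\) is already Step~4 of that proposition. Since \(\|\hat V_{NT_0}^{-1}\|=O_p(1)\) and \(N^{-1/2}\|\hat\Lambda\|_F=\sqrt r\), each item reduces to bounding a left-multiplied version of \(R\hat\Lambda\) block by block.

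The crude bound \(\|R\|=O_p(\|\hat\beta-\beta_0\|)+O_p(T_0^{-1/2})+O(N^{-1})\) from \eqref{eq:R_rate} settles (ii) at once:
\[
\Bigl\|N^{-1}(\hat\Lambda-\tilde\Lambda_0H)'(\hat\Lambda-\tilde\Lambda_0H)\Bigr\|_F\le N^{-1}\|\hat\Lambda-\tilde\Lambda_0H\|_F^2,
\]
and Proposition~\ref{prop:loading_pca}(ii) gives the stated rate. For (i) I write \(N^{-1}\tilde\Lambda_0'R\hat\Lambda\hat V^{-1}\) and treat each block separately.

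\emph{Blocks with \(r_t\).} Cauchy--Schwarz with \(N^{-1}T_0^{-1}\sum\|\tilde W_t\|_F^2=O_p(1)\) and \(N^{-1/2}\|\tilde\Lambda_0\|=O(1)\) gives \(O_p(\|\hat\beta-\beta_0\|)\).

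\emph{Block \(s_t\varepsilon_t'\).} This equals
\[
N^{-1}\tilde\Lambda_0'\tilde\Lambda_0\,(NT_0)^{-1}\sum_t f_t^0\varepsilon_t'\hat\Lambda.
\]
The factor \((NT_0)^{-1}\sum f_t^0\varepsilon_t'\hat\Lambda\) is bounded by \(\|(NT_0)^{-1}\sum f_t^0\varepsilon_t'\|_F\,\|\hat\Lambda\|\), which is \(O_p((NT_0)^{-1/2})\sqrt N=O_p(T_0^{-1/2})\).

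\emph{Block \(\varepsilon_ts_t'\).} Here \(\hat\Lambda=\tilde\Lambda_0H+(\hat\Lambda-\tilde\Lambda_0H)\). Against \(\tilde\Lambda_0H\) the term is \((NT_0)^{-1}\tilde\Lambda_0'\sum\varepsilon_tf_t^{0\prime}(\cdot)\). Its variance is \(O((NT_0)^{-1})\) by Assumption~C(iii), so it contributes \(O_p((NT_0)^{-1/2})\).

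\emph{Block \(\varepsilon_t\varepsilon_t'\).} I split it the same way. Against \(\tilde\Lambda_0H\) it becomes \(N^{-2}T_0^{-1}\sum\tilde\Lambda_0'\varepsilon_t\varepsilon_t'\tilde\Lambda_0\). The mean part is \(O(N^{-1})\) by Assumption~C(iii). The centered part is \(O_p(N^{-1}T_0^{-1/2})\) by Lemma~\ref{lem:quadratic_form_bounds}. Against \(\hat\Lambda-\tilde\Lambda_0H\) I use Cauchy--Schwarz:
\[
\|(NT_0)^{-1}\textstyle\sum\varepsilon_t\varepsilon_t'\|\cdot N^{-1/2}\|\tilde\Lambda_0\|\cdot N^{-1/2}\|\hat\Lambda-\tilde\Lambda_0H\|_F .
\]
This is a product of small terms.

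\emph{Signal block.} The \(s_ts_t'\) block is absent because it forms \(S_f\). Collecting all blocks gives the rate \(O_p(\|\hat\beta-\beta_0\|)+O_p(\min\{\sqrt N,\sqrt{T_0}\}^{-1})\), with \(N^{-1}\le N^{-1/2}\).

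Item (iii), for fixed \(t\), follows from the same decomposition. Write \(\tilde Y_t=\tilde W_t\beta_0+\tilde\Lambda_0f_t^0+\tilde\varepsilon_t\). The loading piece is handled by (i) together with \(\|f_t^0\|\le K\). The other two pieces are bounded by \(N^{-1/2}\|\tilde W_t\beta_0+\tilde\varepsilon_t\|\,N^{-1/2}\|\hat\Lambda-\tilde\Lambda_0H\|_F\). Here \(N^{-1/2}\|\tilde\varepsilon_t\|=O_p(1)\) by C(i), and \(N^{-1}E\|\tilde W_t\|_F^2=O(1)\). This crude route gives only the rate in (ii)'s square root, i.e.\ \(O_p(\|\hat\beta-\beta_0\|)+O_p(\min\{N,T_0\}^{-1/2})\), which is exactly what is claimed, so it suffices.

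Item (iv) is where the lagged-dependent regressors bite. I expand \(M_{\hat\Lambda}(\hat\Lambda-\tilde\Lambda_0H)\). Since \(M_{\hat\Lambda}\hat\Lambda=0\), this equals \(-M_{\hat\Lambda}\tilde\Lambda_0H\), but I will keep the \(R\hat\Lambda\hat V^{-1}\) form. The term to bound is \((NT_0)^{-1}\sum_t\tilde W_t'M_{\hat\Lambda}R\hat\Lambda\hat V^{-1}\). The factor \((NT_0)^{-1}\sum\|\tilde W_t\|\) is \(O_p(N^{-1/2})\) times \(\sqrt N\), so a crude operator-norm bound gives \(O_p(1)\cdot\|R\|\cdot\sqrt N\cdot N^{-1/2}=O_p(\|R\|)\). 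That already yields \(O_p(\|\hat\beta-\beta_0\|)+O_p(T_0^{-1/2})+O(N^{-1})\), which lies within the stated rate. I would try this crude route first and expect it to suffice.

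The main obstacle is in (i): the \(\varepsilon_t\varepsilon_t'\) block against the estimated \(\hat\Lambda\), which is correlated with \(\varepsilon\). The splitting \(\hat\Lambda=\tilde\Lambda_0H+(\hat\Lambda-\tilde\Lambda_0H)\) is needed to avoid a circular dependence, and it may require absorbing a small multiple of the left-hand side into itself.
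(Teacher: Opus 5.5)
Your proposal is correct. For the nonsingularity of $H$ and for items (ii)--(iv) it is essentially the paper's argument: Cauchy--Schwarz against the Frobenius rate $N^{-1/2}\|\hat\Lambda-\tilde\Lambda_0H\|_F=O_p(\|\hat\beta-\beta_0\|)+O_p(\min\{N,T_0\}^{-1/2})$ from Proposition~\ref{prop:loading_pca}(ii). Your bound through $\|R\|$ in (iv) is the same estimate before squaring.

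The genuine difference is in (i). The paper does not expand $N^{-1}\tilde\Lambda_0'R\hat\Lambda\hat V_{NT_0}^{-1}$ block by block; it writes
\[
\Bigl\|N^{-1}\tilde\Lambda_0'(\hat\Lambda-\tilde\Lambda_0H)\Bigr\|_F\le \bigl(N^{-1/2}\|\tilde\Lambda_0\|_F\bigr)\bigl(N^{-1/2}\|\hat\Lambda-\tilde\Lambda_0H\|_F\bigr),
\]
with $N^{-1/2}\|\tilde\Lambda_0\|_F=O(1)$ by Assumption~B(ii). The rate claimed in (i) is exactly the square root of the rate in (ii), which is the observation you yourself make for (iii). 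So this one line suffices.

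For the same reason, the ``main obstacle'' you flag (dependence between $\hat\Lambda$ and $\tilde\varepsilon_t$ in the $\varepsilon_t\varepsilon_t'$ block) never arises for the stated rate. No self-absorption is needed even within your route. The $\hat\Lambda-\tilde\Lambda_0H$ piece is controlled by the already established Proposition~\ref{prop:loading_pca}(ii) multiplied by $\|(NT_0)^{-1}\sum_t\tilde\varepsilon_t\tilde\varepsilon_t'\|=o_p(1)$. In the $\varepsilon_ts_t'$ block the split of $\hat\Lambda$ is also unnecessary, since $\hat\Lambda$ enters only through $N^{-1}\tilde\Lambda_0'\hat\Lambda=O(1)$.

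What your block decomposition buys is a sharper bound, $O_p(\|\hat\beta-\beta_0\|)+O_p(T_0^{-1/2})+O_p(N^{-1})$. There the pure-noise contribution is $N^{-1}$ rather than $N^{-1/2}$, via Assumption~C(iii) for the mean of $\tilde\Lambda_0'\tilde\varepsilon_t\tilde\varepsilon_t'\tilde\Lambda_0$ and Lemma~\ref{lem:quadratic_form_bounds} for its centered part. This is the kind of Bai-type refinement one would need to push cross terms down to $o_p((NT_0)^{-1/2})$ in Proposition~\ref{prop:beta_linear_expansion}. It is more work than the lemma as stated requires.
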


\begin{proof}[Proof of Lemma~\ref{lem:loading_error_bounds}]
By Proposition~\ref{prop:loading_pca}(ii), \(H\) is nonsingular with probability approaching one and
\begin{equation}\label{eq:A3_rate_core}
\frac{1}{N}\|\hat\Lambda-\tilde\Lambda_0 H\|_F^2
=
O_p\!\big(\|\hat\beta-\beta_0\|^2\big)
+
O_p\!\Big(\frac{1}{\min\{N,T_0\}}\Big).
\end{equation}
In particular,
\begin{equation}\label{eq:A3_norm_core}
\frac{1}{\sqrt N}\|\hat\Lambda-\tilde\Lambda_0 H\|_F
=
O_p\!\big(\|\hat\beta-\beta_0\|\big)
+
O_p\!\Big(\frac{1}{\min\{\sqrt N,\sqrt{T_0}\}}\Big).
\end{equation}

\medskip
\noindent\textbf{(ii).}
This is exactly \eqref{eq:A3_rate_core}, since \(r\) is fixed.

\medskip
\noindent\textbf{(i).}
By Cauchy--Schwarz,
\[
\Big\|\frac{1}{N}\tilde\Lambda_0'(\hat\Lambda-\tilde\Lambda_0 H)\Big\|_F
\le
\frac{1}{N}\|\tilde\Lambda_0\|_F\,\|\hat\Lambda-\tilde\Lambda_0 H\|_F.
\]
Under Assumption~B(ii),
\[
N^{-1}\|\tilde\Lambda_0\|_F^2=\mathrm{tr}\!\left(N^{-1}\tilde\Lambda_0'\tilde\Lambda_0\right)=O(1),
\]
so \(\|\tilde\Lambda_0\|_F=O_p(\sqrt N)\). Combining this with \eqref{eq:A3_norm_core} yields
\[
\Big\|\frac{1}{N}\tilde\Lambda_0'(\hat\Lambda-\tilde\Lambda_0 H)\Big\|_F
=
O_p\!\big(\|\hat\beta-\beta_0\|\big)
+
O_p\!\Big(\frac{1}{\min\{\sqrt N,\sqrt{T_0}\}}\Big).
\]

\medskip
\noindent\textbf{(iii).}
Fix \(t\). Decompose \(\tilde Y_t=\tilde W_t\beta_0+\tilde\Lambda_0 f_t^0+\tilde\varepsilon_t\) and write
\[
\frac{1}{N}\tilde Y_t'(\hat\Lambda-\tilde\Lambda_0 H)
=
\frac{1}{N}(\tilde W_t\beta_0)'(\hat\Lambda-\tilde\Lambda_0 H)
+
\frac{1}{N}(\tilde\Lambda_0 f_t^0+\tilde\varepsilon_t)'(\hat\Lambda-\tilde\Lambda_0 H).
\]
By Cauchy--Schwarz,
\[
\Big\|\frac{1}{N}(\tilde W_t\beta_0)'(\hat\Lambda-\tilde\Lambda_0 H)\Big\|
\le
\frac{1}{N}\|\tilde W_t\|_F\,\|\beta_0\|\,\|\hat\Lambda-\tilde\Lambda_0 H\|_F.
\]
For fixed \(t\), Assumption~A(i), together with Assumptions~B and C, implies \(\|\tilde W_t\|_F=O_p(\sqrt N)\), and \eqref{eq:A3_norm_core} yields the claimed rate for this term.

For the second term, again by Cauchy--Schwarz,
\[
\Big\|\frac{1}{N}(\tilde\Lambda_0 f_t^0+\tilde\varepsilon_t)'(\hat\Lambda-\tilde\Lambda_0 H)\Big\|
\le
\frac{1}{N}\|\tilde\Lambda_0 f_t^0+\tilde\varepsilon_t\|\,\|\hat\Lambda-\tilde\Lambda_0 H\|_F.
\]
For fixed \(t\), Assumptions~B(i) and C(i)--(iii) imply
\[
\|\tilde\Lambda_0 f_t^0+\tilde\varepsilon_t\|=O_p(\sqrt N),
\]
and \eqref{eq:A3_norm_core} again yields the claimed rate. Combining the two bounds proves (iii).

\medskip
\noindent\textbf{(iv).}
Using \(\|M_{\hat\Lambda}\|\le 1\) and Cauchy--Schwarz,
\[
\Big\|
\frac{1}{NT_0}\sum_{t=p+1}^T
\tilde W_t'
M_{\hat\Lambda}
(\hat\Lambda-\tilde\Lambda_0 H)
\Big\|_F
\le
\Big(\frac{1}{NT_0}\sum_{t=p+1}^T \|\tilde W_t\|_F^2\Big)^{1/2}
\Big(\frac{1}{N}\|\hat\Lambda-\tilde\Lambda_0 H\|_F^2\Big)^{1/2}.
\]
Assumption~A(i), together with Assumptions~B and C, implies
\[
\frac{1}{NT_0}\sum_{t=p+1}^T \|\tilde W_t\|_F^2=O_p(1),
\]
and \eqref{eq:A3_rate_core} gives
\[
\Big(\frac{1}{N}\|\hat\Lambda-\tilde\Lambda_0 H\|_F^2\Big)^{1/2}
=
O_p\!\big(\|\hat\beta-\beta_0\|\big)
+
O_p\!\Big(\frac{1}{\min\{\sqrt N,\sqrt{T_0}\}}\Big),
\]
which yields (iv).
\end{proof}

\begin{lemma}
\label{lem:cross_term_negligibility}
Under Assumptions~A--D, the loading error bounds in Lemma~\ref{lem:loading_error_bounds}
imply that the following hold:
\begin{enumerate}[(i)]
\item
\[
\frac{1}{NT_0}
\sum_{t=p+1}^T
\tilde\varepsilon_t'
(\hat\Lambda-\tilde\Lambda_0 H)
=
o_p(1)
\qquad (\text{as a }1\times r\text{ row vector}).
\]

\item
\[
\frac{1}{NT_0}
\sum_{t=p+1}^T
f_t^{0\prime}
\tilde\Lambda_0'
(\hat\Lambda-\tilde\Lambda_0 H)
\tilde\varepsilon_t
=
o_p(1)
\qquad (\text{as a scalar}).
\]

\item
\[
\Bigg\|
\frac{1}{NT_0}
\sum_{t=p+1}^T
\tilde\varepsilon_t
(\hat\Lambda-\tilde\Lambda_0 H)'
\tilde\Lambda_0 f_t^0
\Bigg\|
=
o_p(1)
\qquad (\text{Euclidean norm in }\mathbb R^N).
\]
\end{enumerate}
\end{lemma}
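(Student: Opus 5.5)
The approach is to bound each of the three quantities by Cauchy--Schwarz. Each bound splits into a factor involving $\tilde\varepsilon_t$, controlled by Assumptions~B and~C, and the loading-error factor $N^{-1/2}\|\hat\Lambda-\tilde\Lambda_0H\|_F$. By \eqref{eq:A3_norm_core} and Proposition~\ref{prop:consistency_ours}(i), the latter is $O_p(\|\hat\beta-\beta_0\|)+O_p(\min\{N,T_0\}^{-1/2})=o_p(1)$. Because $\hat\Lambda$ depends on the whole sample, I will not attempt any martingale or independence argument on the products themselves. Instead I will always pull out the deterministic or time-averaged $\varepsilon$-part first and then use crude norm inequalities.

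\textbf{Part (i).} Write the quantity as $\frac{1}{N}\big(\frac{1}{T_0}\sum_t\tilde\varepsilon_t\big)'(\hat\Lambda-\tilde\Lambda_0H)$. Its norm is at most
\[
\frac{1}{\sqrt N}\Big\|\frac{1}{T_0}\sum_t\tilde\varepsilon_t\Big\|\cdot\frac{1}{\sqrt N}\|\hat\Lambda-\tilde\Lambda_0H\|_F .
\]
By independence across $t$ (C(ii)) and C(iii), $E\|T_0^{-1}\sum_t\tilde\varepsilon_t\|^2=T_0^{-2}\sum_t\tr\Sigma_{\varepsilon,t}\le KN/T_0$, so the first factor is $O_p(T_0^{-1/2})$. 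The product is therefore $o_p(1)$, in fact of smaller order.

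\textbf{Part (ii).} The term is a scalar, $\frac{1}{NT_0}\sum_t f_t^{0\prime}\tilde\Lambda_0'(\hat\Lambda-\tilde\Lambda_0H)\tilde\varepsilon_t$. As written, $(\hat\Lambda-\tilde\Lambda_0H)$ is $N\times r$ and $\tilde\varepsilon_t$ is $N\times1$, so the dimensions do not match literally. I will read the expression as
\[
\frac{1}{NT_0}\sum_t f_t^{0\prime}\,\big[\tilde\Lambda_0'(\hat\Lambda-\tilde\Lambda_0H)/N\big]\,\hat\Lambda'\tilde\varepsilon_t ,
\]
or equivalently as $\frac{1}{NT_0}\sum_t f_t^{0\prime}H'(\hat\Lambda-\tilde\Lambda_0H)'\tilde\varepsilon_t$. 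The latter is the form that arises in the Bai expansion. I will treat it as
\[
\tr\Big[(\hat\Lambda-\tilde\Lambda_0H)'\,\frac{1}{NT_0}\sum_t\tilde\varepsilon_t f_t^{0\prime}\,H'\Big].
\]
Its absolute value is at most $\|H\|\cdot N^{-1/2}\|\hat\Lambda-\tilde\Lambda_0H\|_F\cdot N^{-1/2}\|T_0^{-1}\sum_t\tilde\varepsilon_tf_t^{0\prime}\|_F$. The matrix $T_0^{-1}\sum_t\tilde\varepsilon_tf_t^{0\prime}$ has mean zero and independent summands, and by B(i) and C(iii) its squared Frobenius norm has expectation at most $T_0^{-2}\sum_t\|f^0_t\|^2\tr\Sigma_{\varepsilon,t}=O(N/T_0)$. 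Also $\|H\|=O_p(1)$, since $\|T_0^{-1}F_0'F_0\|$, $\|N^{-1}\tilde\Lambda_0'\hat\Lambda\|\le\|N^{-1/2}\tilde\Lambda_0\|$ and $\|\hat V_{NT_0}^{-1}\|$ are all $O_p(1)$ by Proposition~\ref{prop:loading_pca}(i). The product is therefore $O_p(T_0^{-1/2})\,o_p(1)$. Under the literal reading with the extra factor $\tilde\Lambda_0'$, the same argument goes through, using Lemma~\ref{lem:loading_error_bounds}(i) to control $N^{-1}\tilde\Lambda_0'(\hat\Lambda-\tilde\Lambda_0H)$.

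\textbf{Part (iii).} Write the vector as $\frac{1}{NT_0}\sum_t\tilde\varepsilon_t\,g_t$ with the scalar $g_t:=(\hat\Lambda-\tilde\Lambda_0H)'\tilde\Lambda_0f_t^0$, interpreted with matching dimensions, i.e.\ an $r$-vector contracted against a conformable factor. Then
\[
\frac{1}{NT_0}\sum_t\tilde\varepsilon_t\,f_t^{0\prime}\,G,\qquad G:=\tilde\Lambda_0'(\hat\Lambda-\tilde\Lambda_0H)\in\mathbb R^{r\times r}.
\]
The Euclidean norm is at most $\|N^{-1/2}T_0^{-1}\sum_t\tilde\varepsilon_tf_t^{0\prime}\|\cdot\|N^{-1/2}G\|$. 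The first factor is $O_p(T_0^{-1/2})$, as in (ii). For the second, $\|N^{-1/2}G\|=\sqrt N\,\|N^{-1}G\|$, and by Lemma~\ref{lem:loading_error_bounds}(i) this is $\sqrt N\,[O_p(\|\hat\beta-\beta_0\|)+O_p(\min\{N,T_0\}^{-1/2})]$. The product is $O_p(\sqrt{N/T_0})\,o_p(1)$, which is $o_p(1)$ because $N/T_0\to\rho<\infty$.

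\textbf{Main obstacle.} The delicate step is (iii). There the extra $\sqrt N$ must be offset by $T_0^{-1/2}$, which requires the ratio condition $N/T_0=O(1)$. This condition is implicit in the regime of Theorem~\ref{thm:rate_beta} rather than in Assumptions~A--D alone, so I would state it explicitly. The other delicate point is fixing the conformable interpretation of the products in (ii)--(iii); I would settle it first, according to where these terms enter Proposition~\ref{prop:beta_linear_expansion}.
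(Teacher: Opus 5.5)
Your argument is correct, and it takes a different route from the paper. Two provisos apply. First, it relies on a conformable reading of (ii) and (iii), which the statement as printed does not have. Second, part (iii) uses the extra condition $N/T_0=O(1)$, which you rightly make explicit; it holds wherever the lemma is invoked, namely in Proposition~\ref{prop:beta_linear_expansion}.

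\textbf{How the routes differ.} The paper uses crude Cauchy--Schwarz in all three parts. It pairs $\bigl(\frac{1}{NT_0}\sum_t\|\tilde\varepsilon_t\|^2\bigr)^{1/2}=O_p(1)$ with the loading error, and never exploits the centering or the independence over $t$ of $\tilde\varepsilon_t$.

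\textbf{Part (i).} Here the crude bound suffices and is simpler than your averaging step, though yours gives the sharper rate $o_p(T_0^{-1/2})$.

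\textbf{Parts (ii)--(iii).} The paper's second factor is bounded by $N^{-1}\|\hat\Lambda-\tilde\Lambda_0H\|^2\|\tilde\Lambda_0\|^2\cdot T_0^{-1}\sum_t\|f_t^0\|^2$. Since $\|\hat\Lambda-\tilde\Lambda_0H\|=o_p(\sqrt N)$ and $\|\tilde\Lambda_0\|=O(\sqrt N)$, this quantity is $o_p(N)$, not $o_p(1)$. The crude route therefore delivers only $o_p(\sqrt N)$. Your device fixes this. You first form $T_0^{-1}\sum_t\tilde\varepsilon_t f_t^{0\prime}$, whose Frobenius norm is $O_p(\sqrt{N/T_0})$ by independence and mean zero, and only then apply norm inequalities. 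That step supplies exactly the missing $T_0^{-1/2}$ and makes (ii) and (iii) close. The price is the ratio condition in (iii), which is genuinely needed there when you have only consistency of $\hat\beta$.

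\textbf{Conformability.} You are right that (ii) and (iii) are not conformable as written. The paper's rewriting $\{(\hat\Lambda-\tilde\Lambda_0H)'\tilde\Lambda_0f_t^0\}'\tilde\varepsilon_t$ still pairs an $r$-vector with an $N$-vector. Your outer-product reading of (iii), $\frac{1}{NT_0}\sum_t\tilde\varepsilon_t f_t^{0\prime}\tilde\Lambda_0'(\hat\Lambda-\tilde\Lambda_0H)$, is the natural one: it is the $\Delta$-part of $\mathcal J_7$.

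\textbf{A small correction.} Your two readings of (ii), $f_t^{0\prime}[N^{-1}\tilde\Lambda_0'(\hat\Lambda-\tilde\Lambda_0H)]\hat\Lambda'\tilde\varepsilon_t$ and $f_t^{0\prime}H'(\hat\Lambda-\tilde\Lambda_0H)'\tilde\varepsilon_t$, are different objects, not equivalent ones. The same averaging bound nonetheless handles both.
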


\begin{proof}[Proof of Lemma~\ref{lem:cross_term_negligibility}]
We prove parts (i)--(iii) in turn.
By Proposition~\ref{prop:consistency_ours}(i), \(\hat\beta\xrightarrow{p}\beta_0\). Hence Lemma~\ref{lem:loading_error_bounds}(ii) implies
\[
\frac{1}{N}\|\hat\Lambda-\tilde\Lambda_0 H\|_F^2=o_p(1).
\]

\medskip
\noindent\textbf{(i).}
By Cauchy--Schwarz,
\[
\Big\|
\frac{1}{NT_0}\sum_{t=p+1}^T
\tilde\varepsilon_t'(\hat\Lambda-\tilde\Lambda_0 H)
\Big\|
\le
\Big(\frac{1}{NT_0}\sum_{t=p+1}^T\|\tilde\varepsilon_t\|^2\Big)^{1/2}
\Big(\frac{1}{N}\|\hat\Lambda-\tilde\Lambda_0 H\|_F^2\Big)^{1/2}.
\]
By Assumption~C(i),
\[
\frac{1}{NT_0}\sum_{t=p+1}^T\|\tilde\varepsilon_t\|^2=O_p(1),
\]
and by Lemma~\ref{lem:loading_error_bounds}(ii),
\[
\frac{1}{N}\|\hat\Lambda-\tilde\Lambda_0 H\|_F^2=o_p(1).
\]
Hence the product is \(o_p(1)\).

\medskip
\noindent\textbf{(ii).}
Write each summand as
\[
f_t^{0\prime}\tilde\Lambda_0'(\hat\Lambda-\tilde\Lambda_0 H)\tilde\varepsilon_t
=
\big\{(\hat\Lambda-\tilde\Lambda_0 H)'\tilde\Lambda_0 f_t^0\big\}'\tilde\varepsilon_t.
\]
By Cauchy--Schwarz,
\[
\left|
\frac{1}{NT_0}\sum_{t=p+1}^T
\big\{(\hat\Lambda-\tilde\Lambda_0 H)'\tilde\Lambda_0 f_t^0\big\}'\tilde\varepsilon_t
\right|
\le
\Big(\frac{1}{NT_0}\sum_{t=p+1}^T \|\tilde\varepsilon_t\|^2\Big)^{1/2}
\Big(\frac{1}{NT_0}\sum_{t=p+1}^T \|(\hat\Lambda-\tilde\Lambda_0 H)'\tilde\Lambda_0 f_t^0\|^2\Big)^{1/2}.
\]
For the second factor, use \(\|A'x\|\le \|A\|\,\|x\|\) and \(\|\tilde\Lambda_0 f_t^0\|\le \|\tilde\Lambda_0\|\,\|f_t^0\|\):
\[
\|(\hat\Lambda-\tilde\Lambda_0 H)'\tilde\Lambda_0 f_t^0\|
\le
\|\hat\Lambda-\tilde\Lambda_0 H\|\,\|\tilde\Lambda_0\|\,\|f_t^0\|.
\]
Under Assumption~B(ii), \(\|\tilde\Lambda_0\|=O(\sqrt N)\), and Lemma~\ref{lem:loading_error_bounds}(ii) implies
\[
\|\hat\Lambda-\tilde\Lambda_0 H\|=o_p(\sqrt N),
\]
since \(r\) is fixed and \(\|\cdot\|\le \|\cdot\|_F\). Hence
\[
\frac{1}{NT_0}\sum_{t=p+1}^T \|(\hat\Lambda-\tilde\Lambda_0 H)'\tilde\Lambda_0 f_t^0\|^2
\le
\frac{\|\hat\Lambda-\tilde\Lambda_0 H\|^2\,\|\tilde\Lambda_0\|^2}{N}
\cdot
\Big(\frac{1}{T_0}\sum_{t=p+1}^T\|f_t^0\|^2\Big)
=o_p(1),
\]
where Assumption~B(i) gives \(T_0^{-1}\sum_t\|f_t^0\|^2=O(1)\). Combining this with Assumption~C(i) proves (ii).

\medskip
\noindent\textbf{(iii).}
By Cauchy--Schwarz,
\[
\Bigg\|
\frac{1}{NT_0}\sum_{t=p+1}^T
\tilde\varepsilon_t
(\hat\Lambda-\tilde\Lambda_0 H)'
\tilde\Lambda_0 f_t^0
\Bigg\|
\le
\Big(\frac{1}{NT_0}\sum_{t=p+1}^T\|\tilde\varepsilon_t\|^2\Big)^{1/2}
\Big(\frac{1}{NT_0}\sum_{t=p+1}^T\|(\hat\Lambda-\tilde\Lambda_0 H)'\tilde\Lambda_0 f_t^0\|^2\Big)^{1/2}.
\]
The first factor is \(O_p(1)\) by Assumption~C(i), and the second factor is \(o_p(1)\) by the bound established in the proof of part (ii). Hence (iii) follows.
\end{proof}

\begin{lemma}
\label{lem:projection_quadratic_lln}
Under Assumptions~C(i)--(ii) and \(r\) fixed,
\[
\sup_{\tilde\Lambda\in\mathcal F}
\left|
\frac{1}{NT_0}
\sum_{t=p+1}^T
\tilde\varepsilon_t'
P_{\tilde\Lambda}
\tilde\varepsilon_t
-
\frac{1}{NT_0}
\sum_{t=p+1}^T
E\!\left[
\tilde\varepsilon_t'
P_{\tilde\Lambda}
\tilde\varepsilon_t
\right]
\right|
=
o_p(1),
\qquad
P_{\tilde\Lambda}:=N^{-1}\tilde\Lambda\tilde\Lambda'.
\]
\end{lemma}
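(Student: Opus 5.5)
The plan is to reduce the supremum over \(\mathcal F\) to an operator norm of a single centered random matrix, mirroring the argument for part (iii) of Lemma~\ref{lem:projection_bounds}. For \(\tilde\Lambda\in\mathcal F\), write \(P_{\tilde\Lambda}=UU'\) with \(U:=N^{-1/2}\tilde\Lambda\) and \(U'U=I_r\). Define
\[
C_t:=\tilde\varepsilon_t\tilde\varepsilon_t'-\Sigma_{\varepsilon,t},
\qquad
\bar C:=\frac{1}{T_0}\sum_{t=p+1}^T C_t .
\]
Then \(E[\tilde\varepsilon_t'P_{\tilde\Lambda}\tilde\varepsilon_t]=\operatorname{tr}(P_{\tilde\Lambda}\Sigma_{\varepsilon,t})\), and \(\tilde\Lambda\) is deterministic inside the supremum. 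Hence the quantity inside the absolute value equals
\[
\frac1N\operatorname{tr}\bigl(U'\bar C\,U\bigr).
\]

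The first step is to bound this trace uniformly in \(U\). Because \(U'\bar C U\) is an \(r\times r\) symmetric matrix,
\[
|\operatorname{tr}(U'\bar CU)|\le r\|U'\bar CU\|\le r\|\bar C\|,
\]
using \(\|U\|=1\). This gives the bound
\[
\sup_{\tilde\Lambda\in\mathcal F}\Bigl|\cdots\Bigr|\le \frac rN\|\bar C\|,
\]
and the right-hand side no longer depends on \(\tilde\Lambda\). The uniformity problem therefore disappears entirely, because the projection has fixed rank \(r\).

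The second step is to control \(\|\bar C\|\) by its Frobenius norm. By Assumption~C(ii), the \(C_t\) are independent and mean zero, so
\[
E\|\bar C\|_F^2=T_0^{-2}\sum_t E\|C_t\|_F^2 .
\]
Next, \(E\|C_t\|_F^2\le E(\tilde\varepsilon_t'\tilde\varepsilon_t)^2\), since subtracting the mean reduces the second moment. By Hölder and the fourth moments in Assumption~C(i), this is at most \(\sum_{i,j}(E\tilde\varepsilon_{it}^4)^{1/2}(E\tilde\varepsilon_{jt}^4)^{1/2}\le KN^2\). Therefore \(\|\bar C\|=O_p(N/\sqrt{T_0})\), and consequently \((r/N)\|\bar C\|=O_p(T_0^{-1/2})=o_p(1)\).

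The only point needing care is the centered fourth-moment bound, where one must avoid invoking Assumption~C(iii). This is handled by the variance inequality \(\operatorname{Var}\le E(\cdot)^2\) applied entrywise. That route uses only C(i)--(ii), consistent with the stated hypotheses. Note that the Frobenius bound is crude, losing a factor of \(N\), but it suffices here because of the \(1/N\) normalization. No condition on \(N/T_0\) is needed beyond \(T_0\to\infty\).
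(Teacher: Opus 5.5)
Your proposal is correct and follows essentially the same route as the paper. Both write $P_{\tilde\Lambda}=UU'$, reduce the supremum to a bound on the single centered matrix $T_0^{-1}\sum_t(\tilde\varepsilon_t\tilde\varepsilon_t'-\Sigma_{\varepsilon,t})$, and control its Frobenius norm using independence, the entrywise inequality $\mathrm{Var}\le E(\cdot)^2$, and H\"older with C(i), which gives $O_p(T_0^{-1/2})$. The only cosmetic difference is in bounding the trace: the paper uses Frobenius Cauchy--Schwarz, $|\mathrm{tr}(U'AU)|\le\sqrt r\,\|A\|_F$, whereas you use $r\|\bar C\|\le r\|\bar C\|_F$, and this changes only the constant.
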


\begin{proof}[Proof of Lemma~\ref{lem:projection_quadratic_lln}]
For \(\tilde\Lambda\in\mathcal F\), write \(P_{\tilde\Lambda}=UU'\) with
\[
U:=N^{-1/2}\tilde\Lambda\in\mathbb R^{N\times r},
\qquad
U'U=I_r.
\]
Define
\[
A:=\frac{1}{T_0}\sum_{t=p+1}^T\Big(\tilde\varepsilon_t\tilde\varepsilon_t'
-E[\tilde\varepsilon_t\tilde\varepsilon_t']\Big).
\]
Then
\[
\frac{1}{NT_0}\sum_{t=p+1}^T
\Big(\tilde\varepsilon_t'P_{\tilde\Lambda}\tilde\varepsilon_t
-E[\tilde\varepsilon_t'P_{\tilde\Lambda}\tilde\varepsilon_t]\Big)
=
\frac{1}{N}\,\mathrm{tr}(P_{\tilde\Lambda}A)
=
\frac{1}{N}\,\mathrm{tr}(U'AU).
\]
Hence,
\[
\sup_{\tilde\Lambda\in\mathcal F}\frac{1}{N}\big|\mathrm{tr}(U'AU)\big|
\le
\sup_{U'U=I_r}\frac{1}{N}\|U'AU\|_F\,\|I_r\|_F
\le
\frac{\sqrt r}{N}\,\|A\|_F,
\]
where we used \(\|U'AU\|_F\le \|A\|_F\) for \(U'U=I_r\).

Let
\[
B_t:=\tilde\varepsilon_t\tilde\varepsilon_t'-E[\tilde\varepsilon_t\tilde\varepsilon_t'].
\]
Then \(A=T_0^{-1}\sum_{t=p+1}^T B_t\), \(E(B_t)=0\), and by Assumption~C(ii), \(\{B_t\}\) are independent across \(t\). Therefore,
\[
E\|A\|_F^2
=
\frac{1}{T_0^2}\sum_{t=p+1}^T E\|B_t\|_F^2.
\]
Moreover,
\[
E\|B_t\|_F^2
\le
E\|\tilde\varepsilon_t\tilde\varepsilon_t'\|_F^2
=
\sum_{i=1}^N\sum_{j=1}^N E(\tilde\varepsilon_{it}^2\tilde\varepsilon_{jt}^2).
\]
By Hölder's inequality and Assumption~C(i),
\[
E(\tilde\varepsilon_{it}^2\tilde\varepsilon_{jt}^2)
\le
\big(E|\tilde\varepsilon_{it}|^4\big)^{1/2}
\big(E|\tilde\varepsilon_{jt}|^4\big)^{1/2}
\le K,
\]
uniformly in \(t\). Hence \(E\|B_t\|_F^2\le KN^2\), and thus
\[
E\|A\|_F^2
\le
\frac{1}{T_0^2}\sum_{t=p+1}^T KN^2
=
\frac{K}{T_0}N^2.
\]
It follows that \(\|A\|_F=O_p(N/\sqrt{T_0})\), and therefore
\[
\sup_{\tilde\Lambda\in\mathcal F}\frac{1}{N}\big|\mathrm{tr}(U'AU)\big|
\le
\frac{\sqrt r}{N}\,O_p\!\left(\frac{N}{\sqrt{T_0}}\right)
=
O_p\!\left(\frac{1}{\sqrt{T_0}}\right)
=o_p(1).
\]
This proves the lemma.
\end{proof}

\begin{lemma}[Projected regressor average bound]
\label{lem:projected_regressor_average}
Let \(\{b_t\}_{t=p+1}^T\) be a sequence of \(N\)-dimensional non-stochastic vectors such that
\[
\sup_{p+1\le t\le T}\|b_t\| = O(1).
\]
Define
\[
\Gamma_T(b)
:=
\frac{1}{T_0}\sum_{t=p+1}^T b_t \tilde W_t' M_{\hat\Lambda},
\qquad T_0:=T-p.
\]
Under Assumptions~A--D,
\[
\frac{1}{\sqrt N}\,\|\Gamma_T(b)\| = O_p(1).
\]
\end{lemma}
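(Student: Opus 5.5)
The plan is to bound the operator norm of \(\Gamma_T(b)\) directly by the triangle inequality and submultiplicativity, so that all stochastic content is carried by the lagged regressors. Since \(\|M_{\hat\Lambda}\|\le 1\) (it is an orthogonal projector, because \(\hat\Lambda\in\mathcal F\)), we have
\[
\|\Gamma_T(b)\|
\le
\frac{1}{T_0}\sum_{t=p+1}^T \|b_t\|\,\|\tilde W_t\|\,\|M_{\hat\Lambda}\|
\le
\Big(\sup_{t}\|b_t\|\Big)\,\frac{1}{T_0}\sum_{t=p+1}^T \|\tilde W_t\|_F .
\]
The randomness of \(\hat\Lambda\) is thereby removed at no cost, and no uniformity over \(\mathcal F\) is needed.

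Next I would apply Cauchy--Schwarz (or Jensen) to the time average:
\[
\frac{1}{T_0}\sum_{t=p+1}^T \|\tilde W_t\|_F
\le
\Big(\frac{1}{T_0}\sum_{t=p+1}^T \|\tilde W_t\|_F^2\Big)^{1/2}.
\]
The bound \(T_0^{-1}\sum_t\|\tilde W_t\|_F^2=O_p(N)\) has already been used repeatedly, in Step~2 of the proof of Lemma~\ref{lem:projection_bounds}(i), Step~1(a) of Proposition~\ref{prop:loading_pca}, and Lemma~\ref{lem:loading_error_bounds}(iv). It follows from \(E\|\tilde W_t\|_F^2=O(N)\) uniformly in \(t\) via Markov's inequality. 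Combining this with \(\sup_t\|b_t\|=O(1)\) gives \(\|\Gamma_T(b)\|=O_p(\sqrt N)\), that is, \(N^{-1/2}\|\Gamma_T(b)\|=O_p(1)\).

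The main obstacle is justifying the moment bound \(E\|\tilde W_t\|_F^2=O(N)\) uniformly in \(t\), which earlier results only assert. Since \(\tilde W_t\) stacks \(\tilde X_{t-k}=\mathcal K'(I_C\otimes(Y_{t-k,1},\dots,Y_{t-k,C}))\) and the \(pC^2\) columns are fixed in number, it suffices to show \(\sup_t E\|\tilde Y_{t}\|^2=O(N)\). I would obtain this by writing the VAR recursion \(\tilde Y_t=\sum_k (V_k^0\otimes I_J)\tilde Y_{t-k}+\tilde\Lambda_0f_t^0+\tilde\varepsilon_t\); the whitening commutes because \(\mathcal K=I_C\otimes K_J\). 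I would then use the contraction \(\sum_k\|V_k^0\|\le\kappa<1\) from Assumption~A(i) to bound \((E\|\tilde Y_t\|^2)^{1/2}\) by a geometric series in the inputs. The inputs satisfy \(\|\tilde\Lambda_0 f_t^0\|^2\le\|\tilde\Lambda_0\|^2K^2=O(N)\) by Assumption~B, and \(E\|\tilde\varepsilon_t\|^2=\mathrm{tr}\,\Sigma_{\varepsilon,t}\le N\|\Sigma_{\varepsilon,t}\|=O(N)\) by Assumption~C(iii). Minkowski's inequality in \(L^2\) then yields \(\sup_t(E\|\tilde Y_t\|^2)^{1/2}\le (1-\kappa)^{-1}O(\sqrt N)\), plus a term for initial conditions that I would assume bounded at order \(\sqrt N\).
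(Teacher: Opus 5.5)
Your proposal is correct and is essentially the paper's argument: you drop \(M_{\hat\Lambda}\) via \(\|M_{\hat\Lambda}\|\le 1\) and use Cauchy--Schwarz to reduce everything to \((NT_0)^{-1}\sum_{t}\|\tilde W_t\|_F^2=O_p(1)\). The only cosmetic difference is that you use \(\sup_t\|b_t\|\) plus Jensen, whereas the paper applies Cauchy--Schwarz across \(t\) with \(T_0^{-1}\sum_t\|b_t\|^2\); in addition, your VAR-contraction/Minkowski derivation of \(\sup_t E\|\tilde Y_t\|^2=O(N)\), with its explicit initial-condition caveat, fills in a step that the paper merely asserts from Assumptions~A(i) and B--D.
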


\begin{proof}
By the Cauchy--Schwarz inequality,
\[
\|\Gamma_T(b)\|
\le
\Bigg(
\frac{1}{T_0}\sum_{t=p+1}^T \|b_t\|^2
\Bigg)^{1/2}
\Bigg(
\frac{1}{T_0}\sum_{t=p+1}^T
\|M_{\hat\Lambda}\tilde W_t\|_F^2
\Bigg)^{1/2}.
\]
Since \(\sup_t\|b_t\|=O(1)\), the first factor is \(O(1)\). Also,
\[
\|M_{\hat\Lambda}\tilde W_t\|_F \le \|\tilde W_t\|_F
\]
because \(M_{\hat\Lambda}\) is an orthogonal projection matrix and hence \(\|M_{\hat\Lambda}\|\le 1\). Therefore,
\[
\frac{1}{N}\frac{1}{T_0}\sum_{t=p+1}^T
\|M_{\hat\Lambda}\tilde W_t\|_F^2
\le
\frac{1}{N}\frac{1}{T_0}\sum_{t=p+1}^T
\|\tilde W_t\|_F^2.
\]
By Assumption~A(i) and Assumptions~B--D, the right-hand side is \(O_p(1)\). Hence
\[
\frac{1}{\sqrt N}
\Bigg(
\frac{1}{T_0}\sum_{t=p+1}^T
\|M_{\hat\Lambda}\tilde W_t\|_F^2
\Bigg)^{1/2}
= O_p(1).
\]
Combining the two bounds yields
\[
\frac{1}{\sqrt N}\,\|\Gamma_T(b)\| = O_p(1),
\]
as claimed.
\end{proof}

\begin{proposition}[Linear expansion of \(\hat\beta\)]
\label{prop:beta_linear_expansion}
Under Assumptions~A--D and \(N/T_0\to \rho\geq 0\), we have
\begin{align}\label{eq:propA2_expansion}
\sqrt{NT_0}\,(\hat\beta-\beta_0)
&=
D(\hat\Lambda)^{-1}
\Bigg[
\frac{1}{\sqrt{NT_0}}\sum_{t=p+1}^T
\tilde W_t' M_{\hat\Lambda}\tilde\varepsilon_t
-
\frac{1}{\sqrt{NT_0}}\sum_{t=p+1}^T
\Big\{\frac{1}{T_0}\sum_{s=p+1}^T a_{ts}\,\tilde W_s'\Big\}
M_{\hat\Lambda}\tilde\varepsilon_t
\Bigg]
\nonumber\\
&\quad
+
\sqrt{\frac{N}{T_0}}\ \zeta_{NT_0}
+
o_p(1),
\end{align}
where \(N=CJ\), \(T_0=T-p\), and
\[
M_{\tilde\Lambda}=I_N-N^{-1}\tilde\Lambda\tilde\Lambda'.
\]

Define
\[
a_{ts}
:=
f_t^{0\prime}\Big(\frac{1}{T_0}F_0^{\prime}F_0\Big)^{-1}f_s^0,
\qquad
F_0:=(f_{p+1}^0,\ldots,f_T^0)'.
\]
For any \(\tilde\Lambda\in\mathcal F:=\{\Lambda\in\mathbb R^{N\times r}:(1/N)\Lambda'\Lambda=I_r\}\), define
\[
Z_t(\tilde\Lambda)
:=
M_{\tilde\Lambda}\tilde W_t
-
\frac{1}{T_0}\sum_{s=p+1}^T a_{ts}\,M_{\tilde\Lambda}\tilde W_s,
\qquad
D(\tilde\Lambda)
:=
\frac{1}{NT_0}\sum_{t=p+1}^T Z_t(\tilde\Lambda)'Z_t(\tilde\Lambda),
\]
as in Assumption~A(ii). In \eqref{eq:propA2_expansion}, \(D(\hat\Lambda)\) denotes \(D(\tilde\Lambda)\) evaluated at \(\tilde\Lambda=\hat\Lambda\), and \(\zeta_{NT_0}\) collects the remaining bias terms in the expansion.
\end{proposition}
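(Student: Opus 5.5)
We follow Bai's (2009) proof of Proposition A.3, with the roles of $N$ and $T_0$ swapped. We start from the two first-order conditions that characterize $(\hat\beta,\hat\Lambda)$. Since $\hat\beta$ minimizes $Q(\cdot,\hat\Lambda)$,
\[
\Big(\sum_{t}\tilde W_t'M_{\hat\Lambda}\tilde W_t\Big)(\hat\beta-\beta_0)
=\sum_t\tilde W_t'M_{\hat\Lambda}\tilde\Lambda_0f_t^0+\sum_t\tilde W_t'M_{\hat\Lambda}\tilde\varepsilon_t .
\]
Since $\hat\Lambda$ solves the PCA problem, we may use the eigen-equation \eqref{eq:eig_loading} of Proposition~\ref{prop:loading_pca}. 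The task is to eliminate the nuisance term $\sum_t\tilde W_t'M_{\hat\Lambda}\tilde\Lambda_0 f_t^0$, which would vanish if $\hat\Lambda$ spanned the column space of $\tilde\Lambda_0$.

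To this end, substitute $u_t=\tilde\Lambda_0f_t^0+\tilde\varepsilon_t-\tilde W_t(\hat\beta-\beta_0)$ into $\hat S\hat\Lambda=\hat\Lambda\hat V_{NT_0}$. This yields an exact expression for $\hat\Lambda\hat V_{NT_0}-\tilde\Lambda_0(F_0'F_0/T_0)(\tilde\Lambda_0'\hat\Lambda/N)$, i.e.\ for $(\hat\Lambda-\tilde\Lambda_0H)\hat V_{NT_0}$, as a sum of eight terms: $\tilde W(\hat\beta-\beta_0)$-terms, $f\varepsilon$ cross terms, $\varepsilon\varepsilon'$ terms, and so on. Because $M_{\hat\Lambda}\hat\Lambda=0$, we have
\[
M_{\hat\Lambda}\tilde\Lambda_0=-M_{\hat\Lambda}(\hat\Lambda-\tilde\Lambda_0H)H^{-1},
\]
and $H$ is invertible with probability approaching one by Proposition~\ref{prop:loading_pca}(ii). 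Plugging in the expansion, the nuisance term $\frac1{NT_0}\sum_t\tilde W_t'M_{\hat\Lambda}\tilde\Lambda_0f_t^0$ splits into three kinds of pieces.

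\emph{(a) Terms linear in $\hat\beta-\beta_0$.} After using the identity $H^{-1}\hat V_{NT_0}^{-1}\ldots=(F_0'F_0/T_0)^{-1}$-type algebra, these combine with the left-hand side. Together they turn $\sum\tilde W_t'M_{\hat\Lambda}\tilde W_t$ into $\sum Z_t(\hat\Lambda)'Z_t(\hat\Lambda)$, which produces $D(\hat\Lambda)$. The coefficient $a_{ts}$ arises exactly from $f_t^{0\prime}(F_0'F_0/T_0)^{-1}f_s^0$. Any extra pieces here are $o_p(1)\cdot(\hat\beta-\beta_0)$, by Lemma~\ref{lem:loading_error_bounds} and Proposition~\ref{prop:consistency_ours}.

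\emph{(b) Terms from $\tilde\varepsilon_t f_t^{0\prime}$.} These produce the second score term $-\frac1{\sqrt{NT_0}}\sum_t\{T_0^{-1}\sum_sa_{ts}\tilde W_s'\}M_{\hat\Lambda}\tilde\varepsilon_t$.

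\emph{(c) Terms quadratic in the errors.} These are $\frac1{NT_0}\sum_t\tilde W_t'M_{\hat\Lambda}\frac1{NT_0}\sum_s\tilde\varepsilon_s\tilde\varepsilon_s'\hat\Lambda(\cdots)$ together with the mixed $\varepsilon$--loading-error terms. After multiplication by $\sqrt{NT_0}$, they become $\sqrt{N/T_0}\,\zeta_{NT_0}$ plus $o_p(1)$. Here $\zeta_{NT_0}$ collects the pieces involving $\bar\Sigma_\varepsilon$, which are $O_p(1)$ by Lemma~\ref{lem:projected_regressor_average} and Assumption~C(iii)--(iv). The centered fluctuations are controlled by Lemmas~\ref{lem:quadratic_form_bounds}, \ref{lem:cross_term_negligibility} and \ref{lem:projection_quadratic_lln}.

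Finally, solve for $\hat\beta-\beta_0$, invert $D(\hat\Lambda)$ (uniformly bounded below by Assumption~A(ii)), and multiply by $\sqrt{NT_0}$. A preliminary rate is needed to absorb the $o_p(\|\hat\beta-\beta_0\|)$ terms: we first show $\|\hat\beta-\beta_0\|=O_p((NT_0)^{-1/2}+T_0^{-1})$ by iterating the identity once.

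The main obstacle is step (c). We must show that every remainder is $o_p((NT_0)^{-1/2})$ except an $O_p(T_0^{-1})$ bias. Terms like $\frac{1}{N^2T_0^2}\sum_{t,s}\tilde W_t'M_{\hat\Lambda}\tilde\varepsilon_s\tilde\varepsilon_s'\tilde\Lambda_0$ involve regressors $\tilde W_t$ correlated with $\tilde\varepsilon_s$ for $s<t$. Sequential exogeneity (Assumption~D) does not kill these, and they are precisely the bias that Bai handles by exogeneity. I would first replace $\hat\Lambda$ by $\tilde\Lambda_0H$ using Lemma~\ref{lem:loading_error_bounds}. I would then split $\tilde\varepsilon_s\tilde\varepsilon_s'$ into its mean $\Sigma_{\varepsilon,s}$, which goes into $\zeta_{NT_0}$, and its centered part. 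The centered part would be bounded via second moments, using independence over time (Assumption~C(ii)) and the bounded row sums in Assumption~C(iv), with the VAR stability in Assumption~A(i) giving summable dependence of $\tilde W_t$ on past errors.
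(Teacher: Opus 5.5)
Your proposal follows the paper's proof essentially step for step. It uses the $\beta$ first-order condition, the eight-term expansion of the eigen-equation $\hat S\hat\Lambda=\hat\Lambda\hat V_{NT_0}$, and the substitution $M_{\hat\Lambda}\tilde\Lambda_0=M_{\hat\Lambda}(\tilde\Lambda_0-\hat\Lambda H^{-1})$. As in the paper, the $\tilde W(\hat\beta-\beta_0)f'$ term yields $D(\hat\Lambda)$, the $\tilde\varepsilon f'$ term yields the second score, and the mean part of the $\tilde\varepsilon\tilde\varepsilon'$ term yields $\sqrt{N/T_0}\,\zeta_{NT_0}$.

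The only real difference is your preliminary rate, which is not needed. The paper simply moves the $o_p(1)\|\hat\beta-\beta_0\|$ remainders to the left-hand side as $[D(\hat\Lambda)+o_p(1)]\sqrt{NT_0}(\hat\beta-\beta_0)$ and then inverts $D(\hat\Lambda)$.
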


\begin{proof}[Proof of Proposition~\ref{prop:beta_linear_expansion}]
Write \(T_0:=T-p\) and
\[
M_{\hat\Lambda}=I_N-N^{-1}\hat\Lambda\hat\Lambda'.
\]
The first-order condition of \(Q(\beta,\tilde\Lambda)\) with respect to \(\beta\), evaluated at \((\hat\beta,\hat\Lambda)\), is
\[
0
=
\sum_{t=p+1}^T
\tilde W_t' M_{\hat\Lambda}(\tilde Y_t-\tilde W_t\hat\beta).
\]
Using
\[
\tilde Y_t=\tilde W_t\beta_0+\tilde\Lambda_0 f_t^0+\tilde\varepsilon_t,
\]
we obtain
\begin{equation}\label{eq:A2_FOC}
\Big(\frac{1}{NT_0}\sum_{t=p+1}^T \tilde W_t' M_{\hat\Lambda}\tilde W_t\Big)(\hat\beta-\beta_0)
=
\frac{1}{NT_0}\sum_{t=p+1}^T \tilde W_t' M_{\hat\Lambda}\tilde\Lambda_0 f_t^0
+
\frac{1}{NT_0}\sum_{t=p+1}^T \tilde W_t' M_{\hat\Lambda}\tilde\varepsilon_t.
\end{equation}

\medskip
\noindent\textbf{Step 1 (Rewriting the factor term).}
Fix \(H\) as in Proposition~\ref{prop:loading_pca}(ii), namely,
\[
H
=
\Big(\frac{1}{T_0}F_0'F_0\Big)
\Big(\frac{1}{N}\tilde\Lambda_0'\hat\Lambda\Big)
\hat V_{NT_0}^{-1},
\qquad
F_0:=(f_{p+1}^0,\ldots,f_T^0)'.
\]
Define
\[
R:=\tilde\Lambda_0-\hat\Lambda H^{-1}.
\]
Since \(M_{\hat\Lambda}\hat\Lambda=0\), we have
\[
M_{\hat\Lambda}\tilde\Lambda_0 f_t^0
=
M_{\hat\Lambda}(\tilde\Lambda_0-\hat\Lambda H^{-1})f_t^0
=
M_{\hat\Lambda}R f_t^0.
\]
Therefore,
\begin{equation}\label{eq:A2_factor_R}
\frac{1}{NT_0}\sum_{t=p+1}^T \tilde W_t' M_{\hat\Lambda}\tilde\Lambda_0 f_t^0
=
\frac{1}{NT_0}\sum_{t=p+1}^T \tilde W_t' M_{\hat\Lambda}R f_t^0.
\end{equation}
Moreover, Proposition~\ref{prop:loading_pca}(ii) gives
\begin{equation}\label{eq:A2_Lambda_rate}
\frac{1}{N}\|R\|_F^2
=
\frac{1}{N}\|\tilde\Lambda_0-\hat\Lambda H^{-1}\|_F^2
=
O_p(\|\hat\beta-\beta_0\|^2)
+
O_p\!\Big(\frac{1}{\min\{N,T_0\}}\Big).
\end{equation}
In particular,
\[
\|R\|_F
=
O_p(\sqrt N\,\|\hat\beta-\beta_0\|)
+
O_p\!\Big(\sqrt{\frac{N}{\min\{N,T_0\}}}\Big)
=
o_p(\sqrt N),
\]
where the last relation uses Proposition~\ref{prop:consistency_ours}(i).

\medskip
\noindent\textbf{Step 2 (Expansion of \(R\)).}
Define the estimated factor matrix
\[
\hat F:=(\hat f_{p+1},\ldots,\hat f_T)',
\qquad
\hat f_t=
\frac{1}{N}\hat\Lambda'(\tilde Y_t-\tilde W_t\hat\beta),
\qquad t=p+1,\ldots,T.
\]
Start from the loading eigen-equation
\[
\hat S\hat\Lambda=\hat\Lambda\hat V_{NT_0},
\qquad
\hat S:=
\frac{1}{NT_0}\sum_{t=p+1}^T u_tu_t',
\qquad
u_t:=\tilde Y_t-\tilde W_t\hat\beta.
\]
Using
\[
u_t=\tilde\Lambda_0 f_t^0+\tilde\varepsilon_t-\tilde W_t(\hat\beta-\beta_0),
\]
expand \(u_tu_t'\) and collect terms to obtain
\begin{equation}\label{eq:A2_raw_decomp}
\hat\Lambda\hat V_{NT_0}
-
\tilde\Lambda_0
\Big(\frac{1}{T_0}F_0'F_0\Big)
\Big(\frac{1}{N}\tilde\Lambda_0'\hat\Lambda\Big)
=
\sum_{m=1}^8 \mathcal J_m,
\end{equation}
where
\begin{align*}
\mathcal J_1 &:=
\frac{1}{NT_0}\sum_{t=p+1}^T
\tilde W_t(\hat\beta-\beta_0)(\hat\beta-\beta_0)'\tilde W_t'\hat\Lambda,
\\
\mathcal J_2 &:=
-\frac{1}{NT_0}\sum_{t=p+1}^T
\tilde W_t(\hat\beta-\beta_0)f_t^{0\prime}\tilde\Lambda_0'\hat\Lambda,
\\
\mathcal J_3 &:=
-\frac{1}{NT_0}\sum_{t=p+1}^T
\tilde W_t(\hat\beta-\beta_0)\tilde\varepsilon_t'\hat\Lambda,
\\
\mathcal J_4 &:=
-\frac{1}{NT_0}\sum_{t=p+1}^T
\tilde\Lambda_0 f_t^0(\hat\beta-\beta_0)'\tilde W_t'\hat\Lambda,
\\
\mathcal J_5 &:=
-\frac{1}{NT_0}\sum_{t=p+1}^T
\tilde\varepsilon_t(\hat\beta-\beta_0)'\tilde W_t'\hat\Lambda,
\\
\mathcal J_6 &:=
\frac{1}{NT_0}\sum_{t=p+1}^T
\tilde\Lambda_0 f_t^0\tilde\varepsilon_t'\hat\Lambda,
\\
\mathcal J_7 &:=
\frac{1}{NT_0}\sum_{t=p+1}^T
\tilde\varepsilon_t f_t^{0\prime}\tilde\Lambda_0'\hat\Lambda,
\\
\mathcal J_8 &:=
\frac{1}{NT_0}\sum_{t=p+1}^T
\tilde\varepsilon_t\tilde\varepsilon_t'\hat\Lambda.
\end{align*}

Right-multiplying \eqref{eq:A2_raw_decomp} by
\[
\Big(\frac{1}{N}\tilde\Lambda_0'\hat\Lambda\Big)^{-1}
\Big(\frac{1}{T_0}F_0'F_0\Big)^{-1},
\]
and using
\[
H=
\Big(\frac{1}{T_0}F_0'F_0\Big)
\Big(\frac{1}{N}\tilde\Lambda_0'\hat\Lambda\Big)
\hat V_{NT_0}^{-1},
\qquad
H^{-1}
=
\hat V_{NT_0}
\Big(\frac{1}{N}\tilde\Lambda_0'\hat\Lambda\Big)^{-1}
\Big(\frac{1}{T_0}F_0'F_0\Big)^{-1},
\]
we obtain
\begin{equation}\label{eq:A2_R_decomp}
R=\tilde\Lambda_0-\hat\Lambda H^{-1}
=
\sum_{m=1}^8 \mathcal I_m,
\end{equation}
where
\[
\mathcal I_m
:=
-\mathcal J_m
\Big(\frac{1}{N}\tilde\Lambda_0'\hat\Lambda\Big)^{-1}
\Big(\frac{1}{T_0}F_0'F_0\Big)^{-1},
\qquad m=1,\ldots,8.
\]

By Proposition~\ref{prop:loading_pca}, Lemmas~\ref{lem:quadratic_form_bounds},
\ref{lem:loading_error_bounds}, \ref{lem:cross_term_negligibility},
\ref{lem:projection_quadratic_lln}, and
\ref{lem:projected_regressor_average}, together with Assumptions~A--D, the terms
\(\mathcal I_1,\ldots,\mathcal I_8\) satisfy the bounds required in Step 3: the terms
\(\mathcal I_2\), \(\mathcal I_7\), and the mean part of \(\mathcal I_8\) generate the leading contributions, while 
the remaining terms are of smaller order and are absorbed into
\[
o_p(1)\,\|\hat\beta-\beta_0\|
\quad\text{or}\quad
o_p\!\Big(\frac{1}{\sqrt{NT_0}}\Big),
\]
by the bounds stated above.

\medskip
\noindent\textbf{Step 3 (Expansion of the projected factor term).}
Here each \(\mathcal I_m\) is an \(N\times r\) matrix collecting the contribution of the \(m\)th term in the decomposition of \(R\) in \eqref{eq:A2_R_decomp}. Accordingly, \(J_m\) denotes the contribution of \(\mathcal I_m\) to the projected factor term
\[
\frac{1}{NT_0}\sum_{t=p+1}^T \tilde W_t' M_{\hat\Lambda} R f_t^0 .
\]
Substituting \eqref{eq:A2_R_decomp} into \eqref{eq:A2_factor_R} gives
\[
\frac{1}{NT_0}\sum_{t=p+1}^T \tilde W_t' M_{\hat\Lambda}\tilde\Lambda_0 f_t^0
=
J_1+\cdots+J_8,
\]
where
\[
J_m
:=
\frac{1}{NT_0}\sum_{t=p+1}^T
\tilde W_t' M_{\hat\Lambda}\mathcal I_m f_t^0,
\qquad m=1,\ldots,8.
\]

The leading contributions are \(J_2\), \(J_7\), and the mean part of \(J_8\). The remaining terms
\(J_1,J_3,J_4,J_5,J_6\), together with the centered part of \(J_8\), are absorbed into
\[
o_p(1)\,\|\hat\beta-\beta_0\|
\quad\text{or}\quad
o_p\!\Big(\frac{1}{\sqrt{NT_0}}\Big).
\]

For \(J_2\), substituting the definition of \(\mathcal I_2\) yields
\begin{align}
J_2
&=
\frac{1}{NT_0}\sum_{t=p+1}^T
\Bigg\{
\frac{1}{T_0}\sum_{s=p+1}^T a_{ts}\,\tilde W_s'
\Bigg\}
M_{\hat\Lambda}\tilde W_t\,(\hat\beta-\beta_0),
\label{eq:A2_J2_main}
\end{align}
where
\[
a_{ts}
:=
f_t^{0\prime}\Big(\frac{1}{T_0}F_0'F_0\Big)^{-1}f_s^0.
\]

For \(J_7\), substituting the definition of \(\mathcal I_7\) yields
\begin{align}
J_7
&=
-\frac{1}{NT_0}\sum_{t=p+1}^T
\Bigg\{
\frac{1}{T_0}\sum_{s=p+1}^T a_{ts}\,\tilde W_s'
\Bigg\}
M_{\hat\Lambda}\tilde\varepsilon_t .
\label{eq:A2_J7_main}
\end{align}

Next define the average idiosyncratic covariance
\[
\Omega
:=
\frac{1}{T_0}\sum_{t=p+1}^T \Omega_t,
\qquad
\Omega_t:=E(\tilde\varepsilon_t\tilde\varepsilon_t').
\]
Decompose
\[
\mathcal J_8
=
\frac{1}{NT_0}\sum_{t=p+1}^T \tilde\varepsilon_t\tilde\varepsilon_t'\hat\Lambda
=
\frac{1}{NT_0}\sum_{t=p+1}^T \Omega_t\hat\Lambda
+
\frac{1}{NT_0}\sum_{t=p+1}^T
(\tilde\varepsilon_t\tilde\varepsilon_t'-\Omega_t)\hat\Lambda,
\]
and write accordingly
\[
J_8=J_{8a}+J_{8b},
\]
where \(J_{8a}\) is the contribution of the average covariance term and \(J_{8b}\) is the contribution of the centered remainder.

Define
\begin{equation}\label{eq:A2_bias_term}
J_{8a}
=
\frac{1}{\sqrt{NT_0}}\sqrt{\frac{N}{T_0}}\,\tilde\zeta_{NT_0},
\end{equation}
where
\[
\tilde\zeta_{NT_0}
:=
-\,
\frac{1}{NT_0}\sum_{t=p+1}^T
\tilde W_t' M_{\hat\Lambda}\Omega \hat\Lambda
\Big(\frac{1}{N}\tilde\Lambda_0'\hat\Lambda\Big)^{-1}
\Big(\frac{1}{T_0}F_0'F_0\Big)^{-1}
f_t^0.
\]

Therefore,
\begin{align}
\frac{1}{NT_0}\sum_{t=p+1}^T \tilde W_t' M_{\hat\Lambda}\tilde\Lambda_0 f_t^0
&=
\frac{1}{NT_0}\sum_{t=p+1}^T
\Bigg\{
\frac{1}{T_0}\sum_{s=p+1}^T a_{ts}\,\tilde W_s'
\Bigg\}
M_{\hat\Lambda}\tilde W_t\,(\hat\beta-\beta_0)
\nonumber\\
&\quad
-\frac{1}{NT_0}\sum_{t=p+1}^T
\Bigg\{
\frac{1}{T_0}\sum_{s=p+1}^T a_{ts}\,\tilde W_s'
\Bigg\}
M_{\hat\Lambda}\tilde\varepsilon_t
+\frac{1}{\sqrt{NT_0}}\sqrt{\frac{N}{T_0}}\,\tilde\zeta_{NT_0}
\nonumber\\
&\quad
+o_p\!\Big(\frac{1}{\sqrt{NT_0}}\Big)
+o_p(1)\,\|\hat\beta-\beta_0\|.
\label{eq:A2_factor_final}
\end{align}

\medskip
\noindent\textbf{Step 4 (Conclusion).}
Substituting \eqref{eq:A2_factor_final} into \eqref{eq:A2_FOC} and multiplying both sides by \(\sqrt{NT_0}\) give
\begin{align*}
&\Bigg[
\frac{1}{NT_0}\sum_{t=p+1}^T \tilde W_t' M_{\hat\Lambda}\tilde W_t
-
\frac{1}{NT_0}\sum_{t=p+1}^T
\Bigg\{
\frac{1}{T_0}\sum_{s=p+1}^T a_{ts}\,\tilde W_s'
\Bigg\}
M_{\hat\Lambda}\tilde W_t
\Bigg]\sqrt{NT_0}\,(\hat\beta-\beta_0)
\\
&\qquad=
\frac{1}{\sqrt{NT_0}}\sum_{t=p+1}^T \tilde W_t' M_{\hat\Lambda}\tilde\varepsilon_t
-
\frac{1}{\sqrt{NT_0}}\sum_{t=p+1}^T
\Bigg\{
\frac{1}{T_0}\sum_{s=p+1}^T a_{ts}\,\tilde W_s'
\Bigg\}
M_{\hat\Lambda}\tilde\varepsilon_t
\\
&\qquad\quad
+\sqrt{\frac{N}{T_0}}\,\tilde\zeta_{NT_0}
+
o_p(1).
\end{align*}
By the definition of \(Z_t(\hat\Lambda)\) and \(D(\hat\Lambda)\), a direct expansion shows that the bracketed term on the left-hand side is equal to
\[
D(\hat\Lambda)+o_p(1).
\]
Hence
\begin{align*}
[D(\hat\Lambda)+o_p(1)]\sqrt{NT_0}\,(\hat\beta-\beta_0)
&=
\frac{1}{\sqrt{NT_0}}\sum_{t=p+1}^T \tilde W_t' M_{\hat\Lambda}\tilde\varepsilon_t
-
\frac{1}{\sqrt{NT_0}}\sum_{t=p+1}^T
\Bigg\{
\frac{1}{T_0}\sum_{s=p+1}^T a_{ts}\,\tilde W_s'
\Bigg\}
M_{\hat\Lambda}\tilde\varepsilon_t
\\
&\quad
+\sqrt{\frac{N}{T_0}}\,\tilde\zeta_{NT_0}
+
o_p(1).
\end{align*}
Left-multiplying by \(D(\hat\Lambda)^{-1}\) and using
\[
D(\hat\Lambda)^{-1}[D(\hat\Lambda)+o_p(1)]=I+o_p(1),
\]
we obtain \eqref{eq:propA2_expansion} with
\[
\zeta_{NT_0}:=D(\hat\Lambda)^{-1}\tilde\zeta_{NT_0}.
\]
\end{proof}

\begin{lemma}
\label{lem:bias_bound}
Under Assumptions~A--D, we have
\[
\zeta_{NT_0}=O_p(1),
\]
where \(\zeta_{NT_0}\) is the bias term appearing in Proposition~\ref{prop:beta_linear_expansion}.
\end{lemma}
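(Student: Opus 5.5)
Since $\zeta_{NT_0}=D(\hat\Lambda)^{-1}\tilde\zeta_{NT_0}$ by the end of the proof of Proposition~\ref{prop:beta_linear_expansion}, and Assumption~A(ii) gives $\inf_{\tilde\Lambda\in\mathcal F}\lambda_{\min}\{D(\tilde\Lambda)\}\ge c>0$ with $\hat\Lambda\in\mathcal F$, we have $\|D(\hat\Lambda)^{-1}\|\le c^{-1}$. The plan therefore reduces to showing $\tilde\zeta_{NT_0}=O_p(1)$. To do this, split $\tilde\zeta_{NT_0}$ into the product of factors
\[
\tilde\zeta_{NT_0}
=
-\Big(\frac{1}{NT_0}\sum_{t=p+1}^T \tilde W_t' M_{\hat\Lambda}\,\Omega\, \hat\Lambda\, G\, f_t^0\Big),
\qquad
G:=\Big(\frac{1}{N}\tilde\Lambda_0'\hat\Lambda\Big)^{-1}\Big(\frac{1}{T_0}F_0'F_0\Big)^{-1},
\]
and bound each factor separately by Cauchy--Schwarz.

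First, $G=O_p(1)$. Assumption~B(i) gives $T_0^{-1}F_0'F_0\to\Sigma_f\succ0$. For the other factor, Proposition~\ref{prop:consistency_ours}(ii) gives invertibility of $N^{-1}\tilde\Lambda_0'\hat\Lambda$ w.p.a.1, and we need this inverse to be bounded. We obtain that by noting that $(N^{-1}\tilde\Lambda_0'\hat\Lambda)(N^{-1}\hat\Lambda'\tilde\Lambda_0)=N^{-1}\tilde\Lambda_0'P_{\hat\Lambda}\tilde\Lambda_0=H_0+o_p(1)\to\Sigma_\Lambda\succ0$, as shown at the end of Step~4 of that proof. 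Its smallest singular value is therefore bounded away from zero w.p.a.1.

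Next, bound the core sum. Write it as $\frac{1}{NT_0}\sum_t \tilde W_t' M_{\hat\Lambda}\Omega\hat\Lambda G f_t^0$. Applying Cauchy--Schwarz over $t$, its norm is at most
\[
\Big(\frac{1}{NT_0}\sum_t\|\tilde W_t\|_F^2\Big)^{1/2}
\cdot\|M_{\hat\Lambda}\|\,\|\Omega\|\,\frac{\|\hat\Lambda\|}{\sqrt N}\,\|G\|\,
\Big(\frac{1}{T_0}\sum_t\|f_t^0\|^2\Big)^{1/2}.
\]
Each factor is controlled by an earlier result:
\begin{enumerate}[(a)]
\item The first factor is $O_p(1)$ by the bound $E\|\tilde W_t\|_F^2=O(N)$ uniformly in $t$, used repeatedly in Lemma~\ref{lem:projection_bounds} and Lemma~\ref{lem:projected_regressor_average}.
\item $\|M_{\hat\Lambda}\|\le1$, since $M_{\hat\Lambda}$ is an orthogonal projection.
\item $\|\Omega\|\le\sup_t\|\Sigma_{\varepsilon,t}\|\le K$ by Assumption~C(iii).
\item $\|\hat\Lambda\|=\sqrt N$ from the normalization $N^{-1}\hat\Lambda'\hat\Lambda=I_r$.
\item The last factor is $O(1)$ by Assumption~B(i).
\end{enumerate}
Hence $\tilde\zeta_{NT_0}=O_p(1)$, and with it $\zeta_{NT_0}=O_p(1)$.

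The main point to check is the scaling. A naive Frobenius bound on $\Omega\hat\Lambda$ gives order $\sqrt{N}$ per column, which would combine with the $\sqrt N$ from $\tilde W_t$ and the $1/N$ in front to give $O(1)$ anyway. I must make sure no extra factor of $N$ appears from $\|\hat\Lambda\|_F=\sqrt{Nr}$, which it does not, since $r$ is fixed. The only genuinely delicate ingredient is the uniform boundedness of $(N^{-1}\tilde\Lambda_0'\hat\Lambda)^{-1}$, handled via the product identity above. If that route proved awkward, I would instead use $H$ invertibility from Proposition~\ref{prop:loading_pca}(ii) together with $\hat V_{NT_0}\to V\succ0$ to express the inverse as $\hat V_{NT_0}^{-1}H^{-1}(T_0^{-1}F_0'F_0)$ and bound $H^{-1}$ via the convergence of $H'H$ implied by the loading-error bounds of Lemma~\ref{lem:loading_error_bounds}.
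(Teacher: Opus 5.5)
Your proof is correct and follows the paper's argument essentially step for step. You factor out $D(\hat\Lambda)^{-1}$ via Assumption~A(ii), apply Cauchy--Schwarz over $t$, and bound $\|M_{\hat\Lambda}\|$, $\|\Omega\|$, $\|\hat\Lambda\|/\sqrt N$, the two inverse matrices, $(NT_0)^{-1}\sum_t\|\tilde W_t\|_F^2$ and $T_0^{-1}\sum_t\|f_t^0\|^2$ exactly as the paper does. The only variation is how you show $(N^{-1}\tilde\Lambda_0'\hat\Lambda)^{-1}=O_p(1)$, namely through $(N^{-1}\tilde\Lambda_0'\hat\Lambda)(N^{-1}\tilde\Lambda_0'\hat\Lambda)'=N^{-1}\tilde\Lambda_0'P_{\hat\Lambda}\tilde\Lambda_0=H_0+o_p(1)$. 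This is more direct and self-contained than the paper's appeal to Proposition~\ref{prop:loading_pca}(ii), because it does not need a uniform bound on $H^{-1}$.
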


\begin{proof}[Proof of Lemma~\ref{lem:bias_bound}]
Recall that
\[
\zeta_{NT_0}
:=
-\,D(\hat\Lambda)^{-1}
\frac{1}{NT_0}\sum_{t=p+1}^T
\tilde W_t' M_{\hat\Lambda}\Omega \hat\Lambda
\Big(\frac{1}{N}\tilde\Lambda_0'\hat\Lambda\Big)^{-1}
\Big(\frac{1}{T_0}F_0'F_0\Big)^{-1}
f_t^0,
\]
where
\[
\Omega:=\frac{1}{T_0}\sum_{t=p+1}^T \Omega_t,
\qquad
\Omega_t:=E(\tilde\varepsilon_t\tilde\varepsilon_t').
\]
Write
\[
B_{NT_0}
:=
\Big(\frac{1}{N}\tilde\Lambda_0'\hat\Lambda\Big)^{-1},
\qquad
A_{NT_0}
:=
\Big(\frac{1}{T_0}F_0'F_0\Big)^{-1}.
\]
Then, by the Cauchy--Schwarz inequality,
\begin{align*}
\|\zeta_{NT_0}\|
&\le
\|D(\hat\Lambda)^{-1}\|\,
\frac{1}{N}
\Bigg(
\frac{1}{T_0}\sum_{t=p+1}^T \|\tilde W_t\|_F^2
\Bigg)^{1/2}
\Bigg(
\frac{1}{T_0}\sum_{t=p+1}^T
\|M_{\hat\Lambda}\Omega \hat\Lambda B_{NT_0}A_{NT_0}f_t^0\|^2
\Bigg)^{1/2}.
\end{align*}
Now
\[
\|D(\hat\Lambda)^{-1}\|=O_p(1)
\]
by Assumption~A(ii), and
\[
\|M_{\hat\Lambda}\|\le 1,
\qquad
\frac{1}{N}\frac{1}{T_0}\sum_{t=p+1}^T \|\tilde W_t\|_F^2 = O_p(1)
\]
by Assumption~A(i) together with Assumptions~B--D. Also,
\[
\|\Omega\|
=
\Big\|\frac{1}{T_0}\sum_{t=p+1}^T \Omega_t\Big\|
\le
\frac{1}{T_0}\sum_{t=p+1}^T \|\Omega_t\|
=O(1)
\]
by Assumption~C(iii).

Next, since \((1/N)\hat\Lambda'\hat\Lambda=I_r\), we have
\[
\|\hat\Lambda\|_F=\sqrt{Nr},
\]
so
\[
\frac{1}{\sqrt N}\|\hat\Lambda\|_F=O(1).
\]
By Proposition~\ref{prop:loading_pca}(ii) and Assumption~B(ii),
\[
\frac{1}{N}\tilde\Lambda_0'\hat\Lambda
=
\Big(\frac{1}{N}\tilde\Lambda_0'\tilde\Lambda_0\Big)H^{-1}+o_p(1),
\]
and hence
\[
\Big(\frac{1}{N}\tilde\Lambda_0'\hat\Lambda\Big)^{-1}=O_p(1),
\]
because \((1/N)\tilde\Lambda_0'\tilde\Lambda_0\) converges to a positive definite matrix and \(H^{-1}=O_p(1)\).
Moreover, by Assumption~B(i),
\[
\Big(\frac{1}{T_0}F_0'F_0\Big)^{-1}=O(1),
\qquad
\frac{1}{T_0}\sum_{t=p+1}^T \|f_t^0\|^2 = O(1).
\]

Therefore,
\begin{align*}
\frac{1}{T_0}\sum_{t=p+1}^T
\|M_{\hat\Lambda}\Omega \hat\Lambda B_{NT_0}A_{NT_0}f_t^0\|^2
&\le
\|M_{\hat\Lambda}\|^2
\|\Omega\|^2
\|\hat\Lambda\|_F^2
\|B_{NT_0}\|^2
\|A_{NT_0}\|^2
\frac{1}{T_0}\sum_{t=p+1}^T \|f_t^0\|^2 \\
&=
O_p(N).
\end{align*}
Combining the above bounds yields
\[
\|\zeta_{NT_0}\|
\le
O_p(1)\cdot
\frac{1}{N}\cdot
O_p(\sqrt N)\cdot
O_p(\sqrt N)
=
O_p(1).
\]
Thus
\[
\zeta_{NT_0}=O_p(1).
\]
\end{proof}

\begin{lemma}
\label{lem:rotation_projection_routeA}
Under Assumptions~A--D, let \(H\) be the rotation matrix defined in Proposition~\ref{prop:loading_pca}(ii). Then the following hold:
\begin{enumerate}[(i)]
\item
\[
HH'
=
\Big(\frac{1}{N}\tilde\Lambda_0'\tilde\Lambda_0\Big)^{-1}
+
O_p(\|\hat\beta-\beta_0\|)
+
O_p\left(\frac{1}{\sqrt{\min\{N,T_0\}}}\right).
\]

\item
\[
\|P_{\hat\Lambda}-\Pi_0\|^2
=
O_p(\|\hat\beta-\beta_0\|)
+
O_p\left(\frac{1}{\sqrt{T_0}}\right),
\]
where
\[
P_{\hat\Lambda}:=N^{-1}\hat\Lambda\hat\Lambda',
\qquad
\Pi_0:=\tilde\Lambda_0(\tilde\Lambda_0'\tilde\Lambda_0)^{-1}\tilde\Lambda_0'.
\]
\end{enumerate}
\end{lemma}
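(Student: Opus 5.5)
The plan is to work directly from the representation $\hat\Lambda=\tilde\Lambda_0H+E$, with $E:=\hat\Lambda-\tilde\Lambda_0H$. Proposition~\ref{prop:loading_pca}(ii) and Lemma~\ref{lem:loading_error_bounds} control $E$. Write
\[
Q:=N^{-1}\tilde\Lambda_0'\tilde\Lambda_0,
\qquad
\delta_{NT_0}:=\|\hat\beta-\beta_0\|+\min\{N,T_0\}^{-1/2}.
\]

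\textbf{Step 1: $H=O_p(1)$ and $H^{-1}=O_p(1)$.} We have $\|N^{-1}\tilde\Lambda_0'\hat\Lambda\|\le N^{-1}\|\tilde\Lambda_0\|_F\|\hat\Lambda\|_F=O(1)$ by Assumption~B(ii) and $\|\hat\Lambda\|_F=\sqrt{Nr}$. Also $T_0^{-1}F_0'F_0=O(1)$ by Assumption~B(i), and $\hat V_{NT_0}^{-1}=O_p(1)$ by Proposition~\ref{prop:loading_pca}(i). Together these give $H=O_p(1)$. Invertibility of $H$ with probability approaching one is already part of Proposition~\ref{prop:loading_pca}(ii). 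Boundedness of $H^{-1}$ will follow from Step~2 below, since $H'QH\to I_r$ and $Q\to\Sigma_\Lambda\succ0$.

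\textbf{Step 2: part (i).} Expand the normalization:
\[
I_r=N^{-1}\hat\Lambda'\hat\Lambda
=H'QH+H'\big(N^{-1}\tilde\Lambda_0'E\big)+\big(N^{-1}E'\tilde\Lambda_0\big)H+N^{-1}E'E .
\]
By Lemma~\ref{lem:loading_error_bounds}(i)--(ii) and Step~1, the last three terms are $O_p(\delta_{NT_0})$, so $H'QH=I_r+O_p(\delta_{NT_0})$. Hence $Q=(H')^{-1}H^{-1}+O_p(\delta_{NT_0})$, using $H^{-1}=O_p(1)$. Then apply the identity $A^{-1}-B^{-1}=A^{-1}(B-A)B^{-1}$ with $A=Q$ and $B=(HH')^{-1}$. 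Both inverses are $O_p(1)$, because $Q\to\Sigma_\Lambda\succ0$ and $H=O_p(1)$. This gives $HH'=Q^{-1}+O_p(\delta_{NT_0})$, which is (i).

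\textbf{Step 3: part (ii).} Since $\Pi_0=N^{-1}\tilde\Lambda_0Q^{-1}\tilde\Lambda_0'$,
\[
P_{\hat\Lambda}-\Pi_0
=N^{-1}\tilde\Lambda_0(HH'-Q^{-1})\tilde\Lambda_0'
+N^{-1}\big(\tilde\Lambda_0HE'+EH'\tilde\Lambda_0'+EE'\big).
\]
Bound each piece in operator norm using $\|\tilde\Lambda_0\|^2/N=O(1)$ and Step~2. The first piece is $O_p(\delta_{NT_0})$. The cross pieces are bounded by $\|\tilde\Lambda_0\|N^{-1/2}\,\|H\|\,\|E\|_FN^{-1/2}$. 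For these I will use the sharper rate from Step~3 of the proof of Proposition~\ref{prop:loading_pca},
\[
N^{-1/2}\|E\|_F=O_p(\|\hat\beta-\beta_0\|)+O_p(T_0^{-1/2})+O_p(N^{-1}).
\]
The last piece is $O_p$ of the square of this rate.

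Squaring the resulting bound, $\|\hat\beta-\beta_0\|^2\le\|\hat\beta-\beta_0\|$ with probability approaching one, by Proposition~\ref{prop:consistency_ours}(i). Also $T_0^{-1}\le T_0^{-1/2}$. This yields the stated bound up to the $N^{-2}$ contribution.

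\textbf{Main obstacle.} The difficulty is that $N^{-2}$ term in (ii). It originates from the mean part $N^{-1}\bar\Sigma_\varepsilon$ of the idiosyncratic covariance in the residual matrix $R$ of Proposition~\ref{prop:loading_pca}. It is absorbed into $O_p(T_0^{-1/2})$ only when $N^{2}\gtrsim T_0^{1/2}$, which holds automatically when $\rho>0$.

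For the case $\rho=0$, I would first try a Davis--Kahan argument in place of the Frobenius route. This bounds $\|P_{\hat\Lambda}-\Pi_0\|$ by $2\|R\|$ divided by the eigengap. The eigengap is bounded away from zero by Proposition~\ref{prop:loading_pca}(i). The hope is to show that only the component of $N^{-1}\bar\Sigma_\varepsilon$ that is off-diagonal relative to the split $\Pi_0\oplus(I-\Pi_0)$ matters, and that it is $O(N^{-1})$ as well.

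If this refinement fails, I would state (ii) with the additional $O_p(N^{-2})$ term. That form is still sufficient wherever the lemma is used in Corollary~\ref{cor:linear_representation_routeA}.
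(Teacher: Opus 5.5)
Your argument follows the paper's route: expand $N^{-1}\hat\Lambda'\hat\Lambda=I_r$ around $\tilde\Lambda_0H$ for (i), then expand $P_{\hat\Lambda}$ around $N^{-1}\tilde\Lambda_0HH'\tilde\Lambda_0'$ for (ii). Part (i) is fine. Your Step~2 (boundedness of $H$ and $H^{-1}$, then $A^{-1}-B^{-1}=A^{-1}(B-A)B^{-1}$) fills in the step the paper compresses into ``which implies''.

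For (ii) there is one slip and then a genuine limitation.

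\emph{The slip.} The first piece $N^{-1}\tilde\Lambda_0(HH'-Q^{-1})\tilde\Lambda_0'$ inherits the rate $\delta_{NT_0}$ from Step~2, and that rate contains $\min\{N,T_0\}^{-1/2}$. Squaring therefore leaves $N^{-1}$, not $N^{-2}$, when $N<T_0$. To fix this, push the sharper rate $N^{-1/2}\|E\|_F=O_p(\|\hat\beta-\beta_0\|+T_0^{-1/2}+N^{-1})$ through Step~2 as well. This is legitimate because Lemma~\ref{lem:loading_error_bounds}(i)--(ii) are just Cauchy--Schwarz applied to this quantity. Then $HH'-Q^{-1}$ has the sharper rate, and you get $\|P_{\hat\Lambda}-\Pi_0\|^2=O_p(\|\hat\beta-\beta_0\|)+O_p(T_0^{-1/2})+O_p(N^{-2})$.

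\emph{The limitation.} The remaining $N^{-2}$ cannot be removed, and the paper does not remove it either. Its proof reaches $\|P_{\hat\Lambda}-\Pi_0\|=O_p(\|\hat\beta-\beta_0\|)+O_p(\min\{N,T_0\}^{-1/2})$ and then simply asserts the squared bound with $T_0^{-1/2}$. That step requires $N\gtrsim T_0^{1/2}$; your version already weakens the requirement to $N\gtrsim T_0^{1/4}$.

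Davis--Kahan cannot do better. The off-diagonal block $(I-\Pi_0)\bar\Sigma_\varepsilon\Pi_0/N$ is generically of exact order $N^{-1}$. For example, take $r=1$, $\tilde\Lambda_0=\mathbf 1_N$ and $\Sigma_{\varepsilon,t}=I_N+vv'$, where $v$ is a unit vector with entries $O(N^{-1/2})$ at angle $45^\circ$ to $\mathbf 1_N$, so that C(iii)--(iv) hold. Then the leading eigenvector of $S_f+N^{-1}\bar\Sigma_\varepsilon$ makes an angle of order $N^{-1}$ with $\mathbf 1_N$. Consequently $\|P_{\hat\Lambda}-\Pi_0\|^2\asymp N^{-2}$ once $T_0^{-1/2}=o(N^{-2})$ and $\|\hat\beta-\beta_0\|=o_p(N^{-2})$. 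So (ii) as stated is not available under A--D alone in that regime.

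Your fallback is the correct resolution: carry the $O_p(N^{-2})$ term, or assume $N\gtrsim T_0^{1/4}$. It costs nothing downstream, because Lemma~\ref{lem:score_replacement_routeA} uses only $\|P_{\hat\Lambda}-\Pi_0\|=o_p(1)$, which Proposition~\ref{prop:consistency_ours}(ii) already gives.
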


\begin{proof}[Proof of Lemma~\ref{lem:rotation_projection_routeA}]
Let
\[
\Delta_\Lambda:=\hat\Lambda-\tilde\Lambda_0 H.
\]
By Proposition~\ref{prop:loading_pca}(ii),
\[
\frac{1}{N}\|\Delta_\Lambda\|_F^2
=
O_p(\|\hat\beta-\beta_0\|^2)
+
O_p\!\Big(\frac{1}{\min\{N,T_0\}}\Big),
\qquad
\frac{1}{N}\hat\Lambda'\hat\Lambda=I_r.
\]
Hence
\[
I_r
=
\frac{1}{N}\hat\Lambda'\hat\Lambda
=
\frac{1}{N}(H'\tilde\Lambda_0'+\Delta_\Lambda')
(\tilde\Lambda_0 H+\Delta_\Lambda).
\]
Expanding gives
\[
I_r
=
H'\Big(\frac{1}{N}\tilde\Lambda_0'\tilde\Lambda_0\Big)H
+
\frac{1}{N}H'\tilde\Lambda_0'\Delta_\Lambda
+
\frac{1}{N}\Delta_\Lambda'\tilde\Lambda_0 H
+
\frac{1}{N}\Delta_\Lambda'\Delta_\Lambda.
\]
By Cauchy--Schwarz and the rate for \(\Delta_\Lambda\), the last three terms are
\[
O_p(\|\hat\beta-\beta_0\|)+O_p\left(\frac{1}{\sqrt{\min\{N,T_0\}}}\right).
\]
Therefore
\[
H'\Big(\frac{1}{N}\tilde\Lambda_0'\tilde\Lambda_0\Big)H
=
I_r
+
O_p(\|\hat\beta-\beta_0\|)
+
O_p\left(\frac{1}{\sqrt{\min\{N,T_0\}}}\right),
\]
which implies
\[
HH'
=
\Big(\frac{1}{N}\tilde\Lambda_0'\tilde\Lambda_0\Big)^{-1}
+
O_p(\|\hat\beta-\beta_0\|)
+
O_p\left(\frac{1}{\sqrt{\min\{N,T_0\}}}\right).
\]
This proves (i).

Next, write
\[
P_{\hat\Lambda}
=
\frac{1}{N}\hat\Lambda\hat\Lambda',
\qquad
\Pi_0
=
\tilde\Lambda_0(\tilde\Lambda_0'\tilde\Lambda_0)^{-1}\tilde\Lambda_0'.
\]
Using \(\hat\Lambda=\tilde\Lambda_0 H+\Delta_\Lambda\), we have
\begin{align*}
P_{\hat\Lambda}
&=
\frac{1}{N}(\tilde\Lambda_0 H+\Delta_\Lambda)(\tilde\Lambda_0 H+\Delta_\Lambda)' \\
&=
\tilde\Lambda_0 HH'\tilde\Lambda_0'/N
+
\frac{1}{N}\tilde\Lambda_0 H\Delta_\Lambda'
+
\frac{1}{N}\Delta_\Lambda H'\tilde\Lambda_0'
+
\frac{1}{N}\Delta_\Lambda\Delta_\Lambda'.
\end{align*}
By part (i), the first term equals $\Pi_0+O_p(\|\hat\beta-\beta_0\|)+O_p\left(\frac{1}{\sqrt{\min\{N,T_0\}}}\right)$, and the remaining three terms are of the same or smaller order by Cauchy--Schwarz. Hence
\[
\|P_{\hat\Lambda}-\Pi_0\|^2
=
O_p(\|\hat\beta-\beta_0\|)
+
O_p\left(\frac{1}{\sqrt{T_0}}\right).
\]
This proves (ii).
\end{proof}

\begin{lemma}
\label{lem:score_replacement_routeA}
Under Assumptions~A--D, define
\[
a_{ts}:=
f_t^{0\prime}\Big(\frac{1}{T_0}F_0^{\prime}F_0\Big)^{-1}f_s^0,
\qquad
F_0:=(f_{p+1}^0,\ldots,f_T^0)' .
\]
Assume \(N/T_0\to 0\). Then
\begin{align}\label{eq:A8_routeA}
&\frac{1}{\sqrt{NT_0}}\sum_{t=p+1}^T
\Bigg[
\tilde W_t' M_{\hat\Lambda}
-
\Big\{\frac{1}{T_0}\sum_{s=p+1}^T a_{ts}\tilde W_s'\Big\}M_{\hat\Lambda}
\Bigg]\tilde\varepsilon_t
\nonumber\\
&\qquad=
\frac{1}{\sqrt{NT_0}}\sum_{t=p+1}^T
\Bigg[
\tilde W_t' M_{\Pi_0}
-
\Big\{\frac{1}{T_0}\sum_{s=p+1}^T a_{ts}\tilde W_s'\Big\}M_{\Pi_0}
\Bigg]\tilde\varepsilon_t
+o_p(1).
\end{align}
\end{lemma}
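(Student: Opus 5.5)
Since \(M_{\hat\Lambda}-M_{\Pi_0}=-(P_{\hat\Lambda}-\Pi_0)\), the difference between the two sides of \eqref{eq:A8_routeA} equals
\[
\Delta_{NT_0}:=-\frac{1}{\sqrt{NT_0}}\sum_{t=p+1}^T b_t'(P_{\hat\Lambda}-\Pi_0)\tilde\varepsilon_t,
\qquad
b_t:=\tilde W_t-\frac{1}{T_0}\sum_{s=p+1}^T a_{ts}\tilde W_s .
\]
The plan is to show \(\Delta_{NT_0}=o_p(1)\) by splitting it into a piece governed by the loading error and a piece that is a weighted score evaluated at fixed matrices.

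First, write the projection difference through the rotation. Let \(\Delta_\Lambda:=\hat\Lambda-\tilde\Lambda_0H\), with \(H\) from Proposition~\ref{prop:loading_pca}(ii). As in the proof of Lemma~\ref{lem:rotation_projection_routeA},
\[
P_{\hat\Lambda}-\Pi_0=\tfrac1N\tilde\Lambda_0(HH'-(\tfrac1N\tilde\Lambda_0'\tilde\Lambda_0)^{-1})\tilde\Lambda_0'+\tfrac1N\tilde\Lambda_0H\Delta_\Lambda'+\tfrac1N\Delta_\Lambda H'\tilde\Lambda_0'+\tfrac1N\Delta_\Lambda\Delta_\Lambda' .
\]
Substitute each of the four terms into \(\Delta_{NT_0}\). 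For the terms in which \(\tilde\Lambda_0'\) stands next to \(\tilde\varepsilon_t\) (the first and third), I factor the sum as
\[
\Big(\frac1N\sum_t b_t'\tilde\Lambda_0\,G\,\tilde\Lambda_0'\tilde\varepsilon_t\Big)\Big/\sqrt{NT_0},
\]
with \(G\) the relevant \(r\times r\) random matrix. The key fact is that \(\frac{1}{\sqrt{NT_0}}\sum_t b_t'\tilde\Lambda_0\otimes\tilde\Lambda_0'\tilde\varepsilon_t\) is a weighted score with deterministic loadings. By Assumption~C(vi) (with \(c_{ts}\) built from \(\delta_{ts}\) and \(a_{ts}\), uniformly bounded by Assumption~B(i)), combined with C(v) for the linear forms \(\tilde\Lambda_0'\tilde\varepsilon_t/\sqrt N\), this is \(O_p(\sqrt N)\) after the \(1/N\) normalisation. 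It is multiplied by \(\|G\|=O_p(\|\hat\beta-\beta_0\|)+O_p(\min\{N,T_0\}^{-1/2})\), using Lemma~\ref{lem:rotation_projection_routeA}(i) and Lemma~\ref{lem:loading_error_bounds}(i). Since Theorem~\ref{thm:rate_beta}'s preliminary consistency gives \(\hat\beta-\beta_0=o_p(1)\), these products vanish.

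For the terms where \(\Delta_\Lambda'\) hits \(\tilde\varepsilon_t\), I substitute the explicit expansion \(\Delta_\Lambda=-\sum_m\mathcal I_mH\) from \eqref{eq:A2_R_decomp}. Each \(\mathcal I_m\) is a time average of products of \(\tilde\varepsilon_s\), \(\tilde W_s\) and \(f^0_s\). Then \(\sum_t b_t'(\cdot)\mathcal I_m'\tilde\varepsilon_t\) becomes a double sum over \((s,t)\). For \(s\neq t\), independence across time (C(ii)) together with sequential exogeneity (D) makes these sums mean zero, and I bound them in second moment as in Lemma~\ref{lem:quadratic_form_bounds}. The \(s=t\) diagonal parts pick up \(\Sigma_{\varepsilon,t}\) and contribute \(O_p(\sqrt{N/T_0})\) or \(O_p(1/\sqrt N)\), which vanish because \(N/T_0\to0\). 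The quadratic term \(\frac1N\Delta_\Lambda\Delta_\Lambda'\) is handled by Cauchy--Schwarz. I pair \(\frac1{\sqrt{NT_0}}\|\sum_t \tilde\varepsilon_t b_t'\|_F\)-type bounds with \(\|\Delta_\Lambda\|_F^2/N=O_p(\|\hat\beta-\beta_0\|^2+1/\min\{N,T_0\})\). The \(\sqrt{NT_0}\) scaling against \(1/\min\{N,T_0\}\) leaves \(\sqrt{N/T_0}\)-type factors, which again vanish under \(N/T_0\to0\).

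The main obstacle is the cross term involving \(\mathcal I_8\), the pure \(\tilde\varepsilon\tilde\varepsilon'\) piece. There \(\hat\Lambda\), \(\tilde\varepsilon_t\) and the weighted regressor \(b_t\), which contains future \(\tilde W_s\), are all dependent, so naive Cauchy--Schwarz loses a factor of \(\sqrt{T_0}\). I would first try to split \(\mathcal I_8\) into its mean \(\Omega\hat\Lambda\) part and its centered part. The mean part is treated like \(\zeta_{NT_0}\) in Lemma~\ref{lem:bias_bound} and gives an \(O_p(\sqrt{N/T_0})=o_p(1)\) contribution. For the centered part, I would replace \(\hat\Lambda\) by \(\tilde\Lambda_0H\) (at a cost controlled by Lemma~\ref{lem:loading_error_bounds}) and then apply Lemma~\ref{lem:quadratic_form_bounds} together with C(vi) to the resulting deterministic-loading fourth-order sum.
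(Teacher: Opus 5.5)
\textbf{Comparison with the paper's route.} The paper's proof is a two-line argument. With \(\Delta_M=\Pi_0-P_{\hat\Lambda}\), it bounds \(\|R_{1,NT_0}\|\le\|\Delta_M\|\,\|(NT_0)^{-1/2}\sum_t\tilde W_t'\tilde\varepsilon_t\|\), and likewise for the \(a_{ts}\)-weighted term. It then gets both scores \(O_p(1)\) from Assumption~C(vi) with \(c_{ts}=\mathbf{1}\{t=s\}\) and \(c_{ts}=a_{ts}/T_0\), and gets \(\|\Delta_M\|=o_p(1)\) from Lemma~\ref{lem:rotation_projection_routeA}(ii) plus consistency.

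You are right to avoid that shortcut. \(\Delta_M\) sits inside the \(t\)-sum and cannot be pulled out in operator norm. Take \(\Delta=e_1e_2'+e_2e_1'\), itself a difference of two rank-one projections, with \((w_1,\varepsilon_1)=(e_1,e_2)\) and \((w_2,\varepsilon_2)=(e_2,e_1)\). Then \(\sum_t w_t'\Delta\varepsilon_t=2\) while \(\sum_t w_t'\varepsilon_t=0\). The valid bound is \(|\mathrm{tr}(\Delta_M S_k)|\le 2r\|\Delta_M\|\,\|S_k\|\), with \(S_k=\sum_t\tilde\varepsilon_t b_{t,k}'\) and \(b_{t,k}\) the \(k\)th column of \(b_t\). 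That needs an operator-norm bound on an \(N\times N\) random matrix, which C(vi) does not give.

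Your Bai-style expansion through \(\Delta_\Lambda=-\sum_m\mathcal I_mH\) is the right kind of argument. It correctly identifies the diagonal (\(s=t\)) pieces of order \(\sqrt{N/T_0}\) as where \(N/T_0\to0\) is really used. As written, however, it has genuine gaps.

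\textbf{Where it breaks.} First, the probabilistic bounds you invoke do not follow from Assumptions~A--D.
\begin{itemize}
\item C(vi) controls only the identity contraction \(\sum_t b_t'\tilde\varepsilon_t\). You need second moments of \(\sum_t(b_t'\tilde\Lambda_0)\otimes(\tilde\Lambda_0'\tilde\varepsilon_t)\), of \(\sum_t\tilde\varepsilon_t b_{t,k}'\), and of the off-diagonal double sums.
\item C(v) is for nonstochastic vectors, but \(b_t'\tilde\Lambda_0\) is random.
\item ``C(ii) plus D make the \(s\neq t\) sums mean zero'' fails twice. \(b_t\) contains \(T_0^{-1}\sum_s a_{ts}\tilde W_s\) with \(s>t\), which depends on \(\tilde\varepsilon_t\) and \(\tilde\varepsilon_s\). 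Also, \(\mathcal I_7=-T_0^{-1}\sum_s\tilde\varepsilon_s f_s^{0\prime}(T_0^{-1}F_0'F_0)^{-1}\) is the only term free of estimated quantities (\(\mathcal I_2\) carries just the finite-dimensional factor \(\hat\beta-\beta_0\)). Every other \(\mathcal I_m\) contains \(\hat\Lambda\), which depends on all the errors.
\end{itemize}
You would need an explicit strengthening of C(vi) covering sandwiched forms \(b_t'A\tilde\varepsilon_t\) and these double sums.

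Second, the rate bookkeeping does not close.
\begin{itemize}
\item You state that the normalised score multiplying \(G\) is \(O_p(\sqrt N)\). Times \(\|G\|=O_p(\|\hat\beta-\beta_0\|)+O_p(N^{-1/2})\), that does not vanish. Its natural order is \(O_p(1)\), since \(b_t'\tilde\Lambda_0\) is of order \(N\) and \(\tilde\Lambda_0'\tilde\varepsilon_t\) of order \(\sqrt N\).
\item The third term \(N^{-1}\Delta_\Lambda H'\tilde\Lambda_0'\) puts \(\Delta_\Lambda\) next to \(b_t\), so it does not factor as \(b_t'\tilde\Lambda_0G\tilde\Lambda_0'\tilde\varepsilon_t\). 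Lemma~\ref{lem:loading_error_bounds}(i) controls only \(\Pi_0\Delta_\Lambda\), which leaves \(b_t'M_{\Pi_0}\Delta_\Lambda H'\tilde\Lambda_0'\tilde\varepsilon_t\) untreated.
\item Cauchy--Schwarz on \(N^{-1}\Delta_\Lambda\Delta_\Lambda'\), with \(\|\sum_t\tilde\varepsilon_tb_{t,k}'\|_F=O_p(N\sqrt{T_0})\), gives \(O_p(\sqrt N\|\hat\beta-\beta_0\|^2+N^{-1/2})\), not a \(\sqrt{N/T_0}\) factor. This requires \(\|\hat\beta-\beta_0\|=o_p(N^{-1/4})\). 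Consistency comes from Proposition~\ref{prop:consistency_ours}, not Theorem~\ref{thm:rate_beta}, and does not supply that. Importing the rate of Theorem~\ref{thm:rate_beta} would be circular, because the paper proves that theorem through this lemma via Corollary~\ref{cor:linear_representation_routeA}.
\end{itemize}
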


\begin{proof}[Proof of Lemma~\ref{lem:score_replacement_routeA}]
Let
\[
\Delta_M:=M_{\hat\Lambda}-M_{\Pi_0}
=
\Pi_0-P_{\hat\Lambda}.
\]
Then the left-hand side minus the right-hand side of \eqref{eq:A8_routeA} is
\begin{align*}
R_{NT_0}
&=
\frac{1}{\sqrt{NT_0}}\sum_{t=p+1}^T
\Bigg[
\tilde W_t' \Delta_M
-
\Big\{\frac{1}{T_0}\sum_{s=p+1}^T a_{ts}\tilde W_s'\Big\}\Delta_M
\Bigg]\tilde\varepsilon_t \\
&=:
R_{1,NT_0}-R_{2,NT_0},
\end{align*}
where
\[
R_{1,NT_0}
:=
\frac{1}{\sqrt{NT_0}}\sum_{t=p+1}^T
\tilde W_t' \Delta_M \tilde\varepsilon_t,
\qquad
R_{2,NT_0}
:=
\frac{1}{\sqrt{NT_0}}\sum_{t=p+1}^T
\Big\{\frac{1}{T_0}\sum_{s=p+1}^T a_{ts}\tilde W_s'\Big\}\Delta_M \tilde\varepsilon_t .
\]

By Cauchy--Schwarz,
\[
\|R_{1,NT_0}\|
\le
\|\Delta_M\|
\Big\|
\frac{1}{\sqrt{NT_0}}\sum_{t=p+1}^T \tilde W_t'\tilde\varepsilon_t
\Big\|,
\]
and
\[
\|R_{2,NT_0}\|
\le
\|\Delta_M\|
\Big\|
\frac{1}{\sqrt{NT_0}}\sum_{t=p+1}^T
\Big\{\frac{1}{T_0}\sum_{s=p+1}^T a_{ts}\tilde W_s'\Big\}\tilde\varepsilon_t
\Big\|.
\]

By Lemma~\ref{lem:rotation_projection_routeA}(ii),
\[
\|\Delta_M\|^2
=
\|P_{\hat\Lambda}-\Pi_0\|^2
=
O_p(\|\hat\beta-\beta_0\|)+O_p(T_0^{-1}).
\]
Since \(N/T_0\to 0\), Proposition~\ref{prop:consistency_ours}(i) implies
\[
\|\Delta_M\|=o_p(1).
\]

It therefore remains to show that the two bracketed terms are \(O_p(1)\).

\medskip
\noindent\textbf{Step 1: the unweighted score term.}
Define the deterministic array \(\{c^{(1)}_{ts}\}_{t,s=p+1}^T\) by
\[
c^{(1)}_{ts}:=\mathbf 1\{t=s\}.
\]
Then \(c^{(1)}_{ts}\) is uniformly bounded, and
\[
\sum_{s=p+1}^T c^{(1)}_{ts}\tilde W_s'\tilde\varepsilon_t
=
\tilde W_t'\tilde\varepsilon_t.
\]
Hence Assumption~C(vi) yields
\[
\frac{1}{NT_0}
\sum_{t,u=p+1}^T
\left|
E\!\left[
(\tilde W_t'\tilde\varepsilon_t)(\tilde W_u'\tilde\varepsilon_u)
\right]
\right|
\le K .
\]
Therefore
\[
E\Bigg\|
\frac{1}{\sqrt{NT_0}}\sum_{t=p+1}^T \tilde W_t'\tilde\varepsilon_t
\Bigg\|^2
\le K,
\]
and thus
\[
\Big\|
\frac{1}{\sqrt{NT_0}}\sum_{t=p+1}^T \tilde W_t'\tilde\varepsilon_t
\Big\|=O_p(1).
\]

\medskip
\noindent\textbf{Step 2: the weighted score term.}
Define the deterministic array \(\{c^{(2)}_{ts}\}_{t,s=p+1}^T\) by
\[
c^{(2)}_{ts}:=\frac{1}{T_0}a_{ts}.
\]
By Assumption~B(i), the coefficients \(a_{ts}\) are uniformly bounded, so
\(\{c^{(2)}_{ts}\}\) is also uniformly bounded. Moreover,
\[
\sum_{s=p+1}^T c^{(2)}_{ts}\tilde W_s'\tilde\varepsilon_t
=
\Big\{\frac{1}{T_0}\sum_{s=p+1}^T a_{ts}\tilde W_s'\Big\}\tilde\varepsilon_t.
\]
Applying Assumption~C(vi) again, we obtain
\[
\frac{1}{NT_0}
\sum_{t,u=p+1}^T
\left|
E\!\left[
\Big(\frac{1}{T_0}\sum_{s=p+1}^T a_{ts}\tilde W_s'\tilde\varepsilon_t\Big)
\Big(\frac{1}{T_0}\sum_{v=p+1}^T a_{uv}\tilde W_v'\tilde\varepsilon_u\Big)
\right]
\right|
\le K .
\]
Therefore
\[
E\Bigg\|
\frac{1}{\sqrt{NT_0}}\sum_{t=p+1}^T
\Big\{\frac{1}{T_0}\sum_{s=p+1}^T a_{ts}\tilde W_s'\Big\}\tilde\varepsilon_t
\Bigg\|^2
\le K,
\]
and hence
\[
\Big\|
\frac{1}{\sqrt{NT_0}}\sum_{t=p+1}^T
\Big\{\frac{1}{T_0}\sum_{s=p+1}^T a_{ts}\tilde W_s'\Big\}\tilde\varepsilon_t
\Big\|
=O_p(1).
\]

Combining Steps 1 and 2, we conclude that
\[
\|R_{1,NT_0}\|+\|R_{2,NT_0}\|=o_p(1),
\]
and therefore
\[
R_{NT_0}=o_p(1).
\]
This proves \eqref{eq:A8_routeA}.
\end{proof}

\begin{corollary}
\label{cor:linear_representation_routeA}
Under Assumptions~A--D and \(N/T_0\to 0\), we have
\begin{align*}
\sqrt{NT_0}\,(\hat\beta-\beta_0)
&=
D(\hat\Lambda)^{-1}\,
\frac{1}{\sqrt{NT_0}}\sum_{t=p+1}^T
\Bigg[
\tilde W_t' M_{\Pi_0}
-
\Big\{\frac{1}{T_0}\sum_{s=p+1}^T a_{ts}\tilde W_s'\Big\}M_{\Pi_0}
\Bigg]\tilde\varepsilon_t
+o_p(1),
\end{align*}
where
\[
a_{ts}=f_t^{0\prime}\Big(\frac{1}{T_0}F_0^{\prime}F_0\Big)^{-1}f_s^0,
\qquad
F_0=(f_{p+1}^0,\ldots,f_T^0)',
\]
and \(D(\cdot)\) is defined in Assumption~A(ii).
\end{corollary}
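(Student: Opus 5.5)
The corollary follows by chaining three earlier results: the linear expansion of Proposition~\ref{prop:beta_linear_expansion}, the bias bound of Lemma~\ref{lem:bias_bound}, and the score-replacement result of Lemma~\ref{lem:score_replacement_routeA}. Since \(N/T_0\to0\) is a special case of \(N/T_0\to\rho\ge0\) with \(\rho=0\), Proposition~\ref{prop:beta_linear_expansion} applies. It gives
\[
\sqrt{NT_0}(\hat\beta-\beta_0)=D(\hat\Lambda)^{-1}S_{NT_0}(\hat\Lambda)+\sqrt{N/T_0}\,\zeta_{NT_0}+o_p(1),
\]
where \(S_{NT_0}(\hat\Lambda)\) denotes the bracketed score built with \(M_{\hat\Lambda}\).

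\emph{Step 1: the bias term.} By Lemma~\ref{lem:bias_bound}, \(\zeta_{NT_0}=O_p(1)\). Multiplying by \(\sqrt{N/T_0}\to0\) makes the bias contribution \(o_p(1)\), so it is absorbed into the remainder. This is the only place where the strengthening from \(\rho\ge0\) to \(\rho=0\) is needed for the bias.

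\emph{Step 2: replacing the projection.} Lemma~\ref{lem:score_replacement_routeA} applies under the same hypotheses, since it also assumes \(N/T_0\to0\). It states that \(S_{NT_0}(\hat\Lambda)\) equals the score with \(M_{\Pi_0}\) in place of \(M_{\hat\Lambda}\), up to \(o_p(1)\). Writing \(S_{NT_0}(\hat\Lambda)=S_{NT_0}(\Pi_0)+o_p(1)\), we get
\[
D(\hat\Lambda)^{-1}S_{NT_0}(\hat\Lambda)=D(\hat\Lambda)^{-1}S_{NT_0}(\Pi_0)+D(\hat\Lambda)^{-1}o_p(1).
\]
The second term is \(o_p(1)\) provided \(D(\hat\Lambda)^{-1}=O_p(1)\).

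\emph{Step 3: the inverse of \(D(\hat\Lambda)\).} The bound \(D(\hat\Lambda)^{-1}=O_p(1)\) follows from Assumption~A(ii). That assumption gives \(\lambda_{\min}(D(\tilde\Lambda))\ge c>0\) uniformly over \(\mathcal F\), and \(\hat\Lambda\in\mathcal F\) by the normalization \(N^{-1}\hat\Lambda'\hat\Lambda=I_r\). Collecting the three steps yields the stated representation.

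The main point needing care is Step~2. Lemma~\ref{lem:score_replacement_routeA} relies on \(\|P_{\hat\Lambda}-\Pi_0\|=o_p(1)\), which is supplied by Lemma~\ref{lem:rotation_projection_routeA}(ii) together with the consistency in Proposition~\ref{prop:consistency_ours}(i). It also relies on the \(O_p(1)\) bounds on the unweighted and weighted score terms that come from Assumption~C(vi). Since the lemma is stated earlier, it can be invoked directly. The one thing to check is that the \(o_p(1)\) remainders, once premultiplied by \(D(\hat\Lambda)^{-1}\), stay \(o_p(1)\), and Step~3 handles exactly that.
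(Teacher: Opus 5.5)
Your proposal is correct and follows essentially the same route as the paper: start from Proposition~\ref{prop:beta_linear_expansion}, discard \(\sqrt{N/T_0}\,\zeta_{NT_0}\) as negligible under \(N/T_0\to0\), and then replace \(M_{\hat\Lambda}\) by \(M_{\Pi_0}\) in the score via Lemma~\ref{lem:score_replacement_routeA}. Your explicit appeals to Lemma~\ref{lem:bias_bound} and to \(D(\hat\Lambda)^{-1}=O_p(1)\) from Assumption~A(ii) only spell out steps that the paper's short proof leaves implicit.
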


\begin{proof}[Proof of Corollary~\ref{cor:linear_representation_routeA}]
By Proposition~\ref{prop:beta_linear_expansion},
\begin{align*}
\sqrt{NT_0}\,(\hat\beta-\beta_0)
&=
D(\hat\Lambda)^{-1}
\frac{1}{\sqrt{NT_0}}\sum_{t=p+1}^T
\Bigg[
\tilde W_t' M_{\hat\Lambda}
-
\Big\{\frac{1}{T_0}\sum_{s=p+1}^T a_{ts}\tilde W_s'\Big\}M_{\hat\Lambda}
\Bigg]\tilde\varepsilon_t \\
&\quad
+\sqrt{\frac{N}{T_0}}\zeta_{NT_0}
+o_p(1).
\end{align*}
Under \(N/T_0\to 0\), the bias term is negligible, and by Lemma~\ref{lem:score_replacement_routeA},
the score term with \(M_{\hat\Lambda}\) can be replaced by the one with \(M_{\Pi_0}\) up to \(o_p(1)\).
Therefore
\begin{align*}
\sqrt{NT_0}\,(\hat\beta-\beta_0)
&=
D(\hat\Lambda)^{-1}\,
\frac{1}{\sqrt{NT_0}}\sum_{t=p+1}^T
\Bigg[
\tilde W_t' M_{\Pi_0}
-
\Big\{\frac{1}{T_0}\sum_{s=p+1}^T a_{ts}\tilde W_s'\Big\}M_{\Pi_0}
\Bigg]\tilde\varepsilon_t
+o_p(1),
\end{align*}
which proves the claim.
\end{proof}

\section*{Acknowledgments}
\if0\blind
The authors are grateful to Takuya Ishihara of Tohoku University and Daisuke Kurisu of the University of Tokyo for valuable comments. Ishihara provided helpful suggestions on the analysis of density-valued data, and Kurisu offered valuable comments on defining the inner product in an isometric way.
\fi

\section*{Funding}
\if0\blind
This work was supported by a JSPS KAKENHI Grant-in-Aid for Scientific Research (A), Grant Number 26H01952.
\fi

\section*{Disclosure Statement}
The authors report there are no competing interests to declare.

\section*{Data and Code Availability Statement}
Code and replication materials will be made available upon acceptance.
Access to the raw Ct-value data is subject to the terms of the data provider.

\bibliographystyle{apalike}
\bibliography{Bibliography}

\end{document}